\newcommand{\para}[1]{\paragraph{#1}}
\newtheorem{theorem}{Theorem}[section]
\newtheorem{lemma}[theorem]{Lemma}
\newtheorem{corollary}[theorem]{Corollary}
\newtheorem{definition}{Definition}[section]
\definecolor{darkgreen}{rgb}{0,0.5,0}
\definecolor{darkblue}{rgb}{0,0,0.8}
\renewcommand{\vec}[1]{\ensuremath{\boldsymbol{#1}}}
\newcommand{\calG}{\ensuremath{\mathcal{G}}}
\newcommand{\calP}{\ensuremath{\mathcal{P}}}
\newcommand{\wmin}{\ensuremath{w_{\mathrm{min}}}}
\newcommand{\wmax}{\ensuremath{w_{\mathrm{max}}}}
\newcommand{\ignore}[1]{}
\algnewcommand\algorithmicswitch{\textbf{switch}}
\algnewcommand\algorithmiccase{\textbf{case}}
\newcommand{\CONGEST}{\ensuremath{\mathsf{CONGEST}}\xspace}
\newcommand{\SUPPORTED}{\ensuremath{\mathsf{SUPPORTED}\ \mathsf{CONGEST}}\xspace}
\newcommand{\LOCAL}{\ensuremath{\mathsf{LOCAL}}\xspace}
\newcommand{\eps}{\varepsilon}
\renewcommand{\epsilon}{\varepsilon}
\newcommand{\poly}{\operatorname{\text{{\rm poly}}}}
\newcommand{\set}[1]{\left\{#1\right\}}
\DeclareMathOperator{\E}{\mathbb{E}}
\newcommand{\hide}[1]{}
\renewcommand{\phi}{\varphi}
\renewcommand{\Pr}{\mathbb{P}}
\title{Distributed CONGEST Approximation of\\ Weighted Vertex Covers and
  Matchings} 
\author{Salwa Faour\thanks{salwa.faour@cs.uni-freiburg.de}}
\author{Marc Fuchs\thanks{marc.fuchs@cs.uni-freiburg.de}}
\author{Fabian Kuhn\thanks{kuhn@cs.uni-freiburg.de}}
\affil{University of Freiburg, Germany}
\date{}
\begin{document}

\maketitle

\begin{abstract}
  We provide \CONGEST model algorithms for approximating the
  minimum weighted vertex cover and the maximum weighted matching
  problem. For bipartite graphs, we show that a $(1+\eps)$-approximate
  weighted vertex cover can be computed deterministically in
  $\poly\big(\frac{\log n}{\eps}\big)$ rounds. This generalizes a
  corresponding result for the unweighted vertex cover problem shown
  in [Faour, Kuhn; OPODIS '20]. Moreover, we show that in general
  weighted graph families that are closed under taking subgraphs and
  in which we can compute an independent set of weight at least
  $\lambda\cdot w(V)$ (where $w(V)$ denotes the total weight of all
  nodes) in polylogarithmic time in the \CONGEST model, one can
  compute a $(2-2\lambda +\eps)$-approximate weighted vertex cover in
  $\poly\big(\frac{\log n}{\eps}\big)$ rounds in the \CONGEST model.
  Our result in particular implies that in graphs of arboricity $a$,
  one can compute a $(2-1/a+\eps)$-approximate weighted vertex cover
  problem in $\poly\big(\frac{\log n}{\eps}\big)$ rounds in the
  \CONGEST model.

  For maximum weighted matchings, we show that a
  $(1-\eps)$-approximate solution can be computed deterministically in
  time $2^{O(1/\eps)}\cdot \poly\log n$ in the \CONGEST model. We also
  provide a randomized algorithm that with arbitrarily good constant
  probability succeeds in computing a $(1-\eps)$-approximate weighted
  matching in time
  $2^{O(1/\eps)}\cdot \poly\log(\Delta W)\cdot \log^* n$, where $W$
  denotes the ratio between the largest and the smallest edge
  weight. Our algorithm generalizes results of [Lotker, Patt-Shamir,
  Pettie; SPAA '08] and [Bar-Yehuda, Hillel, Ghaffari, Schwartzman;
  PODC '17], who gave $2^{O(1/\eps)}\cdot \log n$ and
  $2^{O(1/\eps)}\cdot \frac{\log\Delta}{\log\log\Delta}$-round
  randomized approximations for the unweighted 
  matching problem.

  Finally, we show that even in the \LOCAL model and in bipartite
  graphs of degree $\leq 3$, if $\eps<\eps_0$ for some constant
  $\eps_0>0$, then computing a $(1+\eps)$-approximation for the
  unweighted minimum vertex cover problem requires
  $\Omega\big(\frac{\log n}{\eps}\big)$ rounds. This generalizes a
  result of [G\"o\"os, Suomela; DISC '12], who showed that computing a
  $(1+\eps_0)$-approximation in such graphs requires $\Omega(\log n)$
  rounds.
\end{abstract}

\section{Introduction and Related Work}
\label{sec:intro}

Maximum matching (MM) and minimum vertex cover (MVC) 
are two classic optimization problems that have been studied
intensively in the context of distributed graph algorithms~(e.g., \cite{ahmadi18,ahmadi20,assadi2019coresets,AstrandFPRSU09,AstrandS10,Ben-BasatEKS18,yehuda16,yehuda17,Bar-YehudaCMPP20,disc19_optcovering,czygrinow03,czygrinow2004,even15,opodis20_MVC,rounding,GhaffariJN20,goeoes14_DISTCOMP,GrandoniKP08,GrandoniKPS08,harris20_matching,hoepman06,itai86,KoufogiannakisY11,lowerbound,nearsighted,lotker09,lotker15,nieberg08,wattenhofer04}). The
problems are closely related to each other: the fractional relaxations
of the unweighted variants of the problem are linear programming (LP)
duals of each other. The problems are however also fundamentally
different. While a maximum (weighted) matching can be found in
polynomial time in all graphs~\cite{edmonds65a,edmonds65b}, for the minimum
(weighted) vertex cover problem, this is only true for bipartite
graphs~\cite{egervary31,koenig31}. In general graphs, even for the unweighted
MVC problem, the best polynomial-time approximation algorithms have an
approximation ratio of $2-o(1)$~\cite{Karakostas09}. The MVC problem
is known to be APX-hard~\cite{dinur05,hastad01}, and if the unique games conjecture
holds, the current $(2-o(1))$-approximation algorithms are essentially
best possible~\cite{MVC_UGChard}.

In the distributed context, most prominently, the problems have been studied in the standard
message passing models in graphs, in the \LOCAL model and the \CONGEST
model~\cite{peleg00}. In both models, the graph $G=(V,E)$ on which we
want to solve some graph problem also represents the network and it is
assumed that the nodes $V$ of $G$ can communicate with each other in
synchronous rounds by exchanging messages over the edges $E$ of
$G$. In the \LOCAL model, the size of those messages is not
restricted, whereas in the \CONGEST model, it is assumed that each
message has to consist of at most $O(\log n)$ bits, where $n=|V|$ is
the number of nodes of the network graph $G$. In the following
discussion of existing work on distributed matching and
vertex cover algorithms, we concentrate on
polylogarithmic-time distributed algorithms that also work for the
weighted variants of the problems.

\para{Distributed Complexity of Weighted Matchings.} While the
unweighted versions of both problems can be approximated within a
factor of $2$ by computing a maximal matching, a little more work is
needed for weighted matchings and vertex covers. The first
polylogarithmic-time distributed algorithm for computing a constant
approximation for the maximum weighted matching (MWM) problem was
presented in \cite{wattenhofer04}. This algorithm was then improved in
\cite{lotker09} and in \cite{lotker15}, where it is shown that a
$(1/2-\eps)$-approximation for MWM can be computed in
$O(\log(1/\eps)\log n)$ rounds in the \CONGEST model. In
\cite{yehuda17}, it was further shown that can one compute a
$1/2$-approximation for MWM in time
$O\big(\log W\cdot T_{\mathrm{MIS}}\big)$ in the \CONGEST model, where
$W$ is the ratio between the largest and smallest edge weight and
where $T_{\mathrm{MIS}}$ is the time for computing a maximal
independent set. The paper also shows that for constant $\eps$, a
$(1/2-\eps)$-approximation can be computed in only
$O\big(\frac{\log\Delta}{\log\log\Delta}\big)$ rounds. Note that as
shown in \cite{nearsighted}, this time complexity is best possible for
any constant approximation algorithm, even in the \LOCAL model. All
the above algorithms are randomized. In \cite{rounding}, Fischer gave
a deterministic \CONGEST algorithm to compute a
$(1/2-\eps)$-approximate weighted matching with a round complexity of
$O(\log^2\Delta\cdot \log1/\eps + \log^* n)$. This algorithm was
refined in \cite{ahmadi18}, where it was shown that in time
$O\big(\frac{\log^2 (\Delta/\eps)+\log^* n}{\eps}+\frac{\log(\Delta
  W)}{\eps^2} \big)$, it is even possible to deterministically compute
a $(2/3-\eps)$-approximation for the MWM problem in general graphs and
a $(1-\eps)$-approximation for the MWM problem in bipartite graphs, in
the \CONGEST model. To the best of our knowledge, this is the only
existing polylogarithmic-time \CONGEST algorithm to obtain an
approximation ratio that is better than $1/2$. It has been observed
already in \cite{KoufogiannakisY11,lotker15,nieberg08} that in the
\LOCAL model, better approximations for maximum weighted matching can
be computed efficiently. In particular, \cite{lotker15,nieberg08} show
that even in general graphs, a $(1-\eps)$-approximation can be
computed in $\poly\big(\frac{\log n}{\eps}\big)$ rounds. It has later
been shown that this can also be achieved
deterministically~\cite{stoc18_edgecoloring}. The best known \LOCAL
MWM approximation algorithms are by Harris~\cite{harris20_matching},
who shows that a $(1-\eps)$-approximation can be computed in
randomized time
$\tilde{O}\big(\frac{\log\Delta}{\eps^3}\big) +
\poly\log\big(\frac{\log\log n}{\eps}\big)$ and in deterministic time
$\tilde{O}\big(\frac{\log^2\Delta}{\eps^4}+\frac{\log^*
  n}{\eps}\big)$. Those algorithms are based on computing large
matchings in hypergraphs defined by paths of length $O(1/\eps)$ and
they unfortunately cannot directly turned into efficient \CONGEST
algorithms. To the best of our knowledge, even constant $\eps>0$, the
only efficient \CONGEST algorithms are for the unweighted maximum
matching problem. Lotker, Patt-Shamir, and Pettie~\cite{lotker15} give
an algorithm to compute a $(1-\eps)$-approximation for the
\emph{unweighted} maximum matching problem in time only
$2^{O(1/\eps)}\cdot \log n$ in the randomized \CONGEST model. In
\cite{yehuda17} (full version), this was even improved to
$2^{O(1/\eps)}\cdot\frac{\log\Delta}{\log\log\Delta}$. As one of our
main contributions, we obtain similar algorithms for the weighted
matching problem.  Obtaining a $(1-\eps)$-approximation in
$\poly\big(\frac{\log n}{\eps}\big)$ \CONGEST rounds is one of the key
open questions in understanding the distributed complexity of maximum
matching. Fischer, Mitrovi\'{c}, and Uitto
~\cite{streaming_polymatching} recently settled a related problem for
unweighted matchings in the streaming model and in the latest
versionof their paper, they even obtain a
$\poly\big(\frac{\log n}{\eps}\big)$-round \CONGEST algorithm for the
unweighted matching problem.

\para{Distributed Complexity of Weighted Vertex Covers.} The first
distributed constant-factor approximation algorithm for the minimum
weighted vertex cover (MWVC) problem is due to Khuller, Vishkin, and
Young~\cite{KhullerVY94}. They describe a simple deterministic
algorithm to obtain a $(2+\eps)$-approximation for MWVC. The algorithm
can directly be implemented in $O(\log(n)\cdot\log(1/\eps))$ rounds in
the \CONGEST model. The time for computing a $(2+\eps)$-approximation
has subsequently been improved to $O(\log(\Delta)/ \poly(\eps))$ in
\cite{nearsighted} and to $O(\log\Delta/\log\log\Delta)$ in
\cite{yehuda16,Ben-BasatEKS18,disc19_optcovering} (with a very minor
dependency on $\eps$ in \cite{disc19_optcovering}). Note that as for
maximum matching, this dependency on $\Delta$ is optimal for any
constant-factor approximations~\cite{lowerbound}. The algorithm of
\cite{disc19_optcovering} can also be used to compute a
$2$-approximate weighted vertex cover in time $O(\log n)$. Other
polylogarithmic-time algorithms to compute $2$-approximations for MWVC
appeared in \cite{GrandoniKPS08,KhullerVY94,KoufogiannakisY11}. In the
\LOCAL model, one can use generic techniques from
\cite{ghaffari2017complexity,polylogdecomp} (or the techniques from
this paper) to deterministically compute a $(1+\eps)$-approximate
minimum weighted vertex cover in time
$\poly\big(\frac{\log n}{\eps}\big)$. Further, in
\cite{goeoes14_DISTCOMP}, it was shown that even on bipartite graphs
with maximum degree $3$, there exists a constant $\eps_0>0$ such that
computing a $(1+\eps_0)$-approximate (unweighted) vertex cover
requires $\Omega(\log n)$ rounds, even in the \LOCAL model and even
when using randomization. We generalize this result and show that for
computing a $(1+\eps)$-approximation, one requires
$\Omega(\log(n)/\eps)$ rounds. While for maximum matching, there are
several \CONGEST algorithms that achieve approximation ratios that are
better than $1/2$, for the minimum vertex cover problem, efficiently
achieving an approximation ratio significantly below $2$ in general
graphs might be a hard problem. In this case, computing an exact
solution even has a lower bound of $\tilde{\Omega}(n^2)$ rounds in the
\CONGEST model and it is therefore basically as hard as any graph
problem can be in this model~\cite{censorhillel_disc17}. To what
extent we can achieve approximation ratios below $2$ in the \CONGEST
model for variants of the minimum vertex cover problem is an
interesting open question. There recently has been some progress. In
\cite{benbasat_param}, it is shown that the minimum vertex cover
problem (and also the maximum matching problem) can be solved more
efficiently if the optimal solution is small. In particular, if the
size of an optimal vertex cover is at most $k$, a minimum vertex cover
can be computed deterministically in time $O(k^2)$ and a
$(2-\eps)$-approximate solution can be computed deterministically in
time $O(k + (\eps k)^2)$ (and slightly more efficiently with
randomization). This was the first efficient \CONGEST algorithm that
achieves an approximation ratio below $2$ for the minimum vertex cover
problem for some graphs. In \cite{opodis20_MVC}, it was shown that in
bipartite graphs, a $(1+\eps)$-approximation can be computed in time
$\poly\big(\frac{\log n}{\eps}\big)$. One of the main results of this
paper is a generalization of this result to the weighted vertex cover
problem. Further, it has recently been shown that on the square graph
$G^2$, it is possible to compute a $(1+\eps)$-approximate (unweighted)
vertex cover in time $O(n/\eps)$ in the \CONGEST model (on
$G$)~\cite{Bar-YehudaCMPP20}.
 
 \subsection{Our Contributions}
 \label{sec:contribution}

 We next state our main contributions in detail.  We prove new
 \CONGEST upper bounds for approximating minimum weighted vertex cover
 and maximum weighted matching (MWM). We start by describing our
 results for the vertex cover problem. In \cite{opodis20_MVC}, it was
 shown that in bipartite graphs, the unweighted vertex cover problem
 can be $(1+\eps)$-approximated in
 $\poly\log\big(\frac{\log n}{\eps}\big)$ time in the \CONGEST
 model. The following theorem is a generalization of the result of
 \cite{opodis20_MVC} to weighted graphs.

\begin{theorem}\label{thm:approxMWVCbipartite}
  For every $\eps\in(0,1]$, there is a \emph{deterministic} \CONGEST
  algorithm to compute a $(1+\eps)$-approximation for the minimum
  weighted vertex cover problem in bipartite graphs in time
  $\poly\big(\frac{\log n}{\eps}\big)$.
\end{theorem}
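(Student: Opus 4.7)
The plan is to leverage two structural facts: K\"onig's theorem implies that the weighted minimum vertex cover LP on a bipartite graph has an integral optimum, and the $(1+\eps)$-approximation of [Faour, Kuhn; OPODIS '20] solves the \emph{unweighted} bipartite vertex cover problem in $\poly(\log n/\eps)$ \CONGEST rounds. The approach is to combine these through a three-step pipeline: compute a near-optimal fractional weighted cover; partition the vertices into a polylogarithmic number of near-uniform weight classes; and apply the unweighted Faour--Kuhn subroutine on each class to round the fractional solution to an integral cover.

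For the fractional cover, a deterministic multiplicative-weights-update scheme, as used in prior distributed covering/packing LP algorithms referenced in the paper, yields in $\poly(\log n/\eps)$ rounds a vector $x\in[0,1]^V$ with $x_u+x_v\geq 1$ for every edge $\{u,v\}$ and $\sum_v w(v)\,x_v\leq (1+\eps)\,\OPT$. Next, after rescaling so that $\wmin=1$, an isolation preprocessing removes vertices whose weight exceeds a sufficiently large $\poly(n/\eps)$ factor by committing their neighbors to the cover; since the fractional LP already forces $x_v\approx 0$ at any such heavy $v$, the extra cost is absorbed into an $\eps\,\OPT$ slack. After this truncation, all relevant weights lie in $[1,\poly(n/\eps)]$, and rounding each weight up to the nearest power of $(1+\eps)$ yields $O(\log(n)/\eps)$ distinct weight classes $V_0,V_1,\ldots$

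For the rounding step, I would process the classes in order of decreasing weight. Within a single class $V_i$ the weights agree up to a $(1+\eps)$-factor, so rounding $x|_{V_i}$ integrally to cover the edges with both endpoints in $V_i$ reduces, up to an extra $(1+\eps)$ weight slack, to the \emph{unweighted} bipartite vertex cover problem on the induced subgraph, where the Faour--Kuhn algorithm produces a $(1+\eps)$-approximate integral cover. Cross-class edges $\{u,v\}$ with $u$ in a heavier class than $v$ are handled by committing the lighter endpoint $v$ whenever the fractional solution has $x_v$ nonnegligible, and otherwise by charging $u$ against the LP value $w(u)\,x_u$. The main obstacle is the cross-class accounting: naively the per-class $(1+\eps)$ overheads compound into a factor of $(1+\eps)^{\Theta(\log n/\eps)}$, which is useless. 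The key point is that the unweighted Faour--Kuhn subroutine charges its overhead only against the intra-class fractional mass $\sum_{v\in V_i} w(v)\,x_v$, while cross-class edges are handled by deterministic threshold rounding whose cost telescopes across classes into a single additive $O(\eps)\,\OPT$ term. Summing over classes gives total cover weight at most $(1+O(\eps))\sum_v w(v)\,x_v\leq (1+O(\eps))\,\OPT$; rescaling $\eps$ by a constant yields the claimed $(1+\eps)$-approximation in $\poly(\log n/\eps)$ \CONGEST rounds.
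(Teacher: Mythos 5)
There is a genuine gap, and it sits exactly where the weighted problem is hard: the cross-class edges. Your weight-class decomposition only lets the unweighted subroutine of [Faour, Kuhn] handle edges with both endpoints in the same class, but a bipartite instance can have \emph{all} of its edges crossing classes (every node's weight differing from all its neighbors' weights by more than a $(1+\eps)$ factor), in which case the per-class covers are empty and the entire burden falls on your cross-class rule. That rule (``commit the lighter endpoint whenever $x_v$ is nonnegligible, otherwise charge $u$ against $w(u)x_u$'') is a form of threshold rounding of the primal LP, and threshold rounding against the fractional mass inherently loses a constant factor, not $1+O(\eps)$: if you commit a node with $x_v\geq\tau$ you pay $w(v)$ against LP mass $w(v)x_v$, a factor $1/\tau\geq 2$ once $\tau\leq 1/2$, while taking $\tau$ close to $1$ fails to cover edges whose endpoints both sit near $1/2$. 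Such edges genuinely arise in $(1+O(\eps))$-approximate fractional solutions: take $n/2$ disjoint edges whose endpoint weights are $1$ and $(1+\eps)^{3/2}$ (adjacent classes); the vector $x\equiv 1/2$ has value within a $1+O(\eps)$ factor of $\OPT$ yet gives no information about which side to pick, and any per-edge commitment charged to the LP mass of the chosen endpoint overshoots by a factor close to $2$. Your claim that these costs ``telescope across classes into a single additive $O(\eps)\,\OPT$ term'' is asserted without any mechanism, and as stated it cannot be right in the all-cross-class case (there the cross-class commitments must carry weight comparable to $\OPT$, so the issue is precisely controlling their overhead over the LP mass, which is the unresolved part). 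A separate but smaller accounting worry is that your final cover is the union of per-class covers and cross-class commitments, and you never set up a charging scheme that prevents the same LP mass from being charged by both.

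This is also the structural reason the paper does not round the primal at all: it computes the \emph{dual}, a $(1-\delta)$-approximate fractional $w$-matching (\Cref{thm:fractional_wmatching}), removes all short augmenting paths by a greedy weighted set-cover procedure with explicit cost control (\Cref{lemma:singlestage}), and then extracts an integral cover via the constructive K\H{o}nig--Egerv\'ary argument of \Cref{lemma:basicMWVCalg} (BFS layers from positive-slack nodes, choosing the lightest level), all after the diameter reduction of \Cref{thm:diameterreduction}. Near-optimality of a primal fractional point does not by itself localize an integral cover in bipartite graphs; the alternating-path/dual structure is what replaces the rounding step you are missing, so to complete your argument you would need an analogous combinatorial mechanism for the cross-class edges rather than thresholding.
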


The next theorem shows that in graph families that are closed under
taking (induced) subgraphs and in which we can efficiently compute
large (or heavy) independent sets, we can efficiently approximate
minimum (weighted) vertex cover with an approximation ratio that is
better than $2$.

\begin{theorem}\label{thm:approxMWVCgeneral}
  Let $\calG$ be a family of weighted graphs  that is closed under taking induced subgraphs and such that for some
  $\lambda\in (0,1]$ and any $n$-node graph $G=(V,E,w)$ of $\calG$,
  there is a $T_\lambda(n)$-round \CONGEST algorithm to compute an
  independent set $S$ of weight $w(S)\geq \lambda w(V)$. Then, there is
  $T_\lambda(n)+\poly\big(\frac{\log n}{\eps}\big)$-round \CONGEST
  algorithm to compute a $(2-2\lambda+\eps)$-approximate weighted vertex cover for graphs of $\calG$. If the
  independent set algorithm is deterministic, then also the vertex
  cover algorithm is deterministic.
\end{theorem}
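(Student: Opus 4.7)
The plan is to reduce to the bipartite MWVC algorithm of \Cref{thm:approxMWVCbipartite} via the Nemhauser--Trotter bipartite double cover, extract from its output a near-optimal half-integral LP solution $x\in\{0,1/2,1\}^V$, and round it by deleting a heavy independent set from the induced subgraph on the $1/2$-layer.

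First, I would build the bipartite double cover $G'=(V_L\cup V_R,E')$: each $v\in V$ yields two copies $v_L,v_R$ of weight $w(v)$, and each edge $uv\in E$ yields the crossing pair $\{u_L,v_R\},\{v_L,u_R\}$. Unique $\ID$s for the copies are obtained by appending a side bit, and each round of \CONGEST on $G'$ is simulated by two rounds on $G$. Applying \Cref{thm:approxMWVCbipartite} to $G'$ returns a vertex cover $T\subseteq V_L\cup V_R$ with $w(T)\le(1+\eps)\,\OPT_{\text{MVC}}(G')$ in $\poly(\log n/\eps)$ rounds. A short calculation shows $\OPT_{\text{MVC}}(G')=2\OPT_{\text{LP}}(G)$: any LP solution $y$ of $G$ lifts to $x_{v_L}=x_{v_R}=y_v$ of double the value, and any $G'$-solution $x$ projects to $y_v:=(x_{v_L}+x_{v_R})/2$, which is LP-feasible because $y_u+y_v=\tfrac12\bigl((x_{u_L}+x_{v_R})+(x_{v_L}+x_{u_R})\bigr)\ge 1$.

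Now set $x_v:=|T\cap\{v_L,v_R\}|/2\in\{0,1/2,1\}$, and let $V_1,V_{1/2},V_0$ be the preimages. Since $T$ covers both crossing edges of every $uv\in E$, $x$ is feasible for the MWVC LP on $G$, and $\sum_v w(v)x_v=w(T)/2\le(1+\eps)\,\OPT_{\text{LP}}(G)\le(1+\eps)\OPT$. Feasibility also forbids edges inside $V_0$ and across $V_0\times V_{1/2}$, so after placing all of $V_1$ into the cover the only uncovered edges sit inside $V_{1/2}$. Because $G[V_{1/2}]$ is an induced subgraph of $G$, it belongs to $\calG$, so the assumed independent-set algorithm produces $S\subseteq V_{1/2}$ with $w(S)\ge\lambda\,w(V_{1/2})$ in $T_\lambda(n)$ rounds; output $C:=V_1\cup(V_{1/2}\setminus S)$. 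This $C$ is clearly a cover, and (in the only non-vacuous regime $\lambda\le 1/2$)
\[
w(C)\;\le\;w(V_1)+(1-\lambda)w(V_{1/2})\;\le\;(2-2\lambda)\bigl(w(V_1)+\tfrac12 w(V_{1/2})\bigr)\;=\;(2-2\lambda)\sum_v w(v)x_v\;\le\;(2-2\lambda)(1+\eps)\OPT,
\]
and rescaling $\eps$ at the start by a constant absorbs the $(1+\eps)$ into the advertised $2-2\lambda+\eps$.

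The main obstacle, and the reason for routing through the double cover rather than invoking a generic polylog-time covering-LP solver, is preserving \emph{half-integrality}: the $2-2\lambda$ factor hinges on the structural property that $V_0$ sees only $V_1$, so removing $S$ from the cover cannot uncover any edge incident to $V_0$. Nemhauser--Trotter delivers this for free because $T$ is integral on $G'$. The remaining pieces --- the \CONGEST simulation of $G'$, feasibility of $x$, the final $\eps$-rescaling, and inheritance of determinism from \Cref{thm:approxMWVCbipartite} and the IS subroutine --- are routine.
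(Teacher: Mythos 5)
Your proposal is correct and follows essentially the same route as the paper: construct the bipartite double cover, apply \Cref{thm:approxMWVCbipartite}, read off a half-integral $(1+\eps)$-approximate fractional cover (the paper's \Cref{lemma:doublecoverVC}), and round it by removing a heavy independent set of the induced subgraph on the half-integral nodes, with the identical chain of inequalities. Your explicit remark restricting to $\lambda\le 1/2$ (needed for the $(2-2\lambda)$ step) is a small point the paper glosses over, but the argument is the same.
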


Note that the algorithm of \Cref{thm:approxMWVCgeneral} uses the bipartite
vertex cover algorithm of \Cref{thm:approxMWVCbipartite} as a subroutine.
\Cref{thm:approxMWVCgeneral} in particular implies that for graphs for
which we can compute a coloring with a small number of colors, we can
efficiently compute a non-trivial vertex cover approximation.

\begin{corollary}\label{cor:approxMWVCcoloring}
  Let $\calG$ be a family of weighted graphs such that for some
  non-negative integer $C$, for any $n$-node graph $G=(V,E,w)$ of
  $\calG$, there is a $T_C(n)$-round \CONGEST algorithm to compute a
  vertex coloring of $G$ with $C$ colors. Then, there is a
  $T_C(n) +\poly\big(\frac{\log n}{\eps}\big)$-round \CONGEST
  algorithm to compute a $(2-2/C+\eps)$-approximation of the minimum
  weighted vertex cover problem for graphs of $\calG$. If the coloring
  algorithm is deterministic, then also the vertex cover algorithm is
  deterministic.
\end{corollary}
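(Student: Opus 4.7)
The plan is to apply \Cref{thm:approxMWVCgeneral} with $\lambda = 1/C$. First, I would invoke the assumed coloring algorithm to compute a proper $C$-coloring of $G$ in $T_C(n)$ rounds; the color classes $V_1,\dots,V_C$ are then independent sets whose weights sum to $w(V)$, so by the pigeonhole principle the heaviest class $V^\star$ satisfies $w(V^\star)\geq w(V)/C$.

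Next, I would turn this into a subroutine satisfying the hypothesis of \Cref{thm:approxMWVCgeneral} with $\lambda = 1/C$. A key observation is that the restriction of a proper $C$-coloring of $G$ to any induced subgraph $G'$ remains a proper $C$-coloring of $G'$, so the same pigeonhole argument still produces an independent set of weight at least $w(V(G'))/C$ inside $G'$ --- which matters because $\calG$ is closed under induced subgraphs and \Cref{thm:approxMWVCgeneral} may call its independent-set routine on such subgraphs. Feeding this subroutine into \Cref{thm:approxMWVCgeneral} then yields a $(2-2/C+\eps)$-approximate minimum weighted vertex cover in $T_C(n)+\poly(\log n/\eps)$ rounds, deterministically whenever the coloring algorithm is.

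The main technicality I expect to need to handle is that each node only knows its own color locally, so identifying the globally heaviest color class requires some form of aggregation that is not immediately polylogarithmic in \CONGEST. I would address this either by running \Cref{thm:approxMWVCgeneral} once per color in parallel and keeping the best of the $C$ candidate covers, or by noting that the independent-set subroutine in \Cref{thm:approxMWVCgeneral} only needs some class of weight at least $\lambda w(V)$ rather than the strict maximum, so the $C$ classes can be processed in a fixed order until one of sufficient weight is produced. Either strategy keeps the round complexity within $T_C(n)+\poly(\log n/\eps)$ and inherits determinism from the coloring procedure, completing the reduction.
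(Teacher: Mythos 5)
Your overall reduction is the right one and matches the paper: invoke \Cref{thm:approxMWVCgeneral} with $\lambda=1/C$, use the pigeonhole principle on the color classes, and note that the coloring restricts to a proper $C$-coloring of any induced subgraph. The genuine gap is the technicality you flag at the end, because neither of your two workarounds actually resolves it. Keeping ``the best of the $C$ candidate covers'' means all nodes must agree on the index $i$ maximizing $w(S_{1/2}\cap V_i)$, and ``processing the classes in a fixed order until one of sufficient weight is produced'' means testing $w(S_{1/2}\cap V_i)\geq w(S_{1/2})/C$; both are global weight aggregations, which in a graph of diameter $D$ cost $\Omega(D)$ rounds, and $D$ can be $\Theta(n)$. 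On top of that, the first variant asks for $C$ runs in parallel and the second may iterate over up to $C$ classes, while $C$ is not bounded by $\poly\big(\frac{\log n}{\eps}\big)$. So the independent-set subroutine you would feed into \Cref{thm:approxMWVCgeneral} has round complexity $T_C(n)+\Theta(D)$ (possibly with an extra factor depending on $C$), not $T_C(n)+\poly\big(\frac{\log n}{\eps}\big)$, and the claimed bound does not follow.

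The missing ingredient is the diameter reduction of \Cref{thm:diameterreduction}, which is exactly how the paper argues: at the cost of a $(1+\eps/2)$ factor, the MWVC problem is reduced to the \SUPPORTED model on clusters whose communication graphs have diameter $D=O\big(\frac{\log^3 n}{\eps}\big)$. Inside such a cluster the restriction of the precomputed coloring is still a proper $C$-coloring, and a color class of weight at least a $1/C$ fraction of the cluster weight can be selected by aggregating class weights over the cluster tree in $\poly\big(\frac{\log n}{\eps}\big)$ rounds (e.g., by a simple halving search over the color set if $C$ is large), since the diameter is now polylogarithmic. Combining this cluster-local heavy-class selection with the argument of \Cref{thm:approxMWVCgeneral} (bipartite double cover via \Cref{thm:approxMWVCbipartite} and \Cref{lemma:doublecoverVC}, then removing the heavy class from $G[S_{1/2}]$) yields the $(2-2/C+\eps)$-approximation in $T_C(n)+\poly\big(\frac{\log n}{\eps}\big)$ rounds, deterministically whenever the coloring algorithm is. In short: your reduction is correct, but the selection of a heavy color class must happen after the low-diameter clustering, not on the original (possibly large-diameter) graph.
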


In order to efficiently compute an independent set $S$ of
weight $w(S)\geq w(V)/C$ from a $C$-coloring, we need the graph to be
of small diameter. However, by using standard clustering techniques
(which we anyways need to apply also for our bipartite vertex cover
algorithm), one can reduce the minimum (weighted) vertex cover problem
on general $n$-node graphs to graphs of diameter
$\poly\big(\frac{\log n}{\eps}\big)$. In particular, in
graphs of arboricity $a$, we can (deterministically) compute a
$(2+\eps)a$-coloring in time $O(\log^3 a\cdot \log n)$~\cite{GhaffariKuhn21}. As a consequence, we get a deterministic
$\poly\big(\frac{\log n}{\eps}\big)$-round \CONGEST algorithm for
computing a $(2-1/a + \eps)$-approximation  of minimum weighted vertex
cover in graphs of arboricity $a$.

In addition to our \CONGEST algorithms for approximating minimum
weighted vertex cover, we also provide new \CONGEST algorithms for
approximating maximum weighted matching. The following theorem can be seen as a generalization of Theorem 3.15 in \cite{lotker15} and of Theorem B.12 in \cite{yehuda17} (full version).

\begin{theorem}\label{thm:approxMWM}
  For every $\eps,\delta\in(0,1]$, there is a \emph{randomized}
  \CONGEST algorithm that with probability at least $1-\delta$
  computes a $(1-\eps)$-approximation to the maximum weighted matching
  problem in $2^{O(1/\eps)}\cdot \left(\log (W \Delta) + \log^2\Delta+\log^* n \right)\cdot \log^3(1/\delta)$ rounds. Further, there
  is a \emph{deterministic} \CONGEST algorithm to compute a
  $(1-\eps)$-approximation for the minimum weighted matching problem in time
  $2^{O(1/\eps)}\cdot\poly\log n$.
\end{theorem}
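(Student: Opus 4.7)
My plan is to follow the blueprint of [Lotker, Patt-Shamir, Pettie; SPAA'08] and [Bar-Yehuda, Censor-Hillel, Ghaffari, Schwartzman; PODC'17] for unweighted matching, and to replace unweighted augmenting paths by weight-augmenting structures in the sense of Pettie--Sanders / Drake--Hougardy. Concretely, call an \emph{augmentation} for a matching $M$ an edge set $F\subseteq E$ with $|F|\leq k$ such that $M\oplus F$ is again a matching, and call its \emph{gain} the quantity $w(M\oplus F)-w(M)$. For $k=\Theta(1/\eps)$, the Pettie--Sanders structural result says that if $M$ admits no augmentation of size $\leq k$ with gain at least an $\eta=\poly(\eps)$ fraction of the current gap $w(M^*)-w(M)$, then $w(M)\geq (1-\eps)w(M^*)$. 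Iterating the application of a \emph{disjoint} set of such augmentations and tracking the gap, a total of $\poly(1/\eps)\cdot \log W$ phases takes any starting matching to a $(1-\eps)$-approximation.

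In each phase I would build the hypergraph $\mathcal{H}$ on the vertex set $V(G)$ whose hyperedges are all short augmentations (size $\leq k=O(1/\eps)$) with gain above the current threshold. Applying any vertex-disjoint collection of hyperedges simultaneously yields a valid matching, and the total weight increases by the sum of the individual gains; so it suffices to compute a \emph{maximal} disjoint set of such hyperedges. Each node of $G$ participates in at most $\Delta^{O(1/\eps)}$ short augmentations, hence the conflict graph on hyperedges has maximum degree $d=\Delta^{O(1/\eps)}$, and one communication round on it can be simulated in $O(1/\eps)$ rounds of $G$ using messages of $O(\log(W\Delta))$ bits (to describe an augmentation and its gain).

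I would then invoke a randomized MIS/maximal matching algorithm on this low-degree conflict graph: Ghaffari-style MIS gives $O(\log d)+\poly\log\log n$ rounds in the standard regime, while for the very-small-neighborhood regime one can use the BCGS technique to shave to $O(\log^2\Delta+\log^*n)$ while picking up the $\log^3(1/\delta)$ success-probability blowup. Combined with the $\poly(1/\eps)\cdot\log W$ phases and the $2^{O(1/\eps)}$ blow-up in simulating the radius-$O(1/\eps)$ local structure, this yields the randomized bound $2^{O(1/\eps)}\cdot(\log(W\Delta)+\log^2\Delta+\log^*n)\cdot\log^3(1/\delta)$. For the deterministic variant I would replace randomized MIS by the Rozhon--Ghaffari / Ghaffari--Kuhn deterministic network decomposition, which solves MIS on the conflict graph in $\poly\log n$ rounds, giving the claimed $2^{O(1/\eps)}\cdot\poly\log n$ bound.

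The main obstacle I expect is calibrating the Pettie--Sanders threshold and the discretization of gains so that (i) the number of phases stays $\poly(1/\eps)\cdot\log W$, (ii) messages describing augmentations fit in $O(\log(W\Delta))$ bits, and (iii) the conflict-graph degree stays $\Delta^{O(1/\eps)}$ even after accounting for the weight-threshold bookkeeping; the $\log^2\Delta$ factor in particular requires the MIS subroutine to be applied on a graph whose locality has been reduced by the BCGS ``shattering'' trick, which must be adapted to work with weighted augmentations rather than unweighted augmenting paths.
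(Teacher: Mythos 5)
Your plan has a genuine gap at its central step: computing a maximal vertex-disjoint set of short \emph{weighted} augmentations by running MIS on the conflict graph of all augmentations is a \LOCAL-model technique, and the claimed \CONGEST simulation does not go through. A single node of $G$ can lie on $\Delta^{\Theta(1/\eps)}$ augmentations of size $O(1/\eps)$; describing even one such augmentation already takes $\Theta\big(\frac{\log n}{\eps}\big)$ bits (not $O(\log(W\Delta))$), and coordinating an MIS among all augmentations sharing a vertex forces information about up to $\Delta^{\Theta(1/\eps)}$ distinct virtual nodes to cross single edges of $G$; this cannot be done in $O(1/\eps)$ \CONGEST rounds per conflict-graph round. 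This is exactly the obstacle the paper identifies: it states explicitly that, unlike in the unweighted case, it is not known how to efficiently find a large set of short augmenting structures for weighted matchings in \CONGEST, even in bipartite graphs. Your secondary calibration worries (threshold bookkeeping, the fact that a \emph{maximal} disjoint set of high-gain augmentations need not capture a constant fraction of the gap without a Pettie--Sanders-style charging argument, and the provenance of the $\log^3(1/\delta)$ and $\log^2\Delta$ factors, which you reverse-engineer from an MIS subroutine) are real but subordinate to this simulation problem.

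The paper avoids finding augmenting structures altogether. It samples, in each iteration, a random black/white bipartition, keeps the monochromatic edges of the current matching, and runs the \emph{deterministic, LP-rounding-based} $(1-\lambda)$-approximate bipartite MWM algorithm of Ahmadi--Kuhn--Oshman on the sampled bipartite subgraph, replacing the bichromatic matched edges by the new matching. The structural lemma about vertex-disjoint augmenting paths/cycles of length $O(1/\eps)$ with total gain $\geq \frac{\eps}{4}w(M^*)$ is used only in the \emph{analysis}: each such path/cycle survives the bipartition with probability $2^{-O(1/\eps)}$, so with probability $2^{-O(1/\eps)}$ the sampled graph contains a matching substantially better than the current one, and the subroutine (being a guaranteed $(1-\lambda)$-approximation with $\lambda=\eps/(2T)$) can never lose much in unsuccessful iterations. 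With $T=2^{O(1/\eps)}\log(1/\delta)$ iterations and a Chernoff bound one gets the randomized claim; the $\log(W\Delta)+\log^2\Delta+\log^* n$ terms and the $\log^3(1/\delta)$ factor come from the subroutine's round complexity with this tiny $\lambda$, not from any MIS routine. The deterministic version derandomizes the bipartitions via an explicit family of $2^{O(1/\eps)}\log n$ ID-based colorings that hits every short path/cycle, combined with the diameter-reduction clustering so that clusters can test whether a candidate matching improves the current weight. If you want to salvage your route, you would first have to solve the open problem of finding short weighted augmentations in \CONGEST; otherwise you should switch to the sampling-plus-bipartite-subroutine framework.
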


Note that except for the $\log^* n$ term, for constant error
probability $\delta$, the round complexity of our randomized algorithm
is independent of the number of nodes $n$. Moreover, for constant
$\eps$ and $\delta$, in bounded-degree graphs with bounded weights,
the round complexity of the randomized algorithm is only
$O(\log^* n)$. For unweighted matchings, a round complexity that is
completely independent of $n$ was obtained by
\cite{yehuda17}. Interestingly, G\"o\"os and Suomela in
\cite{goeoes14_DISTCOMP} showed that such a result is not possible for
the minimum vertex cover problem, even in the \LOCAL model. They show
that even for bipartite graphs of maximum degree $3$, there exists a
contant $\eps_0>0$ such that any randomized distributed
$(1+\eps_0)$-approximation algorithm for the (unweighted) minimum
vertex cover problem requires $\Omega(\log n)$ rounds. As our last
contribution, we generalize the result of \cite{goeoes14_DISTCOMP} to
computing $(1+\eps)$-approximate solutions for any sufficiently small
$\eps>0$.

\begin{theorem}\label{thm:lowerbound}
  There exists a constant $\eps_0>0$ such that for every
  $\eps\in(0,\eps_0]$, any randomized \LOCAL model algorithm to
  compute a $(1+\eps)$-approximation for the (unweighted) minimum vertex cover
  problem in bipartite graphs of maximum degree $3$ requires
  $\Omega\big(\frac{\log n}{\eps}\big)$ rounds.
\end{theorem}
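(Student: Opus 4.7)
The plan is to reduce from the $\Omega(\log n)$ lower bound of G\"o\"os and Suomela~\cite{goeoes14_DISTCOMP}. Their result provides a constant $\eps_0>0$ and a family of bipartite $3$-regular graphs $G_0$ on $n_0$ nodes such that every randomized \LOCAL{} algorithm computing a $(1+\eps_0)$-approximate minimum vertex cover requires $\Omega(\log n_0)$ rounds. To sharpen this to $\Omega(\log n/\eps)$, I would \emph{dilute} each instance by subdividing every edge of $G_0$ with a path of $L+1$ edges for an even integer $L=\Theta(\eps_0/\eps)$. The resulting graph $G$ remains bipartite (every cycle length gets multiplied by $L+1$, preserving parity since $G_0$-cycles are even), has maximum degree $3$, and has $n=n_0+L\cdot|E(G_0)|=n_0(1+3L/2)$ vertices. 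The conceptual point is that the $\Omega(n_0)$ additive waste forced by G\"o\"os--Suomela on $G_0$ becomes only a $(1+\eps)$-factor overhead once measured against the inflated $\mathrm{OPT}(G)=\Theta(Ln_0)$.

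The first step is to verify $\mathrm{OPT}(G)=n/2$. Decompose any cover $C$ as $Y:=C\cap V(G_0)$ together with internal-node choices on each subdivision path. A short case analysis (using $L$ even) shows that the subdivision of $e=\{u,v\}$ costs exactly $L/2$ internal nodes if $\{u,v\}\cap Y\neq\emptyset$ and $L/2+1$ otherwise. Hence every cover has size at least $|Y|+(L/2)|E(G_0)|+U$ where $U:=|\{e\in E(G_0):e\cap Y=\emptyset\}|$. This is minimized by taking $Y$ to be a minimum vertex cover of $G_0$, yielding $\mathrm{OPT}(G)=n_0/2+(L/2)(3n_0/2)=n/2$.

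Next, suppose $A$ is a randomized $r$-round algorithm that, with constant probability, outputs a $(1+\eps)$-approximate cover $C$ on $G$. In the success event, the decomposition gives $|Y|+U\leq|C|-(L/2)|E(G_0)|\leq n_0/2+\eps\,n/2=n_0/2+\eps n_0/2+3\eps Ln_0/4$. Adding one endpoint of each uncovered $G_0$-edge to $Y$ produces a valid vertex cover $Y'$ of $G_0$ with $|Y'|\leq|Y|+U$; if $L$ is chosen so that $\eps+3\eps L/2<\eps_0$, this strictly beats $(1+\eps_0)n_0/2$. Simultaneously, $A$ can be simulated on $G_0$ in $\lceil r/(L+1)\rceil+O(1)$ rounds: each original node owns itself plus the nearer half of each of its three incident subdivision paths and locally synthesizes their IDs and random bits, so that after gathering its $\lceil r/(L+1)\rceil$-hop $G_0$-neighborhood it has all information inside its $r$-ball in $G$, which suffices to compute $A$'s output on itself; one additional round assembles $Y'$. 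G\"o\"os--Suomela then forces $\lceil r/(L+1)\rceil=\Omega(\log n_0)$, i.e., $r=\Omega(L\log n_0)=\Omega(\log n/\eps)$, using $\log n=\Theta(\log n_0)$ for $\eps$ at least inverse-polynomial in $n$.

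The main technical obstacle is calibrating $L$ so that two competing effects balance: $L$ must be large enough that the $(1+\eps)$-approximation guarantee on $G$ translates (after fix-up) into something strictly better than a $(1+\eps_0)$-approximation on $G_0$, yet small enough that the simulation slowdown $L+1$ per $G_0$-round does not erase the locality savings. The choice $L=\Theta(\eps_0/\eps)$ (even integer) reconciles both, and the compatibility inequality $\eps+3\eps L/2<\eps_0$ determines the threshold $\eps_0'$ in the theorem statement (e.g., one can take $\eps_0'=\eps_0/4$ with $L$ equal to $\lfloor\eps_0/(3\eps)\rfloor$ rounded down to an even integer). The only remaining care is that all lengths are set so that the $r$-ball in $G$ of every original node lies in the $G_0$-neighborhood collected by the simulation, which is immediate from the construction.
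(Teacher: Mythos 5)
Your proposal is correct and follows essentially the same reduction as the paper: subdivide every edge of the G\"o\"os--Suomela instance into an odd-length path with $\Theta(1/\eps)$ internal nodes, show that any $(1+\eps)$-approximate cover of the subdivided graph can be converted into a $(1+\eps_0)$-approximate cover of the original graph, and simulate the algorithm on the subdivided graph inside the original graph with a factor-$\Theta(1/\eps)$ speedup. The only cosmetic difference is that you assume the lower-bound graphs are $3$-regular to pin $\mathrm{OPT}$ down exactly, whereas the paper avoids this by writing $e_G = c\cdot s_G$ with $c\le 3$; your counting goes through verbatim with that substitution.
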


\Cref{thm:lowerbound} is obtained by a relatively simple reduction to
the $(1+\eps_0)$-approximation lower bound proven in
\cite{goeoes14_DISTCOMP} for bipartite graphs of maximum degree $3$.
We note that if we only require the approximation factor to hold in
expectation, as discussed at the end of \Cref{sec:diameter_reduction},
in the \LOCAL model the theorem is tight even for general graphs and
even for the weighted vertex cover problem.

\medskip

\para{Organization of the paper:} The remainder of the paper is
organized as follows. In \Cref{sec:model}, we define the communication
model and we introduce all the necessary mathematical notations and
definitions. In \Cref{sec:overview}, we give an overview over all our
algorithms and our most important ideas and techniques. In
\Cref{sec:MWVC}, we provide the additional technical details needed
for the weighted vertex cover algorithms, \Cref{sec:MWM} is devoted to
the details of the maximum weighted matching algorithms, and in
\Cref{sec:lower}, we formally prove the lower bound on approximating
vertex cover in bipartite graphs in the \LOCAL model. Finally, in
\Cref{sec:tools}, we describe some basic algorithmic tools that we
need for our algorithms and which already appear in the literature in
a very similar form.


\section{Model and Preliminaries}
\label{sec:model}

\subsection{Mathematical Notation}
\label{sec:notation}
Let $G=(V,E,w)$ be an undirected weighted graph, where $w$ is a
non-negative weight function. We will use node and edge weights in the
paper and depending on the context, we will use $w$ to assigns weights
to nodes and/or edges. Generally for a set $X$ of nodes and/or edges,
we use $w(X)$ to denote the sum of the weights of all nodes/edges in
$X$. For example, if we have node weights, $w(V)$ denotes the sum of
the weights of all the nodes. Throughout the paper, we assume that all
weights are integers that are polynomially bounded in the number of
node of the graph. However, as long as we can communicate a single
weight in a single message, all our algorithms can be adapted to also
work at no significant additional asymptotic cost for more general
weight assignments. We further use the following notation for
graphs. For a node $v\in V$, we use $N(v)\subseteq V$ to denote the set
of neighbors of $v$ and we use $E(v)\subseteq E$ to denote the set of
edges that are incident to $v$.

For a graph $G=(V,E)$, the \emph{bipartite double cover} is defined as the
graph $G_2 := G \times K_2 = (V \times \{0, 1\} , E_2)$, where there
is an edge between two nodes $(u, i)$ and $(v, j)$ in $E_2$ if and
only if $\{u, v\} \in E$ and $i \neq j$. Hence, in $G_2$, every node
$u$ of $G$ is replaced by two nodes $(u, 0)$ and $(u, 1)$ and every
edge $\{u, v\}$ of $G$ is replaced by the two edges $\{(u, 0),(v,
1)\}$ and $\{(u, 1),(v, 0)\}$. Moreover, if $G$ is a weighted graph
with weight function $w$, we assume that the bipartite double cover
$G_2$ is also weighted and that the corresponding nodes and/or edges have the same
weight as in $G$. That is, in case of node weights, for every $u\in
V$, we define $w((u,0))=w((u,1))=w(u)$ and in case of edge weights,
for every $\{u,v\}\in E$, we define $w(\{u,i\},\{v,1-i\})=w(\{u,v\})$
for $i\in\{0,1\}$.

\subsection{Problem Definitions}
\label{sec:defProblems}

In this paper, we consider the \emph{minimum weighted vertex cover
  (MWVC)} and the \emph{maximum weighted matching (MWM)}
problems. Formally, in the MWVC problem, we are given a weighted graph
$G=(V,E,w)$ with positive node weights. A vertex cover of $G$ is a set
$S\subseteq V$ of nodes such that for every edge $\set{u,v}\in E$,
$S\cap\set{u,v}\neq\emptyset$. The goal of the MWVC problem is to find
a vertex cover $S$ of minimum total weight $w(S)$. In the MWM problem,
we are given a weighted graph $G=(V,E,w)$ with positive edge
weights. A matching of $G$ is a set $M\subseteq E$ of edges such that
no two edges in $M$ are adjacent. The goal of the MWM problem is to
find a matching $M$ of maximum total weight $w(M)$. The unweighted
versions of the two problems are closely related to each other as
their natural fractional linear programming (LP) relaxations are duals
of each other. In the paper, we will also use the fractional
relaxation of the MWVC problem and its dual problem. In the fractional
MWVC problem on $G$, every node $u\in V$ is assigned a value
$x_u\in[0,1]$ such that for every edge $\set{u,v}\in E$,
$x_u+x_v\geq 1$ and such that the sum $\sum_{u\in V}w(u)\cdot x_u$ is
minimized. The dual LP of this problem, which for a given weight
function $w$, we in the following call the \emph{fractional
  $w$-matching} problem, is defined as follows. Every edge $e\in E$ is
assigned a (fractional) value $y_e\geq 0$ such that for every node
$u\in V$, we have $\sum_{e:u\cap e\neq \emptyset}y_e \leq w(u)$ and
such that the sum $\sum_{e\in E} y_e$ is maximized. We use the vector
$\vec{y}$ to refer to a fractional solution that assigns a fractional
value $y_e$ to every edge. Further for convenience, for a set of
edges $F$, we also use the short notation $y(F):=\sum_{e\in F}
y_e$. LP duality directly implies that the value of any fractional
$w$-matching cannot be larger than the weight of any vertex
cover:

\begin{lemma}\label{lemma:duality}
  Let $G=(V,E,w)$ be a node-weighted graph and let $\vec{y}$
  be a fractional $w$-matching of $G$. It then holds that $y(E)\leq
  w(S)$ for every vertex cover $S$ of $G$.
\end{lemma}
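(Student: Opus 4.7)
The plan is to give a standard weak LP duality argument: bound $y(E)$ from above by summing the fractional matching constraints over the nodes in the vertex cover $S$.

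First, fix an arbitrary vertex cover $S\subseteq V$ of $G$ and an arbitrary fractional $w$-matching $\vec{y}$. By the definition of a fractional $w$-matching, for every node $u\in V$ we have $\sum_{e\in E(u)} y_e \leq w(u)$, where $E(u)$ denotes the set of edges incident to $u$. Summing this inequality over all $u\in S$ yields
\[
\sum_{u\in S}\sum_{e\in E(u)} y_e \;\leq\; \sum_{u\in S} w(u) \;=\; w(S).
\]

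Next, I would rewrite the double sum on the left-hand side by switching the order of summation: each edge $e=\{u,v\}\in E$ contributes $y_e$ to this double sum once for each endpoint of $e$ that lies in $S$. Since $S$ is a vertex cover, every edge $e\in E$ has at least one endpoint in $S$, and so the coefficient of $y_e$ in the double sum is either $1$ or $2$. Combined with the nonnegativity $y_e\geq 0$, this gives
\[
y(E) \;=\; \sum_{e\in E} y_e \;\leq\; \sum_{u\in S}\sum_{e\in E(u)} y_e \;\leq\; w(S),
\]
which is the claimed inequality.

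There is no real obstacle here; the statement is exactly LP weak duality between the fractional MWVC relaxation and its dual, the fractional $w$-matching problem, and the only subtlety is the (harmless) double-counting of edges with both endpoints in $S$, which is absorbed by the nonnegativity of $\vec{y}$.
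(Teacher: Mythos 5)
Your proof is correct and is essentially the same argument as the paper's: sum the fractional $w$-matching constraints over the nodes of $S$ and use the vertex-cover property (plus nonnegativity of $\vec{y}$) to absorb the double-counted edges. The paper states the same chain of inequalities more tersely; your write-up just makes the exchange of summation explicit.
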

\begin{proof}
  We have
  \[
    w(S) = \sum_{v\in S} w(v) \geq \sum_{v\in S}\sum_{e:v\in e} y_e
    \geq \sum_{e\in E} y_e = y(E).
  \]
  The first inequality holds because the values $y_e$ form a valid
  fractional $w$-matching and the second inequality holds because $S$
  is a vertex cover.
\end{proof}

The \emph{approximation ratio} of an
approximation algorithm for the MWVC or MWM problem is defined as the
worst-case ratio between the total weight of a vertex cover or
matching computed by the algorithm over the total weight of an optimal
vertex cover or matching. That is, we define the approximation ratio
such that it is $\geq 1$ for minimization and $\leq 1$ for
maximization problems.

\subsection{Low-Diameter Clustering}
\label{sec:defClustering}

Many of our algorithms have some components that require global
communication in the network. In order the achieve a polylogarithmic
round complexity, we therefore need a graph with polylogarithmic
diameter. We achieve this by applying standard clustering
techniques. Formally, we use the clusterings as in
\cite{opodis20_MVC} described in the following. Let $G=(V,E,w)$ be a weighted graph with
non-negative node and edge weights. A \emph{clustering} of $G$ is a
collection $\set{S_1,\dots,S_k}$ of disjoint node sets
$S_i\subseteq V$. For $\lambda\in[0,1]$, a clustering
$\set{S_1,\dots,S_k}$ is called \emph{$\lambda$-dense}
 if the total
weight of all nodes and edges in the induced subgraphs $G[S_i]$ for
$i\in\set{1,\dots,k}$ is at least $\lambda(w(V)+w(E))$. Further, for
an integer $h\geq 1$, a clustering $\set{S_1,\dots,S_k}$ is called
\emph{$h$-hop separated} if for any two clusters $S_i$ and $S_j$
($i\neq j$) and any pair of nodes $(u,v)\in S_i\times S_j$, we have
$d_G(u,v)\geq h$, where $d_G(u,v)$ denotes the hop-distance between
$u$ and $v$. Further for two integers $c,d\geq 1$, a clustering
$\set{S_1,\dots,S_k}$ is defined to be \emph{(c,d)-routable} if we are
given a collection of $T_1,\dots,T_k$ trees in $G$ such that for each
$i\in \set{1,\dots,k}$, the nodes $S_i$ are contained in $T_i$, every
tree $T_i$ has diameter at most $d$, and every edge $e\in E$ of $G$ is
contained in at most $c$ of the trees $T_1,\dots,T_k$. Note that this
implies that each cluster of a $(c,d)$-routable clustering has weak
diameter at most $d$ and if the nodes of $T_i$ are all contained in
$S_i$, it implies that the strong diameter of cluster $S_i$ is at most
$d$.

\subsection{Communication Model}
Throughout the paper, we assume a standard synchronous message passing
model on graphs. That is, the network is modeled as an undirected $n$-node
graph $G=(V,E)$. Each node is equipped with a unique $O(\log n)$-bit
identifier. The nodes $V$ communicate in synchronous rounds over
the edges $E$ such that in each round, every node can send an
arbitrary mesage to each of its neighbors. Internal computations at
the nodes are free. Initially, the nodes do not know anything about
the topology of the network. When computing a vertex cover or a
matching, at the end of the algorithm, every node must output if it is
in the vertex cover or which of its edges belong to the matching. The
time or round complexity of an algorithm is defined as the number of
rounds that are needed until all nodes terminate. If the size of the
messages is not restricted, this model is known as the \LOCAL
model~\cite{peleg00}. In the more restrictive \CONGEST model, all messages must
consist of at most $O(\log n)$ bits~\cite{peleg00}. In several of our
algorithms, we will first compute a clustering as defined above in
\Cref{sec:defClustering} and we afterwards run \CONGEST algorithms on
the clusters. If we are given a $(c,d)$-routable 
clustering, we are only guaranteed that the diameter of each cluster
$S_i$ is small if we add the nodes and edges of the tree $T_i$ to the
cluster. For running our algorithms on individual clusters, we
therefore need an extension of the classic \CONGEST model, which has
been introduced as the \SUPPORTED model in
\cite{FoersterKR019,SchmidS13}. In the \SUPPORTED model, we are given
two graphs, a communication graph $H=(V_H,E_H)$ and a logical graph
$G=(V,E)$, which is a subgraph of $H$. When solving a graph problem
such as MWVC or MWM in the \SUPPORTED model, we need to solve the
graph problem on the logical graph $G$, we can however use \CONGEST
algorithms on the underlying communication graph $H$ to do so. Note
that if we are given a $(c,d)$-routable clustering, we can define
$G_i:=G[S_i]$ and $H_i$ as the union of the graph $G_i$ and the tree
$T_i$ for each cluster and we can then in parallel run $1$ round of a
\SUPPORTED algorithm on each cluster in $c$ \CONGEST rounds on $G$.


\section{Technical Overview}
\label{sec:overview}

In this section, we provide an overview of the core ideas and
techniques for all our results. While we try to provide intuitive
arguments for everything, most of the formal proofs appear in
\Cref{sec:MWVC,sec:MWM,sec:lower}. We start by describing how we can
use clusterings to reduce the problem of approximating MWVC or MWM on
general graphs to the case of approximating the same problems on
graphs of small diameter.

\subsection{Reducing to Small Diameter}
\label{sec:diameter_reduction}

The high-level idea that we use to reduce the diameter is a classic
one. We find a disjoint and sufficiently separated collection of
low-diameter clusters such that only a small fraction of the graph is
outside of the clusters~(see, e.g.,
\cite{Awerbuch-Peleg1990,linial93,MPX13,peleg00,polylogdecomp} for
constructions of such clusterings). We then compute a good
approximation for a given problem inside each cluster and we use a
coarse approximation to extend the solution to the parts of the graph
outside of the clusters. If we want this to work in general graphs, we
have to adapt the standard clustering constructions such that the part
of the graph that is outside of the clusters contains only a
\emph{small fraction of a solution to the actual problem} that we want
to approximate, rather than simply a small fraction of the number of
nodes and/or edges of the graph. A generic way to achieve this in the
\LOCAL model has been described in \cite{ghaffari2017complexity} and a
method that can also be used in the \CONGEST model has recently been
described in \cite{opodis20_MVC} for the unweighted minimum vertex cover problem.
The following theorem shows how to extend the approach of
\cite{opodis20_MVC} to work also for weighted vertex cover and
matching.  The theorem shows that at the cost of a $(1+\eps)$-factor,
the problems of approximating MWVC and MWM can efficiently be reduced
to approximating the problems in the \SUPPORTED model
with a small-diameter communication graph.

\begin{theorem}[Diameter Reduction]\label{thm:diameterreduction}
  Let $T_{\mathsf{SC}}^{\alpha}(n,D)$ be the time required for computing an
  $\alpha$-approximation for the MWVC or the MWM problem in the
  \SUPPORTED model with a communication graph of diameter $D$. Then,
  for every $\eps\in(0,1]$, there is a 
  $\poly\big(\frac{\log n}{\eps}\big) + O\big(\log n \cdot
  T_{\mathsf{SC}}^{\alpha}\big(n, O\big(\frac{\log^3 n}{\eps}\big)\big)\big)$-round
  \CONGEST algorithm to compute a $(1-\eps)\alpha$-approximation of
  MWM or an $(1+\eps)\alpha$-approximation of MWVC in the \CONGEST
  model. If the given \SUPPORTED model algorithm is deterministic,
  then the resulting \CONGEST model algorithm is also
  deterministic. Also, if we want to solve MWVC
  or MWM in the \CONGEST model on a bipartite graph, then it is
  sufficient to have a \SUPPORTED model algorithm that works for a
  bipartite communication (and thus also logical) graph.
\end{theorem}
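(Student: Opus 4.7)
The plan is to generalise the unweighted diameter-reduction framework of \cite{opodis20_MVC} to weighted MWVC and MWM. At a high level, I would first compute a coarse constant-factor approximation in $\poly\log n$ rounds (a $2$-approximate vertex cover $C_0$ via \cite{KhullerVY94} for MWVC, or a $\tfrac{1}{2}$-approximate matching $M_0$ via \cite{yehuda17} for MWM), then use it as a weight reference for a low-diameter decomposition, then solve each cluster in the \SUPPORTED model, and finally handle the ``outside'' cheaply.

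Concretely, I would invoke a weighted variant of a Rozho\'n--Ghaffari-type network-decomposition routine to produce a $(c,d)$-routable clustering with $O(\log n)$ color classes and $d=O(\log^3 n/\eps)$, where in each iteration the decomposition halves the remaining weighted mass while growing clusters by at most $O(\log^2 n/\eps)$ hops; this runs in $\poly(\log n/\eps)$ rounds. The clusters are $2$-hop separated, so there are no edges between different clusters. For MWVC the weight function used by the decomposition is the node weight $w$ restricted to $C_0$, tuned so that the total $w$-weight of nodes not assigned to any cluster is at most $\tfrac{\eps}{2}\,w(C_0)\le \eps\,\OPT_{\mathrm{MWVC}}$. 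For MWM the edges are weighted by $w$ restricted to $M_0$, tuned so that at most $\tfrac{\eps}{2}\,w(M_0)\le \eps\,\OPT_{\mathrm{MWM}}$ of the matching weight can be lost; since $M_0$ dominates $M^*$ only up to a factor of $2$, this accounts for the requested $(1-\eps)$ slack. Bipartiteness is preserved because cluster subgraphs are induced.

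I would then process the $O(\log n)$ color classes sequentially. Within a color class all clusters are edge-disjoint and each is contained, together with its routing tree, in a communication subgraph $H_i$ of diameter $d$; so the given \SUPPORTED-model $\alpha$-approximation can be run in parallel on every cluster of the class in $T_{\mathsf{SC}}^\alpha(n,d)$ rounds, for a total of $O\bigl(\log n\cdot T_{\mathsf{SC}}^\alpha(n,O(\log^3 n/\eps))\bigr)$ rounds in the underlying \CONGEST model (using the $c$-fold congestion absorbed into $T_{\mathsf{SC}}^\alpha$). For MWVC, the final cover is the union of the per-cluster covers together with \emph{all} unclustered nodes; this covers every edge (inside edges by the cluster, remaining edges by the outside endpoint) and its weight is at most $\alpha\cdot\OPT + \eps\,\OPT=(1+\eps)\alpha\,\OPT$. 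For MWM, the output is the union of the per-cluster matchings (edges with an outside endpoint are discarded); since the clusters are $2$-hop separated, unions remain matchings, and $\sum_i \alpha\cdot w(M^*\cap G[S_i])\ge \alpha(1-\eps)\OPT$.

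The main technical obstacle is the weighted clustering step: ensuring that a $\poly(\log n/\eps)$-round construction yields an ``outside'' whose weight is bounded in terms of $\OPT$ rather than the much larger total weight. This is where the reference approximations $C_0$ and $M_0$ are essential—they provide a polynomial upper bound on the mass that needs to be peeled off, so that a weighted version of the standard halving argument terminates after $O(\log n)$ classes. A secondary technicality is that for MWM one must bound $w(M^*\cap\text{outside-edges})$ via the $M_0$-weighted outside, which uses the fact that any matching has weight at most $2w(M_0)$; adjusting the target from $\eps$ to $\eps/4$ in the decomposition absorbs this factor.
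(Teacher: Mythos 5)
Your high-level skeleton matches the paper's (coarse $\poly\log n$-round preprocessing, a weighted low-diameter clustering whose ``outside'' mass is charged against $\OPT$, per-cluster \SUPPORTED runs, cheap fix-up), but the charging step that makes the theorem true is wrong in both cases. For MWVC, you weight the decomposition by $w$ restricted to the $2$-approximate cover $C_0$ and then add \emph{all} unclustered nodes to the output: the density guarantee only bounds the $w$-weight of unclustered nodes \emph{in $C_0$}, so the unclustered nodes outside $C_0$ are completely uncharged and can be arbitrarily heavier than $\OPT$. You cannot repair this by adding only the unclustered $C_0$-nodes either: with merely $2$-hop separated, non-extended clusters, an edge $\{u,v\}$ with $u\in S_i$ and $v$ unclustered, $v\notin C_0$, lies inside no cluster, is not covered by the per-cluster covers, and its only $C_0$-endpoint $u$ is clustered --- so it would remain uncovered. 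The paper's proof avoids exactly this by (i) using a $3$-hop separated clustering and running the cluster algorithm on the one-hop \emph{extended} clusters $S_i'$, so that every edge touching an original cluster is handled inside some $G[S_i']$, and (ii) covering the remaining edges (both endpoints unclustered) by the half-tight nodes of a fractional $w$-matching $\vec{y}$, whose weight is charged to the $y_e$-values of those edges; correspondingly the clustering density is measured in the edge weights $y_e$, not in node weights of an integral cover.

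The MWM side has the same kind of gap, and it is not a ``secondary technicality.'' Making the clustering $(1-\Theta(\eps))$-dense with respect to the edge weights of $M_0$ bounds the weight of \emph{$M_0$-edges} outside the clusters, but what you must bound is the weight of \emph{$M^*$-edges} outside the clusters, and the global fact $w(M^*)\le 2\,w(M_0)$ does not localize: $M^*$ may consist of heavy edges from cluster nodes to unclustered nodes while $M_0$ matches those cluster endpoints internally, so the $M_0$-outside weight can be $0$ while the discarded $M^*$-weight is essentially all of $\OPT$. The paper sidesteps this by clustering with respect to a $4$-approximate \emph{fractional weighted vertex cover} (the LP dual) $\{x_v\}$ used as node weights; by LP duality (\Cref{lemma:duality}-style reasoning), the maximum matching weight of any set of edges is at most the dual mass on the incident nodes, and with $3$-hop separation plus extended clusters every discarded edge has both endpoints unclustered, so its loss is charged to the $\eps/4$-fraction of dual mass outside the clusters, which is at most $\eps\,\OPT$. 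In short, you need a \emph{locally summable certificate} (fractional dual values on nodes, resp.\ fractional $w$-matching values on edges, as provided by \Cref{thm:det_clustering} combined with the dual computations in the paper), not the primal coarse solutions $C_0$ and $M_0$, and you need the $3$-hop separation with one-hop cluster extension for the covering/discarding argument to go through.
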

\begin{proof}
  For both problems, we first compute node or edge weights that we
  will use to compute a good clustering of the graph. Assume that we
  have a graph $G=(V,E,w)$, where for the MWVC problem, $w$ is a
  function that assigns positive integer weights to the nodes and for the MWM
  problem, $w$ is a function that assigns positive integer weight to
  the edges.

  For the MWVC problem, we first compute a fractional $w$-matching
  $\vec{y}$ as follows.\footnote{A similar parallel approximation
    algorithm for MWVC to the best of our knowledge first appeared in
    \cite{KhullerVY94}.} We initialize the value of each edge
  $e=\set{u,v}$ to $y_e:=\min \{w(u),w(v)\}/\Delta$. Note that this is a
  feasible fractional $w$-matching. Now, we improve the fractional
  $w$-matching as follows in phases. We call a node $v$ half-tight if
  $\sum_{e\in E(v)} y_e > w(v)/2$. In each phase, we let $E_f$ be the
  set of edges $\set{u,v}$ for which both $u$ and $v$ are not
  half-tight and we double the $y_e$-value of all edges in $E_f$. Like
  this, $\vec{y}$ remains a feasible fractional $w$-matching and after
  $O(\log \Delta)$ phases, we obtain a fractional $w$-matching
  $\vec{y}$ such that at least one node of every edge is
  half-tight. Hence, the set $S$ of half-tight nodes is a vertex cover
  of weight
  \begin{equation}\label{eq:wmatchingapprox}
    w(S) \leq \sum_{v\in S} w(v) \leq \sum_{v\in S}2\sum_{e\in
      E(v)}y_e \leq 4y(E) \leq 4 w(S^*),
  \end{equation}
  where $S^*$ is an optimal weighted vertex cover of $G$. The last
  inequality follows from \Cref{lemma:duality}. Further, clearly, one
  phase requires $O(1)$ rounds and we therefore need $O(\log n)$
  rounds for computing the fractional $w$-matching $\vec{y}$.

  For the MWM problem, we proceed as follows. The LP dual of the
  fractional relaxation of the MWM problem asks for assigning a value
  $x_v\geq 0$ to each node such that $x_u+x_v\geq w(e)$ for every edge
  $e=\set{u,v}$ and such that $\sum_{v\in V} x_v$ is minimized. This
  is a fractional covering problem and we can for example compute a
  $4$-approximation for it by using the \CONGEST algorithm of
  \cite{nearsighted}. For the given covering problem, the round
  complexity of the algorithm of \cite{nearsighted} is $O(\log(\Delta
  W))$ if $W$ is the largest edge weight. Since we assumed that $W$ is
  at most polynomial in $n$, the round complexity for computing the
  assignment of $x_v$-values if $O(\log n)$.
  
  In both cases, we now compute a $(1-\eps/4)$-dense, $3$-hop
  separated clustering of $G$ by using \Cref{thm:det_clustering}. In
  the case of MWVC, we use the fractional $w$-matching $\vec{y}$ as
  edge weights for the clustering and we set all the node weights to $0$. In the case of MWM, we use the
  assignment of $x_v$-values as node weights and we set all the edge weights to $0$. By
  \Cref{thm:det_clustering}, the time for computing such a clustering
  is $\poly\big(\frac{\log n}{\eps}\big)$. Let $S_1,\dots,S_t$ be
  the collection of clusters returned by the clustering algorithm. In
  both cases, we define extended clusters $S_1',\dots,S_t'$, where
  cluster $S_i'$ consists of all nodes in $S_i$ and of all neighbors
  of nodes in $S_i$. Note that because of the $3$-hop separation of
  clusters, the clusters $S_i'$ are still vertex-disjoint. By
  \Cref{thm:det_clustering}, the clustering is
  $\big(O(\log n), O\big(\frac{\log^3
    n}{\eps}\big)\big)$-routable. On the induced subgraphs $G[S_i']$
  of the extended clusters, we can therefore run \SUPPORTED algorithms
  with a communication graph of diameter
  $O\big(\frac{\log^3 n}{\eps}\big)$. If we run such algorithm on all
  clusters in parallel, we have a slowdown of $O(\log n)$. By the
  assumption of the lemma, we can therefore in parallel for all graph
  $G[S_i']$ compute an $\alpha$-approximate solution for our given
  problem (MWVC or MWM) in time
  $O\big(\log n \cdot T_{\mathsf{SC}}^{\alpha}\big( O\big(\frac{\log^3
    n}{\eps}\big)\big)$.

  For weighted matchings, we are now done. By LP duality, the maximum
  weight matching for a subgraph induced by a subset
  $F\subseteq E$ of the edges of $G$ is upper bounded by the sum of
  the $x_v$ values assigned to all the nodes incident to edges in
  $F$. Hence the maximum weight matching of the edges that are outside
  the graphs $G[S_i']$ is upper bounded by the sum of the $x_v$ values
  of nodes outside the original clusters $S_i$ and because the
  clustering is $(1-\eps/4)$-dense, we know that the sum of those
  $x_v$-values is at most $\eps/4\cdot\sum_{v\in V} x_v$. Because the
  assignment $x_v$ constitutes a $4$-approximation of the dual of the
  fractional weighted matching problem, we also know that the weight
  of an optimal matching is at least $1/4\cdot\sum_{v\in V}
  x_v$. The claim of the theorem thus follows for the MWM problem.

  For the MWVC problem, we know that the collection of the
  $\alpha$-approximate weighted vertex covers of the graphs $G[S_i']$
  is clearly upper bounded by $\alpha w(S^*)$, where $w(S^*)$ is the
  weight of an optimal weighted vertex cover of the whole graph
  $G$. However, the collection of the vertex covers of the clusters is
  not a valid vertex cover of $G$. Some of the edges outside clusters
  $S_i'$ might not be covered. The edges outside clusters $S_i'$ can
  however be covered by the set of half-tight nodes (w.r.t.\ the
  fractional $w$-matching $\vec{y}$) outside the original clusters
  $S_i$. The total weight of those nodes is at most $4$ times the
  fractional values $y_e$ of all the uncovered edges and thus at most
  $\eps \cdot y(E)\leq \eps w(S^*)$ (because of
  \eqref{eq:wmatchingapprox} and because the clustering is
  $(1-\eps/4)$-dense).
\end{proof}

\para{Remark.} We note that by using a variant of the randomized clustering
algorithm of Miller, Peng, and Xu~\cite{MPX13}, one can compute a
$\big(1,O\big(\frac{\log n}{\eps}\big)\big)$-routable, $3$-hop separated clustering with
expected density $1-\eps$ in time $O\big(\frac{\log n}{\eps}\big)$ (also cf.\ \Cref{sec:clustering} and \cite{opodis20_MVC}). In
combination with the argument in the above proof, this in particular
implies that it in $O\big(\frac{\log n}{\eps}\big)$ rounds in the \LOCAL model, it is
possible to compute weighted matchings and weighted vertex covers with
expected approximation ratios $1-\eps$ and $1+\eps$, respectively.

\subsection{Basic Bipartite Weighted Vertex Cover Algorithm}
\label{sec:basicbipartiteMWVC}

It is well-known that in bipartite graphs, the size of a maximum
matching is equal to the size of a minimum vertex cover (this is known
as K\H{o}nig's theorem~\cite{koenig_diestel,koenig31}). The theorem
was also independently discovered in \cite{egervary31} by
Egerv\'{a}ry, who also more generally proved that on node-weighted
bipartite graphs, the total value of an optimal (fractional)
$w$-matching is equal to the weight of a minimum weighted vertex
cover, where a fractional $w$-matching of a node-weighted graph
$G=(V,E,w)$ is an assignment of fractional values $y_e\geq 0$ to all
edges such that the edges of each node $v$ sum up to at most
$w(v)$. In both cases, the theorem can be proven in a constructive
way. Given a maximum matching or more generally a maximum fractional
$w$-matching, there is a simple (and efficient) algorithm to compute a
vertex cover of the same size or weight.

Moreover as shown in 
\cite{FMS15}, if we are given a good approximate matching or
$w$-matching with some additional properties, the constructive proof
of \cite{egervary31,koenig31} can be adapted to obtain a good
approximate (weighted) vertex cover. For the unweighted case,
this method is at the core of the \CONGEST model bipartite vertex cover
algorithms of \cite{opodis20_MVC}. We next describe how
to use this technique to approximate MWVC in the \CONGEST model.

Let $G=(V,E,w)$ be a node-weighted graph, where $w$ is a non-negative
node weight function. For a node $v\in V$ and a given fractional
$w$-matching $y_e$ for $e\in E$, we define the \emph{slack} of $v$ as
$s(v):=w(v)-\sum_{e:v\in e}y_e$. Given a fractional $w$-matching $y_e$
for $e\in E$, an \emph{augmenting path} is an odd length path
$P=(v_0,\dots,v_{2k+1})$ where the end nodes $v_0$ and $v_{2k+1}$ have
positive slack $s(v_0),s(v_{2k+1})>0$, and for $i \in \{1,2,...,k\}$,
each even edge $e=(v_{2i-1},v_{2i})$ has a positive fractional value
$y_e>0$. Assume that we are given a bipartite graph $G=(A\cup B,E,w)$
and that we are given a fractional $w$-matching $\vec{y}$ of $G$ such
that there are no augmenting paths of length at most $2k-1$ for some
integer $k\geq 1$. We can then apply the following algorithm to
compute a vertex cover $S$ of $G$. The algorithm computes disjoint
sets $A_0,A_1,\dots,A_k\subseteq A$ and disjoint sets
$B_1,\dots,B_k\subseteq B$. In the following for a set of nodes $X$,
we use $N_{y>0}(X)$ to denote the set of nodes that are connected to a
node in $X$ through an edge $e$ with $y_e>0$.

\bigskip

\begin{center}
  \begin{minipage}{1.0\linewidth}
    \vspace{-8pt}
    
    \begin{mdframed}[hidealllines=false, backgroundcolor=gray!10]
      \textbf{Basic Approximate Weighted Vertex Cover
        Algorithm}
      \vspace*{-8pt}
      \begin{enumerate}
      \item Define $A_0 :=\set{v\in A\,:\,s(v)>0}$ as the nodes in $A$
        with positive slack and $B_0:=\emptyset$.
      \item For every $i\in \set{1,\dots,k}$, define $B_i:=\set{v\in 
        	B\setminus \bigcup_{j=0}^{i-1} B_j\,:\,v\in N(A_{i-1})}$.
        
      \item For every $i\in \set{1,\dots,k}$, define $A_i:=\set{v\in 
      	A\setminus \bigcup_{j=0}^{i-1} A_j\,:\,v\in N_{y>0}(B_i)}$.

      \item Define $i^* := \arg\min_{i\in
          \{1,\dots,k\}} w(B_i)$.
      \item Output $S:=\bigcup_{i=1}^{i^*} B_i \cup
        \big(A \setminus \bigcup_{i=0}^{i^*-1} A_i\big)$.
      \end{enumerate}
    \end{mdframed}
  \end{minipage}
  \vspace{-8pt}
\end{center}

\smallskip
That is, the sets $A_0,B_1,A_1,B_2,A_2\dots$ are the levels of a BFS
traversal of the graph starting at the nodes in $A_0$ and where steps
from $B_i$ to $A_i$ have to be over an edge $e$ with positive
fractional value $y_e>0$.

\begin{lemma}\label{lemma:basicMWVCalg}
  Given a weighted bipartite graph $G=(A\cup B, E, w)$,
  an integer $k\geq 1$,
  and a fractional $w$-matching of $G$ with no augmenting paths of
  length at most $2k-1$, the above algorithm computes a
  $(1+1/k)$-approximate weighted vertex cover of $G$. Further, the
  above algorithm can be deterministically implemented in $O(D+k)$
  rounds in the \SUPPORTED model if the communication graph is also
  bipartite and has diameter at most $D$.
\end{lemma}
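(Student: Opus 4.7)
Define $U:=\bigcup_{j=1}^{i^*}B_j$ and $T:=\bigcup_{i=0}^{i^*-1}A_i$, so that $S=U\cup(A\setminus T)$. The plan is to establish three things: (i) $S$ is a valid vertex cover, (ii) $w(S)\le(1+1/k)\,w(S^*)$ where $S^*$ is an optimal MWVC, and (iii) the procedure runs deterministically in $O(D+k)$ \SUPPORTED rounds.

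For (i), consider any edge $\{a,b\}$ with $a\in A$ and $b\in B$. If $a\notin T$ then $a\in S$; otherwise $a\in A_j$ for some $0\le j\le i^*-1<k$, and the BFS rule $B_{j+1}\supseteq N(A_j)\setminus\bigcup_{\ell\le j}B_\ell$ forces $b\in\bigcup_{\ell=1}^{j+1}B_\ell\subseteq U\subseteq S$.

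For (ii), I would rely on Lemma \ref{lemma:duality} ($y(E)\le w(S^*)$) and show $w(S)\le(1+1/k)\,y(E)$. The central observation is a two-sided tightness claim: every $a\in A\setminus A_0$ satisfies $\sum_{e\ni a}y_e=w(a)$ directly from the definition of $A_0$, and every $b\in B_j$ with $1\le j\le k$ is tight, for otherwise the BFS path $v_0\in A_0$, $v_1\in B_1$, $v_2\in A_1$, $\dots$, $v_{2j-1}=b$ (whose even edges have $y_e>0$ by the definitions of $A_1,\dots,A_{j-1}$) would be an augmenting path of length $2j-1\le 2k-1$. Using tightness (and $A_0\subseteq T$ since $i^*\ge 1$) gives
\begin{align*}
w(U)&=\sum_{b\in U}\sum_{e\ni b}y_e=y(E[A,U]),\\
w(A\setminus T)&=\sum_{a\in A\setminus T}\sum_{e\ni a}y_e=y(E[A\setminus T,B]).
\end{align*}
Adding these and subtracting $y(E)=y(E[A,B])$ yields
\[
w(S)-y(E)=y(E[A\setminus T,U])-y(E[T,B\setminus U]).
\]
I would then argue the second term vanishes: for $a\in A_i\subseteq T$ with $0\le i\le i^*-1\le k-1$, the BFS construction ensures $N(a)\subseteq\bigcup_{\ell\le i+1}B_\ell\subseteq U$, leaving no edge from $T$ to $B\setminus U$. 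For the first term, if $y_{\{a,b\}}>0$ with $a\in A\setminus T$ and $b\in B_j\subseteq U$, the definition of $A_j$ places $a$ in $\bigcup_{\ell\le j}A_\ell$; since $a\notin T$, this forces $j=i^*$ and $a\in A_{i^*}$, so $y(E[A\setminus T,U])=y(E[A_{i^*},B_{i^*}])$. Tightness of $B_{i^*}$ gives $y(E[A_{i^*},B_{i^*}])\le\sum_{b\in B_{i^*}}\sum_{e\ni b}y_e=w(B_{i^*})$, and by the choice of $i^*$ as a minimizer together with averaging,
\[
w(B_{i^*})\le\frac{1}{k}\sum_{i=1}^{k}w(B_i)=\frac{1}{k}\,y\bigl(E\bigl[A,\textstyle\bigcup_{j=1}^{k}B_j\bigr]\bigr)\le\frac{1}{k}\,y(E).
\]
Combining yields $w(S)\le(1+1/k)\,y(E)\le(1+1/k)\,w(S^*)$.

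For (iii), membership in $A_0$ is decided locally in one round from each node's own weight and incident $y$-values. The $2k$ alternating layers $B_1,A_1,\dots,B_k,A_k$ are then built one round per layer, each node only needing to check whether some neighbour was added in the previous round (and, for $A_i$, whether the connecting edge carries $y_e>0$). Finally, the $k$ sums $w(B_1),\dots,w(B_k)$ together with their minimum can be pipelined over a BFS tree of the diameter-$D$ communication graph in $O(D+k)$ rounds; broadcasting $i^*$ adds $O(D)$, after which each node decides membership in $S$ locally. The whole procedure is deterministic.

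The main obstacle, I expect, lies in the two structural identities $y(E[T,B\setminus U])=0$ and $y(E[A\setminus T,U])=y(E[A_{i^*},B_{i^*}])$: both come from a careful, contrapositive use of the BFS definitions combined with the tightness of every $B_j$ with $j\le k$, which is where the no-short-augmenting-paths hypothesis is really used. Once these are in place, the approximation estimate reduces to a short calculation and the implementation is a standard pipelined aggregation.
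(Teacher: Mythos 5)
Your proposal is correct and follows essentially the same route as the paper: tightness of every $B_j$ via the no-short-augmenting-path hypothesis, tightness of $A\setminus A_0$ by definition of slack, the BFS structural facts for both coverage and for locating the doubly-counted positive edges at $B_{i^*}$, and the averaging over the $k$ disjoint levels; you merely make the paper's double-counting bound $w(S)\leq y(E)+w(B_{i^*})$ explicit through the identity $w(S)-y(E)=y(E[A\setminus T,U])-y(E[T,B\setminus U])$. One small omission in the implementation: before deciding membership in $A_0$ each node must know whether it lies in $A$ or $B$, so the algorithm first computes the bipartition by $2$-coloring the bipartite communication graph over a BFS tree in $O(D)$ rounds (the same tree you use for the pipelined aggregation), which is exactly where the $O(D)$ term comes from in the paper.
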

\begin{proof}
  We first prove that $S$ is a valid vertex cover. For this, we need
  to show that there is no edge between $A\setminus S$ and
  $B \setminus S$. We have $A\setminus S = \bigcup_{j=0}^{i^*-1} A_j$
  and $B\setminus S=\bigcup_{j=i^*+1}^{k} B_j\cup \overline{B}$, where
  $\overline{B}=B\setminus \bigcup_{i=1}^k B_i$. We therefore need to
  show that there cannot be an edge between a set $A_j$ for $j<i^*$
  and $\sum_{j=i^*+1}^k B_j\cup \overline{B}$. However, by
  construction, all neighbors of nodes in $A_j$ are in
  $B_1,\dots,B_{i^*}$ and we can thus conclude that $S$ is a vertex
  cover.

  We next show that $S$ is a $(1+1/k)$-approximate weighted vertex
  cover of $G$. First observe that for all $i \in \{1,2..,k\}$, all
  nodes $v$ in $B_i$ are saturated nodes with zero slack, i.e., $
  w(v)= \sum_{e\in E(v)}y_e$. From the construction of the sets
  $A_0,B_1,A_1,\dots$, we otherwise get an augmenting path of length
  at most $2k-1$. Moreover since the sets of edges incident to the sets
  $B_1,\dots,B_k$ are disjoint, the sum of the fractional values of
  all those edges is at most $y(E)=\sum_{e\in E} y_e$. By the choice
  of $i^*$, we therefore know $w(B_{i^*})=\sum_{v\in
    B_{i^*}}\sum_{e\in E(v)} y_e \leq y(E)/k$. Moreover, from the BFS
  construction of the sets $A_i$ and $B_i$ it follows that the
  only edges $e$ with $y_e>0$ for which both nodes are in $S$ are
  edges that are incident to nodes in $B_i^*$. We can therefore
  conclude that $w(S)\leq y(E) + w(B_{i^*})\leq (1+1/k)\cdot
  y(E)$. The approximation ratio now follows from LP duality (i.e.,
  from \Cref{lemma:duality}).

  Finally, we discuss how the above algorithm can be efficiently
  implemented in time $O(D+k)$ rounds in the \SUPPORTED model, where
  $D$ is the diameter of the communication graph $H$. In $O(D)$ rounds,
  one can compute a BFS spanning tree of $H$ and use it to compute the
  bipartition of the nodes of $H$ and thus of $G$ into sets $A$ and
  $B$. Then in $O(k)$ rounds, the algorithm constructs the sets $A_i$ and $B_i$ for $i \in \{1,2..,k\}$ by running the first $2k$
  iterations of parallel BFS on $G$ starting from set $A_0$ (where edges
  from $B_i$ to $A_i$ need to have positive fractional values).
  Finally in another $O(D+k)$ rounds, we use the precomputed BFS spanning tree
  on $H$ and a standard pipelining scheme for the root to compute 
  the weights of all the sets $B_i$, determine the index $i^*$ of the
smallest weight amongst them, and broadcast it to all nodes in $G$. 
\end{proof}

We remark that although the above algorithm only requires a BFS traversal from all nodes in $A_0$ for $k$ levels, the algorithm still requires time $D$ for two reasons. First, the algorithm needs to know the bipartition $A\cup B$ of $G$ and computing the bipartition requires $\Omega(D)$ time. Second, even if the bipartition is given initially, the algorithm still needs $\Omega(D)$ time to determine the optimal level $i^*$.

\subsection{Getting Rid of Short Augmenting Paths}
\label{sec:noshortpaths}
The basic MWVC algorithm described in \Cref{sec:basicbipartiteMWVC}
basically converts a good approximation of fractional $w$-matching
into a good MWVC approximation. However the algorithm needs a
fractional $w$-matching with the additional property that there are no
short augmenting paths. For the unweighted setting, there exists a
randomized \CONGEST algorithm to compute an integral matching with no
short augmenting paths~\cite{lotker15}. It is however not clear if the
algorithm of \cite{lotker15} can be generalized to the fractional
$w$-matching problem. Further, even in the unweighted case, we do not
have a deterministic \CONGEST algorithm to compute such a matching. As
in the deterministic, unweighted MVC algorithm of \cite{opodis20_MVC},
we therefore use a different approach. In the unweighted setting, we
first compute a $(1-\delta)$-approximate matching that can potentially
have short augmenting paths. We then get rid of those short augmenting
paths by removing at least one unmatched node or both nodes of a
matching edge from the graph. The removed nodes are at the end added
to the vertex cover to make sure that all edges are covered. The
selection of a smallest possible number of unmatched nodes and
matching edges that hit all short augmenting paths can be phrased as a
minimum set cover problem, which we can approximate efficiently in the
\CONGEST model. In the weighted case, we use a generalization of this
approach. Because of the weights, the process and its analysis however
becomes more subtle and we have to be more careful.

 Assume that we want to compute a $(1+O(\eps))$-approximate
weighted vertex cover for a node-weighted bipartite graph
$G=(A\cup B, E, w)$. In a first step, we compute a
$(1-\delta)$-approximate fractional $w$-matching
$\vec{y}:=\set{y_e : e\in E}$ of $G$ for some parameter
$\delta\ll\eps$. We can do this efficiently by using
\Cref{thm:fractional_wmatching}. The fractional $w$-matching $\vec{y}$
however might have short augmenting paths. In a second step, we then
convert our graph $G$ and the fractional $w$-matching $\vec{y}$ such
that we obtain an instance with no short augmenting paths and that we
can thus apply \Cref{lemma:basicMWVCalg}. More concretely, we
decrease some of the weights $w(v)$ and some of the fractional values
$y_e$ such that for the resulting weights $w'(v)$ and the resulting
$w'$-matching $\vec{y}'$, the graph $G$ has no short augmenting paths
and such that a $(1+\eps)$-approximate weighted vertex cover of $G$
with the weights $w'(v)$ is a $(1+O(\eps))$-approximate weighted
vertex cover of $G$ for the original weights. We next describe the
main ideas of this transformation.

Formally, the conversion can be defined by a set
$X\subseteq \set{v\in A\cup B : s(v)>0}$ of nodes with positive slack
and a set $F\subseteq \set{e\in E : y_e>0}$ of edges with positive
fractional value. The new fractional values $y_e'$ and the new weights
$w'(v)$ are defined as follows:
\begin{equation}
  \label{eq:conversion}
  y_e':=
  \begin{cases}
    0 & \text{if } e\in F\\
    y_e & \text{if } e\not\in F
  \end{cases}\ ,
  \quad\quad\quad
  w'(v) :=
  \begin{cases}
    w(v) - s(v) - y(E(v)\cap F) & \text{if }  v\in X  \\
    w(v) - y(E(v)\cap F) & \text{if } v\notin X
  \end{cases}
\end{equation}

\begin{lemma}\label{lemma:conversion}
  Any augmenting path of $G$ w.r.t.\ the weight function $w'$ and the
  fractional $w'$-matching $\vec{y}'$ is also an augmenting path
  w.r.t.\ the original weight function $w$ and the original fractional
  $w$-matching $\vec{y}$.
\end{lemma}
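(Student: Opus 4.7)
The plan is a direct verification by splitting the two defining conditions of an augmenting path and handling them separately. Recall that a path $P=(v_0,\dots,v_{2k+1})$ is augmenting with respect to a pair $(w,\vec{y})$ if (i) the endpoints $v_0, v_{2k+1}$ have positive slack, i.e., $w(v_0)-\sum_{e\ni v_0} y_e>0$ and analogously for $v_{2k+1}$, and (ii) every even edge $e=(v_{2i-1},v_{2i})$ satisfies $y_e>0$. I would show that each of these two conditions, when it holds for $(w',\vec{y}')$, also holds for $(w,\vec{y})$.

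The edge condition is immediate. If $e$ is an even edge on $P$ with $y_e'>0$, then by the definition of $\vec{y}'$ in \eqref{eq:conversion} we must have $e\notin F$, and hence $y_e=y_e'>0$. So step one of the proof is just a one-line reference to the first case of \eqref{eq:conversion}.

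The endpoint condition is where the actual computation lies and should be done by a small case analysis on whether an endpoint $v$ lies in $X$ or not. Let $s'(v):=w'(v)-\sum_{e\ni v} y_e'$ denote the slack in the converted instance. Using that $\sum_{e\ni v} y_e' = \sum_{e\ni v} y_e - y(E(v)\cap F)$ (again by the first case of \eqref{eq:conversion}), I would plug the two cases of $w'(v)$ into the definition of $s'(v)$:
\begin{itemize}
\item If $v\in X$, then $w'(v)=w(v)-s(v)-y(E(v)\cap F)$, and the $y(E(v)\cap F)$ terms cancel, leaving $s'(v)=w(v)-\sum_{e\ni v} y_e - s(v)=s(v)-s(v)=0$.
\item If $v\notin X$, then $w'(v)=w(v)-y(E(v)\cap F)$ and again the $y(E(v)\cap F)$ terms cancel, giving $s'(v)=w(v)-\sum_{e\ni v} y_e=s(v)$.
\end{itemize}
Consequently, any node $v$ with $s'(v)>0$ must satisfy $v\notin X$ and $s(v)=s'(v)>0$. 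Applied to $v_0$ and $v_{2k+1}$, this yields the endpoint condition for $(w,\vec{y})$.

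Combining the two parts, $P$ is also augmenting for $(w,\vec{y})$, which is the claim. I do not expect any real obstacle here; the only thing one must be slightly careful about is verifying that the $y(E(v)\cap F)$ adjustment cancels cleanly in both cases, which it does precisely because $w'$ was defined to subtract exactly that quantity.
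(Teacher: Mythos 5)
Your proof is correct and follows essentially the same route as the paper: verify that even edges with $y'_e>0$ must avoid $F$ (so $y_e>0$), and that slack cannot increase under the conversion, so positive-slack endpoints w.r.t.\ $(w',\vec{y}')$ also have positive slack w.r.t.\ $(w,\vec{y})$. Your version just makes the slack comparison exact (computing $s'(v)=0$ for $v\in X$ and $s'(v)=s(v)$ otherwise) where the paper argues it monotonically, which is a fine and slightly more explicit way to present the same argument.
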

\begin{proof}
  First note that whenever we decrease a value $y_e$ to
  $y_e'<y_e$ for some edge $e=\set{u,v}$, we also decrease the weights
  of $u$ and $v$ by the
  same amount. Therefore, the slack of a node $v$ w.r.t.\ $w'$ and $\vec{y}'$
  cannot be larger than the slack of $v$ w.r.t.\ $w$ and
  $\vec{y}$. Therefore any odd-length path $P$ in $G$ that starts and ends at a node
  with positive slack w.r.t.\ $w'$ and $\vec{y}'$ also starts and
  ends at a node with positive slack w.r.t.\ $w$ and
  $\vec{y}$. Further, if every even edge $e$ of such a path $P$ has a
  positive fractional value $y_e'>0$, then it also holds that $y_e>0$.
\end{proof}

Note that the definition of $w'$ and $\vec{y}'$ in
\eqref{eq:conversion} guarantees that all nodes $v\in X$ have slack
$0$ w.r.t.\ the new weights $w'$ and the new fractional values
$\vec{y}'$. Consider some augmenting path $P=(v_0,\dots,v_{2\ell+1})$
of $G$ w.r.t.\ $w$ and $\vec{y}$. If we have $v_0\in X$ or
$v_{2\ell+1}\in X$ or if we have $e\in F$ for one of the even edges
$\set{v_{2i-1},v_{2i}}$ of $P$, then $P$ is not an augmenting path of
$G$ w.r.t.\ $w'$ and $\vec{y}'$. In order to get rid of all short
augmenting paths, we therefore need to choose one of the end nodes or
one of the even edges of each such path and add them to $X$ or $F$. We
can then use \Cref{lemma:basicMWVCalg} to efficiently compute a good
vertex cover approximation for $G$ w.r.t.\ the new weights
$w'$. The quality of such a vertex cover w.r.t.\ the original weights
$w$ can be bounded as follows.

\begin{lemma}\label{lemma:approxpreserved}
  Let $S^*$ be an optimal weighted vertex cover of  $G=(V,E)$ w.r.t.\
  the weights $w$ and assume that for some $\alpha\geq 1$, $S$ is an
  $\alpha$-approximate weighted vertex cover of $G$ w.r.t.\ the weights
   $w'$. It then holds that
  \[
    w(S) \leq \alpha \cdot w(S^*) + s(X) + y(F),\qquad
    \text{where } s(X):=\sum_{v\in X} s(v).
  \]
\end{lemma}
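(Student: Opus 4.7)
The plan is to compare $w(S)$ against $w'(S)$, pass to the approximation guarantee to bound $w'(S)$ in terms of $w'(S^*)$, and then translate $w'(S^*)$ back to $w(S^*)$ using how the weights were decreased in \eqref{eq:conversion}. The two weight functions differ only by the ``slack removed at $X$'' and the ``fractional weight removed at $F$'', so the error terms that arise will naturally line up with $s(X)$ and $y(F)$; the only subtlety is an asymmetry in how an edge of $F$ is covered by $S$ versus $S^*$, and this is precisely where the hypothesis $\alpha \geq 1$ will be used.

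Concretely, the definition in \eqref{eq:conversion} gives, for every node $v$, the identity
\[
w(v) - w'(v) \;=\; \mathds{1}[v\in X]\cdot s(v) + y(E(v)\cap F).
\]
Summing this over $v\in S$ and observing that every edge $e\in F$ is counted for each of its (at most two) endpoints that lies in $S$, I obtain
\[
w(S) \;\leq\; w'(S) + s(X) + 2\,y(F),
\]
since $\sum_{v\in S\cap X}s(v)\leq s(X)$ and $\sum_{v\in S} y(E(v)\cap F) = \sum_{e\in F} y_e\,|e\cap S| \leq 2y(F)$. Applying the same identity to $S^*$ but in the opposite direction, and using that $S^*$ is a vertex cover, each edge $e\in F$ has $|e\cap S^*|\geq 1$, so $\sum_{v\in S^*} y(E(v)\cap F)\geq y(F)$, yielding
\[
w'(S^*) \;\leq\; w(S^*) - y(F),
\]
where I also drop the nonnegative term $\sum_{v\in S^*\cap X}s(v)$.

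It remains to plug in the $\alpha$-approximation hypothesis. Since $S^*$ is a valid vertex cover of $G$ regardless of the weight function, and $S$ is $\alpha$-approximate with respect to $w'$, the weight of $S$ under $w'$ is at most $\alpha$ times the optimum for $w'$, which is itself at most $w'(S^*)$. Hence $w'(S)\leq \alpha\, w'(S^*)\leq \alpha\bigl(w(S^*) - y(F)\bigr)$. Combining with the first bound:
\[
w(S) \;\leq\; \alpha\bigl(w(S^*)-y(F)\bigr) + s(X) + 2\,y(F) \;=\; \alpha\, w(S^*) + s(X) + (2-\alpha)\,y(F).
\]
Because $\alpha\geq 1$ by assumption, $(2-\alpha)\leq 1$ and the stated bound $w(S)\leq \alpha\, w(S^*) + s(X) + y(F)$ follows.

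The only nontrivial step is recognizing that $S$ and $S^*$ play asymmetric roles with respect to $F$: on $S$ we pay $2y(F)$ in the worst case because both endpoints of an edge in $F$ may be selected, while on $S^*$ we save at least $y(F)$ because every such edge must be covered. The factor $(2-\alpha)$ that appears from telescoping these two contributions is exactly where the hypothesis $\alpha\geq 1$ is needed; without it the error in the lemma would be $(2-\alpha)y(F)$ rather than $y(F)$.
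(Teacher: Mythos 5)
Your proof is correct and follows essentially the same route as the paper: bound $w(S)\leq w'(S)+s(X)+2y(F)$, use the $\alpha$-approximation guarantee together with the fact that the $w'$-optimum is at most $w(S^*)-y(F)$ (since $S^*$ covers every edge of $F$), and absorb the leftover $(2-\alpha)y(F)$ using $\alpha\geq 1$. The paper's version is just terser, routing through an optimal $w'$-cover $S'$ instead of comparing directly to $w'(S^*)$, which is the same argument.
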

\begin{proof}
  Let $S'$ be an optimal weighted vertex cover of $G$ w.r.t.\ the
  weights $w'$. Because any vertex cover must contain at least one
  node of every edge in $F$, we have $w'(S')\leq w(S^*)-y(F)$. We
  therefore have
  \[
    w(S) \leq w'(S) + s(X) + 2y(F) \leq \alpha w'(S') + s(X) +
    2y(F)
    \leq \alpha w(S^*) + s(X) + y(F).\qedhere
  \]
\end{proof}

In order to optimize the approximation, we thus need to determine the
sets $X$ and $F$ such that $s(X)+y(F)$ is as small as possible and
such that we `cover' all short augmenting paths. The problem of
finding the best possible sets $X$ and $F$ can naturally be phrased as a
weighted set cover problem. One can further show that if the parameter
$\delta$ that determines the quality of the fractional $w$-matching
$\vec{y}$ is chosen sufficiently small (but still as
$\delta=\poly(\eps/\log n)$), even a logarithmic approximation to this
weighted set cover instance guarantees that
$s(X)+y(F)=O(\eps\cdot w(S^*))$. Further, by sequentially going over the possible short
augmenting path lengths and adapting existing algorithms of
\cite{lotker15} and \cite{opodis20_MVC}, a variant of the greedy
algorithm for this weighted set cover instance can be implemented
efficiently in the \CONGEST model on $G$. Given an efficient algorithm
to find appropriate sets $X$ and $F$, the claim of
\Cref{thm:approxMWVCbipartite} then follows almost immediately by
combining with \Cref{thm:diameterreduction} and
\Cref{lemma:basicMWVCalg}. The details appear in \Cref{sec:MWVC}.

\subsection{Generalization to Non-Bipartite Graphs}
\label{sec:overviewMWVCgeneral}

For approximating the MWVC problem in general graphs, we employ a
standard approach that is for example described in\cite{hochbaum97}. We describe the details for completeness. The minimum fractional (weighted) vertex
cover problem is the natural LP relaxation of the minimum (weighted)
vertex cover problem. That is, a fractional vertex cover of a graph
$G=(V,E)$ is an assignment of values $x_v\in [0,1]$ to all nodes such
that for every edge $\set{u,v}$, $x_u+x_v\geq 1$. While the
integrality gap of the (weighted and unweighted) vertex cover can be
arbitrarily close to $2$, it is well-known that there are optimal
fractional solutions that are \emph{half-integral}
That is, there are
optimal fractional solutions such that for all nodes $v$,
$x_v\in\set{0,1/2,1}$. Given such a fractional solution, let $S_x$ be
the set of nodes $v$ with $x_v=x$ for $x\in \set{0,1/2,1}$. Let $I_{1/2}$
be an independent set of the induced subgraph $G[S_{1/2}]$ of the
half-integral nodes. It is not hard to see that
$S:=S_1 \cup S_{1/2}\setminus I_{1/2}$ is a vertex cover of $G$ and if
$w(I_{1/2})\geq\lambda \cdot w(S_{1/2})$, then the set $S$ is a
$(2-2\lambda)$-approximate solution for the MWVC problem on $G$ with
weights $w$. If the half-integral fractional solution is only an
$\alpha$-approximate fractional weighted vertex cover, the resulting
approximation is $(2-2\lambda)\cdot \alpha$.

The core to proving \Cref{thm:approxMWVCgeneral} is therefore to first
compute a $(1+\eps)$-approximate half-integral fractional solution for
a given weighted graph $G=(V,E,w)$. The following lemma uses a
standard approach to achieve this by first computing an approximate
solution to the (integral) MWVC problem for the bipartite double cover
$G_2$ of $G$ (see definition in \Cref{sec:model}).

\begin{lemma}\label{lemma:doublecoverVC}
  Let $G=(V,E,w)$ be a weighted graph and let $G_2=(V_2,E_2)$ be the bipartite
  double cover of $G$, where each node $(v,i)\in V_2$ for $v\in V$
  gets assigned weight $w(v)$. Let $S$ be a vertex cover of $G_2$ and
  define $x_v:=|\set{(v,0),(v,1)}\cap S|/2$ for every $v\in V$. If $S$
  is an $\alpha$-approximate weighted vertex cover of $G_2$ for some
  $\alpha\geq 1$, then $\vec{x}=\set{x_v : v\in V}$ is a half-integral
  $\alpha$-approximate fractional weighted vertex cover of $G$.
\end{lemma}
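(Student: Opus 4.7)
The plan is to verify the three required properties of $\vec{x}$ in turn: half-integrality, feasibility, and approximation quality. Half-integrality is immediate, since $|\{(v,0),(v,1)\} \cap S| \in \{0,1,2\}$ and hence $x_v \in \{0,\tfrac12,1\}$.

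For feasibility, I would fix an edge $\{u,v\} \in E$ and use the two corresponding edges of $G_2$, namely $e_1 = \{(u,0),(v,1)\}$ and $e_2 = \{(u,1),(v,0)\}$. Since $S$ is a vertex cover of $G_2$, both $e_1$ and $e_2$ must be hit, so
\[
  \mathbf{1}[(u,0)\in S] + \mathbf{1}[(v,1)\in S] \ge 1 \quad\text{and}\quad \mathbf{1}[(u,1)\in S] + \mathbf{1}[(v,0)\in S] \ge 1.
\]
Adding the two inequalities gives $|\{(u,0),(u,1)\}\cap S| + |\{(v,0),(v,1)\}\cap S| \ge 2$, i.e.\ $x_u+x_v \ge 1$.

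For the approximation ratio, first compute the total weight of $\vec{x}$ under $w$. Since $(v,0)$ and $(v,1)$ both carry weight $w(v)$ in $G_2$, we get
\[
  \sum_{v\in V} w(v)\, x_v \;=\; \tfrac12 \sum_{v\in V} w(v)\,\bigl|\{(v,0),(v,1)\}\cap S\bigr| \;=\; \tfrac12\, w(S).
\]
By assumption $w(S)\le \alpha\cdot \OPT_{\mathrm{VC}}(G_2)$, where $\OPT_{\mathrm{VC}}(G_2)$ denotes the minimum weighted (integer) vertex cover of $G_2$. The key remaining step — and the only nontrivial one — is to show that $\OPT_{\mathrm{VC}}(G_2) \le 2\cdot \OPT_{\mathrm{frac}}(G)$, where $\OPT_{\mathrm{frac}}(G)$ denotes the minimum weighted fractional vertex cover of $G$.

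To establish this inequality, let $\tilde{\vec{x}}$ be any optimal fractional weighted vertex cover of $G$ and lift it to $G_2$ by assigning value $\tilde x_v$ to both copies $(v,0)$ and $(v,1)$. This is a feasible fractional vertex cover of $G_2$ (the constraint for $\{(u,i),(v,1-i)\}$ reads $\tilde x_u + \tilde x_v \ge 1$, which holds in $G$), and its total weight equals $2\sum_v w(v)\tilde x_v = 2\,\OPT_{\mathrm{frac}}(G)$. Because $G_2$ is bipartite, its vertex cover LP is integral (König--Egerváry, see \Cref{lemma:duality} and standard LP/matching duality on bipartite graphs), so the optimal \emph{integer} vertex cover of $G_2$ has weight at most this fractional value. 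Combining, $\tfrac12 w(S) \le \tfrac{\alpha}{2}\OPT_{\mathrm{VC}}(G_2) \le \alpha\cdot \OPT_{\mathrm{frac}}(G)$, which is exactly the $\alpha$-approximation guarantee for $\vec{x}$. The only subtlety I anticipate is justifying the integrality of the vertex cover LP on bipartite graphs cleanly without detouring into a long background discussion; I would cite König--Egerváry and the duality to the maximum weighted matching already referenced in the paper.
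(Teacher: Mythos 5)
Your proof is correct and follows essentially the same route as the paper: both arguments lift fractional vertex covers between $G$ and its double cover $G_2$ and invoke the integrality of the weighted vertex cover LP on bipartite graphs (Egerv\'ary/K\H{o}nig) to compare the integral optimum of $G_2$ with the fractional optimum of $G$. The only difference is cosmetic --- you verify feasibility of $\vec{x}$ explicitly and use just one direction of the factor-$2$ correspondence, whereas the paper establishes the exact factor-$2$ equality of the two fractional optima and leaves feasibility implicit.
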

\begin{proof}
  We first show that the weight $\sum_{v\in V} w(v)\cdot z_v$ of an
  optimal fractional weighted vertex cover $\vec{z}$ of $G=(V,E,w)$ is
  exactly half the weight of an optimal fractional weighted vertex
  cover of $G_2$ with weights assigned as defined by the claim of the
  lemma. To see this, let $\vec{z}$ be a fractional vertex cover of
  $G$. We can then get a valid fractional vertex cover of $G_2$ by
  setting $z_{(v,0)}=z_{(v,1)}=z_v$ for every node $v\in V$ of $G$ and
  the corresponding nodes $(v,0)$ and $(v,1)$ in $G_2$. In the other
  direction, for a fractional vertex cover $\vec{z}$ in $G_2$, we
  obtain a fractional vertex cover of $G$ by setting
  $z_v:=(z_{(u,0)}+z_{(u,1)})/2$. Note that because $G_2$ is a bipartite
  graph, an optimal (integral) weighted vertex cover of $G_2$ is also
  an optimal fractional weighted vertex cover of $G_2$ and therefore
  the vertex cover $S$ is an $\alpha$-approximate fractional weighted
  vertex cover of $G_2$. The fractional vertex cover $\vec{x}$ of $G$ as
  given by the lemma statement is of half the weight of $S$ in $G_2$
  and it therefore is an $\alpha$-approximate fractional weighted
  vertex cover of $G$. Clearly, $\vec{x}$ is half-integral.
\end{proof}

The proofs of \Cref{thm:approxMWVCgeneral,cor:approxMWVCcoloring}
appear in \Cref{sec:generaldetails}.

\subsection{Weighted Matching Approximation}
\label{sec:matching}
We provide a randomized and a deterministic \CONGEST
algorithm to approximate the MWM problem. Both algorithms are based on
the following key idea that was developed for the unweighted maximum
matching problem by Lotker, Patt-Shamir, and Pettie~\cite{lotker15}
and that was extended by Bar-Yehuda et al.~\cite{yehuda17}. We
iteratively adapt an initial matching $M_0$ of a given weighted graph
$G=(V,E,w)$ as follows. We repeatedly sample bipartite subgraphs of
$G$ and we then find a good matching on this bipartition to improve
the matching of $G$. In \cite{lotker15,yehuda17}, the matching is
improved by finding short augmenting paths in the sampled bipartite
subgraph and augmenting the existing matching along those
paths. While, as discussed below, also for maximum weighted matching,
it is in principle possible to find augmenting paths and cycles, in
the weighted case, we do not know how to do this efficiently in the
\CONGEST model. Instead, we use the bipartite MWM approximation
algorithm of \cite{ahmadi18} to find a good matching in each sampled
bipartite graph. Unlike when using augmenting paths, this approach can
potentially also lead to a worse matching (if the existing matching is
already a very good matching of the sampled bipartite graph). We will
however see that when computing a sufficiently good approximation in
each sampled bipartition, we can use the approach to improve the
matching of $G$ sufficiently often.

Before explaining our algorithms in more detail, we need to introduce
the notion of augmenting paths and cycles for weighted matchings. Given a
matching $M$, a path or cycle in which the edges alternate between
edges $\in M$ and edges $\not \in M$ is called an \emph{alternating
	path or cycle} w.r.t.\ matching $M$. An alternating path or cycle is
called an \emph{augmenting path or cycle} w.r.t.\ $M$ if swapping the
matching edges with the non-matching edges increases the weight of
the matching. This increase is termed the \textit{gain} of an
augmenting path/cycle w.r.t.\ $M$ (we will omit the qualification
'w.r.t.' if it is clear from the context). By definition, the gain of
an augmenting path is positive. Note that alternating cycles (and
therefore also augmenting cycles) are always of even length.

Let $M^*$ be a maximum weighted matching in $G$. Consider the
symmetric difference $F=M^*\triangle M$ of $M^*$
and some arbitrary matching $M$. The set $F$ consists of (vertex-disjoint)
alternating paths and cycles where the edges alternate between $M$ and
$M^*$. All of those alternating paths and cycles are either augmenting
paths/cycles, or their $M$-edges have exactly the same weight as their
$M^*$-edges. The total gain of all the (vertex-disjoint) augmenting paths and cycles
induced by $F$ is therefore exactly $w(M^*)-w(M)$.

Dealing with augmenting paths and cycles in the context of
edge-weighted graphs is much more challenging than in the unweighted
case. Every augmenting path in unweighted graphs improves the matching
by exactly one and augmenting cycles do not exist. Further, the
classic results of Hopcraft and Karp \cite{HopcroftK73} imply that if
the shortest augmenting path is of length $\ell$, augmenting over an
augmenting path of length $\ell$ cannot create new augmenting paths of
length $\leq\ell$ and after augmenting over a maximal set of
vertex-disjoint augmenting paths of length $\ell$, one gets a
matching with a shortest augmenting path length 
$\geq \ell+2$. If in some matching $M$, the shortest
augmenting path has length $\ell = 2k - 1 \geq 1$, we further know that $M$ already is
a $(1 - \frac{1}{k})$-approximation of the optimal matching. The
$(1-\eps)$-approximation algorithm for unweighted matching in
\cite{lotker15} heavily relies on all those properties.

Some of the properties of augmenting paths for unweighted matchings
also carry over to the weighted case. We will later show that there exists
a set of vertex-disjoint augmenting paths and cycles of length at most
$\ell$ for some $\ell=O(1/\eps)$ such that augmenting over all those
paths/cycles improves the current matching by at least
$\frac{\eps}{4} \cdot w(M^*)$ (see \Cref{lemma:augPath}). Basically,
the existence of this collection of short augmenting paths can be
proven by breaking long augmenting paths and cycles in the symmetric
difference $F=M\triangle M^*$ into short augmenting paths. However,
while in the unweighted case, a large set of vertex-disjoint short
augmenting paths can be computed efficiently in the \CONGEST model
(see \cite{yehuda17,lotker15}), it is not clear how to efficiently
compute such a set of augmenting paths/cycles for weighted matchings
in the \CONGEST model, even in bipartite graphs (there are efficient
algorithms in the \LOCAL model and this has also been exploited in the
literature, e.g., in
\cite{stoc18_edgecoloring,lotker15,nieberg08}). Fortunately, there
still is an efficient \CONGEST algorithm for computing a
$(1-\eps)$-approximate weighted matching in bipartite
graphs~\cite{ahmadi18} (see also Lemma
\ref{lemma:ApproxAndRound}). Unlike the existing \CONGEST algorithms
that are based on the Hopkroft/Karp framework, the algorithm of
\cite{ahmadi18} is even deterministic. It is however not based on
augmenting along short augmenting paths or cycles. Instead, it is
based on linear programming and on a deterministic rounding scheme
that was introduced in \cite{rounding}. As a result, the matching
computed by the algorithm of \cite{ahmadi18} does not have the nice
structural properties of the matchings computed by algorithms based on
the Hopkraft/Karp framework (such as not having any short augmenting
paths or cycles).

\para{Our randomized weighted matching algorithm.} The general idea of our approach is as simple as the algorithm
for the unweighted case in \cite{lotker15}.\footnote{Our
	algorithm is essentially the same one as the one in
	\cite{lotker15,yehuda17}. We just replace the bipartite matching algorithm
	used as a subroutine. The analysis is then however
	different from the analysis in \cite{lotker15,yehuda17}.} We first describe
the randomized version of our algorithm. We start with an
initial matching $M_0$ of the given weighted graph $G=(V,E,w)$ and it
then consists of iterations $i=1,2,\dots$. In each
iteration, we update the given matching such that at the end of
iteration $i$, we have matching $M_i$. In each iteration $i$, we
sample a bipartite subgraph $H_i=(\hat{V}_i, \hat{E}_i)$ of $G$ as follows. Every
node $v\in V$ colors itself black or white independently with
probability $1/2$. An edge is called \textit{monochromatic} if both of
its endpoints have the same color, otherwise, we call the edge
\textit{bichromatic}. To preserve good intermediate results, we keep
all the matching edges of the previous matching not occurring in the
bipartition, i.e., we keep monochromatic matching edges. We call a
node \textit{free} regarding to matching $M$ if none of its incident
edges are $\in M$.

\smallskip

\begin{center}
	\begin{minipage}{1.0\linewidth}
		\vspace{-8pt}
		\begin{mdframed}[hidealllines=false, backgroundcolor=gray!10]
			\textbf{Construct bipartite subgraph
				\boldmath$H_i = (\hat{V}_i, \hat{E}_i)$ of
				$G$ based on matching $M_{i-1}$:}
			\vspace*{-8pt}
			
			\begin{enumerate}
				\item Color each node black or white.
				\item $\hat{V}_i := \{u \in V ~ | ~ u \text{ is free or } \exists \{u,v\} \in M_{i-1} \text{ s.t. } \{u,v\} \text{ is bichromatic}  \}$.
				\item $\hat{E}_i := \{ \{u, v\} \in E ~ | ~ u, v \in \hat{V}_i \text{ and } \{u, v\} \text{ is bichromatic} \}$.
			\end{enumerate}
		\end{mdframed}
	\end{minipage}
	\vspace{-8pt}
\end{center}

After sampling the bipartite subgraph $H_i$, we use the algorithm of
\cite{ahmadi18} to compute a $(1-\lambda)$-approximate weighted
matching of $H_i$ for a sufficiently small parameter $\lambda>0$
($\lambda$ will be exponentially small in $1/\eps$). We
then update the existing matching $M_{i-1}$ by replacing the
bichromatic matching edges with the matching edges of the newly
computed matching of $H_i$. Because there is a collection of short
augmenting paths and cycles with total gain
$\Theta(w(M^*)-w(M_{i-1}))$ (where $M^*$ is an optimal matching of
$G$), we have a reasonable chance of sampling such paths and cycles so
that the matching on $H_i$ can potentially be improved by a
sufficiently large amount. Note however that our algorithm does not
guarantee that the quality of the matching improves monotonically
during the algorithm. However, because the algorithm of
\cite{ahmadi18} has deterministic guarantees if we choose $\lambda$
sufficiently small, we have the guarantee that $w(M_i)$ cannot be much
worse than $w(M_{i-1})$. With the right choice of parameters, it turns
out that $2^{O(1/\eps)}$ iterations are sufficient to obtain a
$(1-\eps)$-approximate matching with at least constant
probability. The details appear in \Cref{sec:randomizedMWMdetails}.

\para{Our deterministic weighted matching algorithm.} The basic idea
of our deterministic algorithm is the same as for the randomized
algorithm. However, we now have to compute the bipartition into black
and white nodes in each iteration deterministically. For this purpose,
we show in \Cref{lemma:collections} that for some
$T=2^{O(1/\eps)} \cdot \ln n$, there exist a collection of
bipartitions $H_1, \hdots, H_T$ such that every path/cycle of length
at most $O(1/\eps)$ (and thus also every augmenting path/cycle of this
length) of $G$ appears in at least one of these bipartitions. Of
course, after changing the matching, also the set of augmenting paths
and cycles changes and one can therefore not just iterate over all
bipartitions $H_1,\dots,H_T$, improve the matching for each
bipartition by using a generic MWM approximation algorithm, and
guarantee that at the end the resulting matching is a sufficiently
good approximation. However, the property of $H_1,\dots,H_T$
guarantees that for a fixed initial matching $M$, when going over
all $T$ bipartitions, there exists one bipartition that can improve
the weight of $M$ by $\Theta(w(M)/T)$. We therefore proceed as
follows. We iterate $O(T)$ times through the sequence
$H_1, \hdots, H_T$ of bipartitions.  For each bipartition, we use the
(deterministic) algorithm of \cite{ahmadi18} to compute a
$(1-\lambda)$-approximate weighted matching of the current bipartite
graph. We however then switch to the new matching if the new matching
improves the old one by a sufficiently large amount. Checking if a
given bipartition leads to a sufficiently large improvement of the
current matching can be done efficiently by first applying the
diameter reduction technique given by \Cref{thm:diameterreduction}.
The details appear in \Cref{sec:deterministicMWMdetails}.

\subsection{Vertex Cover Lower Bound}
\label{sec:loweroverview}

Recall that G\"o\"os and Suomela in \cite{goeoes14_DISTCOMP} showed
that there exists a constant $\eps_0>0$ such that computing a
$(1+\eps_0)$-approximation of minimum (unweighted) vertex cover in
bipartite graphs of maximum degree $3$ requires $\Omega(\log n)$
rounds even in the \LOCAL model. We next describe the high-level
idea of extending this lower bound to show that for $\eps\leq \eps_0$,
computing a $(1+\eps)$-approximate vertex cover in bipartite graphs of
maximum degree $3$ requires $\Omega\big(\frac{\log n}{\eps}\big)$
rounds in the \LOCAL model. Assume that $G$ is the lower bound graph
of the proof of \cite{goeoes14_DISTCOMP}. We transform $G$ into a
bipartite graph $H$ by replacing each edge of $G$ with a path of length
$2k+1$ for some parameter $k=\Theta(1/\eps)$. One can then show that
every vertex cover $S_H$ of $H$ can be transformed (in $O(k)$ rounds
on $H$) into a vertex cover $S_H'$ of $H$ of size $|S_H'|\leq |S_H|$
such that $S_H'$ consists of exactly $k$ inner nodes of every path
replacing a $G$-edge and a vertex cover of $G$ (composed of the  
original $G$-nodes in $H$ that are contained in $S_H'$). By choosing
$k$ appropriately, one can then show that a $(1+\eps)$-approximate vertex
cover on $H$ induces a $(1+\eps_0)$-approximate vertex cover on
$G$. The lower bound then follows because the nodes of $G$ can locally
simulate graph $H$ such that $k=O(1/\eps)$ rounds on $H$ can be
simulated in $O(1)$ rounds on $G$. A detailed proof appears in
\Cref{sec:lower}.


\section{Technical Details: Weighted Vertex Cover Algorithms}
\label{sec:MWVC}

\subsection{Bipartite Weighted Vertex Cover} 
We are now going to provide the additional technical details that are necessary to prove \Cref{thm:approxMWVCbipartite}, which states that in bipartite graphs, one can deterministically compute a $(1+\eps)$-approximation of the MWVC problem in time $\poly\big(\frac{\log n}{\eps}\big)$. As discussed in \Cref{sec:diameter_reduction}, we first use \Cref{thm:diameterreduction} to cluster the graph into clusters of small weak diameter. Further, in \Cref{sec:basicbipartiteMWVC}, we gave a basic distributed MWVC approximation algorithm that can be run efficiently in all clusters. However, to run this algorithm on a weighted bipartite graph, we need a fractional $w$-matching with no short augmenting paths. In \Cref{sec:noshortpaths}, we described the high-level idea that we use to obtain such a fractional $w$-matching. Starting from a $(1-\delta)$-approximate fractional $w$-matching, where $\delta$ needs to be chosen sufficiently small, we reduce the weights of some nodes and the fractional values of some edges to obtain an instance with weights $w'$ and a fractional $\vec{y}'$-matching with no short augmenting paths so that we can then apply the MWVC algorithm of \Cref{sec:basicbipartiteMWVC}. We will now provide the details of this step.

Let us therefore assume that we are given a bipartite node-weighted graph $G=(A\cup B, E, w)$ and that for an appropriate $\delta>0$, we are given a $(1-\delta)$-approximate fractional $w$-matching $\vec{y}$ of $G$. For technical reasons, we further assume that for every $e\in E$, either $y_e=0$ or $y_e>1/n^c$ for some constant $c>0$. Similarly, we assume that for every node $v\in V$, $s(v)=0$ or $s(v)>1/n^c$. We can guarantee this requirement by making sure that every fractional value $y_e$ is an integer multiple of $1/n^c$. As long as $\delta\geq 1/\poly(n)$ (which is much smaller than what we need), this is straightforward to guarantee. As discussed in \Cref{sec:noshortpaths}, for getting rid of short augmenting paths at small cost, we need to find sets $X\subseteq \set{v\in A\cup B : s(v)>0}$ and $F\subseteq \set{e\in E : y_e>0}$ such that all augmenting paths of length at most $2k-1$ (for a given integer $k\geq 1$) are `covered' and $s(X)+y(F)$ is as small as possible. We now first show that we can find sets $X$ and $F$ such that $s(X)+y(F)$ is within a small factor of $s(X^*)+y(F^*)$ for best possible sets $X^*$ and $F^*$. To compute $X$ and $F$, we go through stages $d=1,3,\dots,2k-1$, where in stage $d$ we select sets $X_d$ and $F_d$ that cover all remaining augmenting paths of length $d$. In the end, we set $X=X_1 \cup \dots \cup X_{2k-1}$ and $F=F_1 \cup \dots \cup F_{2k-1}$. Let us now focus on one stage $d$ in the following part.

\subsubsection{Getting rid of augmenting paths of length \boldmath$d$.}

First note that since we covered all augmenting paths of length shorter than $d$ before stage $d$, at the beginning of stage $d$, there are no augmenting paths of length shorter than $d$. To simplify notation, we will in the following use $w$ and $\vec{y}$ to denote the node weights and the fractional edge values at the beginning of stage $d$. We therefore assume that w.r.t.\ $w$ and $\vec{y}$, $G$ has no augmenting paths of length shorter than $d$ and the goal is to find sets $X_d$ and $F_d$ to cover all augmenting paths of length $d$. The problem of finding such sets that minimizes $s(X_d)$ and $y(F_d)$ can be phrased as a minimum weighed set cover problem as follows. The ground set $\mathcal{P}_d$ is the set of all augmenting paths of length $d$ w.r.t.\ the fractional $w$-matching $\vec{y}$ in $G$. We define
\begin{eqnarray*}
  S_d & := & \set{ v\in A\cup B : s(v)>0 \land \text{$v$ is an end node of some augmenting path of length $d$}},\\
  Y_d & := & \set{ e\in E : y_e>0 \land \text{$e$ is an even edge of some augmenting path of length $d$}}.
\end{eqnarray*}
For each $v\in S_d$, let $P_v $ be the set of augmenting paths of
length $d$ for which node $v$ is one of the end nodes. Similarly, for
each $e\in Y_d$, let $P_e$ be the set of augmenting paths of length
$d$ for which it contains $e$ as an even edge. The collection of sets
$\mathcal{S}\in 2^{\mathcal{P}_d}$ is then made up of sets $P_v$ for
each node $v\in S_d $ and $P_e$ for each edge $e\in Y_d$.  The weight
of $P_v$ is $s(v)$ and the weight of $P_e$ is $ y_e$. A solution to
this set cover instance is a collection of sets $P_v$ for $v\in X_d$
and $P_e$ for $e\in F_d$ where $X_d\subseteq S_d $
 and $F_d\subseteq Y_d$,
  and such that $\bigcup_{v\in X_d}P_v\cup \bigcup_{e\in F_d}P_e=\mathcal{P}_d$. The weight of such a set cover is exactly $s(X_d)+ y(F_d)$. Note that since we assumed that all non-zero slack values $s(v)$ and fractional values $y_e$ are at least $1/\poly(n)$ and because we assumed that all node weights are upper bounded by a polynomial in $n$, the weights of all sets in our set cover instance are between $1/\poly(n)$ and $\poly(n)$. In the following, we use $\wmin$ and $\wmax$ to refer to the minimum and the maximum weight of a set in the current set cover instance.

\para{Layered structure of shortest augmenting paths.} To achieve our goal, we first have a look at the structure of the augmenting paths of length $d$ in $G=(A\cup B, E, w)$. For this, we define the following layered graph $H=(V_0\cup\dots\cup V_d, E_H)$. The layer $V_0:=\set{v \in A : s(v)>0}$ consists of the nodes in $A$  with positive slack. For every odd $i$,  layer $V_i$ consists of the nodes in $B\setminus \ \bigcup_{j=0}^{i-1} V_j$ 
 for which there exists an edge from a node in $V_{i-1}$. For every even $i\geq 2$, layer $V_i$ consists of the nodes in $A\setminus \bigcup_{j=0}^{i-1} V_j$ 
for which there exists an edge $e$ with $y_e>0$ from a node in $V_{i-1}$.  Finally, layer $V_d$ consists of nodes in $B$ with positive slack $s(v)$ such that there exists an edge from a node in $V_{d-1}$. Note that if the bipartition of the nodes of $G$ into $A$ and $B$ is known, the layers $V_0\dots,V_d$ can be computed in a simple parallel BFS exploration of $G$ in $d$ rounds. Note also that by definition of the layered structure, every path $p=(v_0,\dots,v_d)$ such that for every $i\in \set{0,\dots,d}$, $v_i\in V_i$, is an augmenting path. Further, because the shortest augmenting path length is $d$, every augmenting path of length $d$ must have this structure and we also know that all nodes $v$ in odd layers $V_i$ for $1<i<d$ have slack $s(v)=0$.

\para{Implementation of greedy weigted set cover algorithm.} We use a
variant of the greedy weighted set cover algorithm here to find the
sets $X_d$ and $F_d$ covering all the shortest augmenting paths in
$G$. We initially set $X_d=F_d=\emptyset$ and we call all nodes and
edges active. Whenever we add a node $v$ to $X_d$ we deactivate $v$
and if we add an edge $e$ to $F_d$, we deactivate $e$. The set of
active augmenting paths of length $d$ is defined as the set of paths
$(v_0,\dots,v_d)$ with $v_i\in V_i$ such that all nodes and edges of
the path are active.  Note that the sets $X_d$ and $F_d$ cover all the
shortest augmenting paths (i.e., they form a solution of the weighted
set cover instance we need to solve) if and only if the number of
active augmenting paths of length $d$ is zero.  The efficiency of a
set $P_v\in \calP_d$ for $v\in A\cup B$ is defined as
$|\set{P\in P_v : P\ \mathrm{active}}|/s(v)$ and the efficiency of a
set $P_e\in \calP_d$ for $e\in E$ is defined as
$|\set{P\in P_e : P\ \mathrm{active}}|/y_e$. Note that the efficiency
of a set $P_v$ or $P_e$ is the number of newly covered paths per
weight when adding $P_v$ or $P_e$ to the set cover (and thus $v$ to
$X_d$ or $e$ to $F_d$). Since every node or edge can be in at most
$\Delta^d$ different paths of length $d$, the minimum possible
efficiency is equal to $1/\wmax$ and the maximum possible efficiency
is equal to $\Delta^d/\wmin$. Computing the efficiencies requires to
count the number of active augmenting paths that pass through each
node and edge. This can be done in $O(d^2)$ rounds in the \CONGEST
model by slightly adapting an algorithm described in
\cite{yehuda17,lotker15}. The details appear in
\Cref{lemma:pathcounting} in \Cref{sec:pathcounting}. The algorithm
consists of phases $i=1,2,\dots,\big\lceil\log\big(\Delta^d\cdot\frac{\wmax}{\wmin}\big)\big\rceil$,
where in phase $i$ we add sets $P_v$ and $P_e$ with efficiency
$\geq 2^{-i}\cdot \frac{\Delta^d}{\wmin}$ to the set cover:

\medskip

\begin{center}
  \begin{minipage}{1.0\linewidth}
    \vspace{-8pt}
    \begin{mdframed}[hidealllines=false, backgroundcolor=gray!10]
      \textbf{Covering Paths of Length \boldmath$d$: Phase $i\in\set{1,\dots,\big\lceil\log\big(\Delta^d\cdot\frac{\wmax}{\wmin}\big)\big\rceil}$}\\[1mm]
      First go over level $\ell=0$ and then sequentially iterate over all odd levels $\ell=1,3,\dots,d$:\\
      Initially all nodes and edges are active.
      \vspace*{-2mm}
      \begin{enumerate}
      \item Count the number of active augmenting paths of length $d$ passing through each node and edge (by using \Cref{lemma:pathcounting}).
      \item If $\ell\in\set{0,d}$, for each active node $v \in V_{\ell}$ do
        \begin{itemize}
        \item Compute efficiency of set $P_v$ (note that $s(v)>0$ because $v\in V_0\cup V_{d}$)
        \item If efficiency of $P_v$ is $\geq 2^{-i}\cdot\frac{\Delta^d}{\wmin}$, add $v$ to $X_d$ and deactivate $v$,
        \end{itemize}
      \item If $\ell\in\set{1,\dots,d-2}$, for each edge $e=\set{u,v}$ with $u\in V_{\ell}$ and $v\in V_{\ell+1}$ do
        \begin{itemize}
        \item Compute efficiency of set $P_e$ (note that $y_e>0$ by construction of layers)
        \item If efficiency of $P_e$ is $\geq 2^{-i}\cdot\frac{\Delta^d}{\wmin}$, add $e$ to $F_d$ and deactivate $e$.
        \end{itemize}
      \end{enumerate}
    \end{mdframed}
  \end{minipage}
  \vspace{-8pt}
\end{center}

By iterating over the layers $V_\ell$, for $\ell= 0,..., d$, we can
add sets $P_v$ (or $P_e$) for $v$ (or $e$) on the same layer in
parallel (note that such set of paths need to be disjoint and they
therefore do not influence each other). At the end of a phase $i$, it
is guaranteed that there is no set of efficiency
$\geq 2^{-i}\cdot \frac{\Delta^d}{\wmin}$ left. The algorithm
therefore always adds sets with an efficiency that is within a factor
$2$ of the largest current efficiency. The details of the algorithm
for phase $i$ is given in the following.

Next, we show that the above algorithm can be implemented efficiently in the \CONGEST model and that the total weight $s(X_d)+y(F_d)$ of the solution output by the algorithm in stage $d$ is small compared to the weight of an optimal vertex cover.

\begin{lemma}\label{lemma:singlestage}
  Let $\vec{y^*}$ be an optimal fractional $w$-matching of the graph $G=(A\cup B, E, w)$ in the current stage. If the given fractional $w$-matching $\vec{y}$ satisfies $y^*(E)-y(E)\leq \delta \cdot w(S^*)$ for an optimal weighted vertex cover of $G$, then the deterministic algorithm above  finds a set $X_d \subseteq S_d$
  and a set $F_d \subseteq Y_d$
  such that $s( X_d) +y(F_d) \leq \alpha_d \delta \cdot w(S^*)$, where $\alpha_d= (d+3)(1+d \ln \Delta)$. The time complexity of the algorithm is $O(d^4\cdot\log n)$ in the \CONGEST model.
\end{lemma}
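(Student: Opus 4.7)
The plan is to establish three things: correctness (the algorithm outputs $X_d, F_d$ covering every length-$d$ augmenting path), the approximation bound $s(X_d)+y(F_d) \le \alpha_d\,\delta\,w(S^*)$, and the \CONGEST round complexity $O(d^4\log n)$. Correctness is essentially a termination check: in the final phase $i_{\max} = \lceil \log(\Delta^d\wmax/\wmin)\rceil$ the efficiency threshold drops to at most $1/\wmax$, so any path still active would possess an active endpoint $v\in V_0\cup V_d$ or an active even edge $e$ whose corresponding set has efficiency at least $1/s(v)\ge 1/\wmax$ or $1/y_e \ge 1/\wmax$, and would therefore have been added — a contradiction. The round complexity is also routine: there are $\lceil\log(\Delta^d\wmax/\wmin)\rceil = O(d\log\Delta + \log n)$ phases (since all set-cover weights lie in $[1/\poly(n),\poly(n)]$), each with $(d+3)/2$ layer sub-steps, and each sub-step does one call to the path-counting routine of \Cref{lemma:pathcounting} costing $O(d^2)$ \CONGEST rounds.

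The heart of the argument is the approximation bound, and my idea is to tie the LP-optimum of our set cover instance to $\delta w(S^*)$ via LP duality. I would formulate the set cover LP and read off its dual as a fractional ``augmenting-path packing'': maximize $\sum_{P\in\mathcal{P}_d} x_P$ subject to $\sum_{P:\,v\in\{v_0,v_d\}\text{ of }P} x_P \le s(v)$ for $v\in S_d$ and $\sum_{P:\,e\text{ even in }P} x_P \le y_e$ for $e\in Y_d$, with $x_P\ge 0$. Any dual-feasible $\{x_P\}$ corresponds to a simultaneous augmentation of $\vec y$ along these paths (raise odd-edge values by $x_P$, lower even-edge values by $x_P$): at internal nodes of each path the weight changes cancel, while the endpoint slack and even-edge non-negativity constraints are precisely the dual constraints. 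The resulting fractional $w$-matching has value $y(E)+\sum_P x_P$, which cannot exceed $y^*(E)$, so $\sum_P x_P \le y^*(E)-y(E) \le \delta w(S^*)$. By weak LP duality, the set cover LP-optimum is also at most $\delta w(S^*)$.

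It then remains to show that the algorithm's output has cost within an $\alpha_d = (d+3)(1+d\ln\Delta)$ factor of this LP-optimum. The standard (sequential) weighted greedy set cover algorithm is known to produce a cover of cost at most $H_k \le 1+\ln k$ times the LP-optimum, where $k$ bounds the maximum set size; in our instance $k \le \Delta^d$ because any node or edge lies on at most $\Delta^d$ length-$d$ paths, which contributes the factor $1+d\ln\Delta$. The principal obstacle, and where the extra factor $(d+3)$ arises, is adapting this analysis to the algorithm's parallel, layer-by-layer implementation. Within one sub-step (fixed phase $i$ and fixed layer $\ell$) the sets selected in parallel are pairwise disjoint on the ground set — two sets tied to the same layer cannot share a length-$d$ path — so parallel selection together with the factor-two slack in the efficiency threshold loses only a constant factor over the optimal greedy step. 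Across the $(d+3)/2$ sub-steps per phase, however, one must iterate sequentially since a single path may be covered at several distinct layers; a charging argument in the style of the standard dual-fitting proof for greedy set cover, accumulating the LP-based bound once per layer sub-step, yields the claimed factor $(d+3)(1+d\ln\Delta)$. Combining this with the LP bound $\delta w(S^*)$ gives $s(X_d)+y(F_d) \le \alpha_d\,\delta\, w(S^*)$, completing the argument.
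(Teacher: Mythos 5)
Your overall route is viable but genuinely different from the paper's, and one step of your accounting needs correction. Your termination and round-complexity arguments match the paper's. For the cost bound, the paper never goes through the covering LP: it interprets the set cover instance as weighted vertex cover on a hypergraph of rank $\frac{d+3}{2}$ whose hyperedges are the length-$d$ augmenting paths, takes a \emph{maximal} fractional $\omega$-matching $\vec{z}$ of that hypergraph, observes that the saturated vertices form a cover of weight at most $\frac{d+3}{2}\cdot z(E_H)$ (rank bound), bounds $z(E_H)\le \delta\, w(S^*)$ by exactly the augmentation argument you use, and multiplies by the factor $2(1+d\ln\Delta)$ that the relaxed greedy loses against the \emph{integral} optimum. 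You instead bound the \emph{LP} optimum of the set cover by $\delta\, w(S^*)$ via the packing dual (raise odd edges, lower even edges along fractionally packed paths); this is correct, except that you need strong duality rather than weak duality (or, better, no duality at all: the dual-fitting analysis of greedy charges directly against any feasible packing, so the packing bound suffices).

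Where your proposal goes astray is in attributing the factor $(d+3)$ to the layer-by-layer parallel implementation. The layered execution does not cost a factor proportional to the number of layers: within one sub-step the selected sets cover pairwise disjoint families of paths, and every set selected during phase $i$ has current efficiency at least $2^{-i}\Delta^d/\wmin$, hence at least half the current global maximum over \emph{all} layers (phase $i-1$ ended with no set above $2^{-(i-1)}\Delta^d/\wmin$, and efficiencies only decrease). So the whole distributed execution is equivalent to a single sequential run of the ``pick a set within factor $2$ of the best efficiency'' greedy --- this is precisely the paper's equivalence argument --- and applying the standard dual fitting to that run against your packing bound already yields $s(X_d)+y(F_d)\le 2(1+d\ln\Delta)\,\delta\, w(S^*)$, which is stronger than the claimed $\alpha_d$ and hence proves the lemma. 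The ``charging once per layer sub-step'' step you sketch is therefore unnecessary and, as stated, not an argument: there is no justification for comparing the cost incurred at each layer position separately to the LP optimum with the full harmonic factor. If you flesh out your LP route, drop that step and run the dual fitting over the entire execution; alternatively, follow the paper and pay the rank factor $\frac{d+3}{2}$ to compare the greedy output with the integral optimum.
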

\begin{proof}	
  First, we look at the time complexity. The algorithm consists of $O(\log (\Delta^d \wmax/\wmin))=O(d\log n)$ phases, where in each phase we iterate over the $O(d)$ levels and for each level the time complexity is upper bounded by computing the best efficiency, which essentially comes down to finding the  number of active augmenting paths passing through each node or edge. By \Cref{lemma:pathcounting}, the number augmenting paths passing through each node and edge can be computed in $O(d^2)$ \CONGEST rounds. The resulting time complexity is therefore $O(d^4\cdot \log n)$ rounds in the \CONGEST model.
  	
  Next, we show that our distributed algorithm is equivalent to a
  variant of the sequential greedy weighted set cover algorithm and
  compute its approximation ratio. Recall that in the standard
  sequential greedy weighted set cover algorithm, one starts with an
  empty set cover and iteratively adds to it a set with the current
  maximum efficiency. It is also well known that the approximation
  ratio of this greedy algorithm is at most $H(q)\leq 1+\ln q$, where
  $q$ is the maximum set size and $H(q)$ denotes the $q^{\mathit{th}}$
  harmonic number~\cite{chvatal79}. Further, it is easy to see that if
  we relax each greedy step to pick a set of at least half the current
  maximum efficiency, then the algorithm will have an approximation
  ratio that is at most $2H(q)$ (this is the greedy set cover
  algorithm variant that we are considering here). To see
  this, notice that a set of at least half the maximum efficiency can
  be turned into a set of maximum efficiency by reducing its weight
  by a factor of at most $2$. The resulting solution is therefore an $H(q)$-approximation of a set cover instance in which all weights are divided by a factor between $1$ and $2$.

  Returning now to the distributed set cover algorithm for this lemma,
  notice that we sequentially iterate over level 0 and the odd levels
  $1, 3, ..., d$ and in each iteration, we only add sets $P_v $(or
  $P_e$) on the same level whose efficiency is at least half the
  current maximum efficiency. Moreover note that sets $P_v$ for nodes
  $v$ in one particular level $\ell$ and sets $P_e$ for edges $e$ in
  one particular level $\ell$ cover disjoint sets of paths. Hence, in
  each parallel step, our algorithm always adds sets covering a
  disjoint set of paths (and therefore a disjoint set of elements in
  the set cover instance). Our distributed algorithm is therefore
  equivalent to an execution of the described variant of the
  sequential greedy algorithm\footnote{Note that in each iteration, in
    step 1, we also recompute the number of paths passing through each
    node and edge so that we always only add sets that cover a
    sufficient number of still uncovered paths.}. Therefore our
  algorithm has an approximation ratio of at most
  $2H({\Delta^d})\leq 2\big(1 + \ln \Delta^d\big)=2(1+d\ln\Delta)$.
  Note that in our setting, the maximum set size is at most $\Delta^d$
  because $|P_v|\leq \Delta^d$ and $|P_e| \leq \Delta^d$.  To upper
  bound the absolute cost $s(X_d)+y(F_d)$ of the computed solution, we
  also need to understand the cost of an optimal set cover solution.

  \para{Interpreting as minimum weighted hypergraph vertex cover.} 
  To better understand our set cover instance, we interpret it as an instance of a weighted minimum vertex cover problem on hypergraphs. We define a weighted hypergraph $H_d=(V_{H},E_{H},\omega)$ as follows. The vertex set $V_{H}$ consists of the sets $P_v$ with $v \in S_d$ and the sets $P_e$ with $e \in Y_d$.
 That is, we have a node in $V_H$ for each $v\in S_d$ and for each $e\in Y_d$. The hyperedges $E_{H}$ of $H_d$ are the augmenting paths of length $d$. A node $P_v\in V_H$ is incident to the hyperedge defined by an augmenting path $p$ if $p\in P_v$, i.e., $p$ starts at node $v$. A node $P_e\in V_H$ is incident to the hyperedge defined by an augmenting path $p$ if $p\in P_e$, i.e., if $e$ is contained in $p$. The weights of vertices $P_v$ and $P_e$  of $H_d$ are defined as $\omega(P_v)=s(v)$ and $\omega(P_e)=y(e)$. Observe that the weighted vertex cover on hypergraph $H_d$ is exactly equivalent to our weighted set cover instance. The dual problem of the natural LP relaxation of weighted vertex cover on hypergraphs is a natural extension of the fractional $w$-matching problem defined for graphs in \Cref{sec:defProblems}. For our hypergraph $H_d$ with weight function $\omega$, we can define a fractional $\omega$-matching of $H_d$ as an assignment $\vec{z}$ of non-negative fractional values $z_{e_p}>0$ for every hyperedge $e_p\in E_H$ such that for every node $x\in V_H$, the sum of the fractional values $z_{e_p}$ of its incident hyperedges sums up to at most $\omega(x)$. A fractional assignment is said to be maximal if for each hyperedge $e_p$, there exists a vertex $x \in e_p$ that is saturated, i.e., for which the fractional values of its hyperedges sum up to $\omega(x)$. Note that if we are given a maximal $\omega$-matching $\vec{z}$ of a hypergraph with node weights $\omega$, the set of saturated vertices w.r.t.\ $\vec{z}$ form a vertex cover of the hypergraph. If the rank of the hypergraph is $R$, every hyperedge can have at most $R$ saturated vertices and the total weight of the saturated nodes is therefore at most $R$ times the total sum of all fractional hyperedge values. Because every augmenting path of length $d$ is in exactly $2$ of the sets $P_v$ and in exactly $(d-1)/2$ of the sets $P_e$, the rank of our hypergraph $H_d$ is equal to $2+(d-1)/2=(d+3)/2$. The weight of an optimal vertex cover of the hypergraph $H_d=(V_H,E_H,\omega)$ is therefore at most $\frac{d+3}{2}\cdot z(E_H)$, where $z(E_H)=\sum_{e_p\in E_H} z_{e_p}$ if $\vec{z}$ is a maximal fractional $\omega$-matching of $H_d$.

  \para{Upper bounding the value of a fractional \boldmath$\omega$-matching of $H_d$.} In order to relate the total value $z(E_H)$ of an $\omega$-matching $\vec{z}$ of $H_d$ to the cost of an optimal weighted vertex cover on our original graph $G$, we interpret the fractional assignment $\vec{z}$ on graph $G$. Since the hyperedges in $H_d$ correspond to the augmenting paths of length $d$ of the graph $G$, the fractional $\omega$-matching $\vec{z}$ of $H_d$ assigns a fractional value $z_p>0$ to each such augmenting path $p$ of $G$. Because $\vec{z}$ is a valid fractional $\omega$-matching of $H_d$, we know that for every node $v\in V_0\cup V_d$ and for every edge $e\in E$ between two levels $V_{2i-1}$ and $V_{2i}$, the sum of the $z_p$-values for all augmenting paths $p$ that include $v$ is at most $s(v)$ and the sum of the $z_p$-values for all augmenting paths $p$ that include $e$ is at most $y_e$. This however implies that we can in parallel augment our existing fractional $w$-matching of $G$ on all augmenting paths $p$ as follows For every edge $e$ of $G$, let $\calP_d(e)\subseteq \calP_d$ be the set of augmenting paths of length $d$ that contain edge $e$. For every edge $e$ between an even level $V_{2i}$ and a consecutive odd level $V_{2i+1}$, we set $y_e':=y_e + \sum_{p \in\calP_d(e)} z_p$ and for every edge $e$ between an odd level $V_{2i-1}$ and a consecutive even level $V_{2i}$, we set $y_e':=y_e - \sum_{p\in \calP_d(e)} z_p$. By construction, the resulting fractional assignment $\vec{y}'$ is still a valid $w$-matching of $G$ and we have $y'(E)-y(E)=\sum_{p\in \calP_d} z_p = z(E_H)$. By the assumptions of the lemma, we further know that $y'(E)-y(E)\leq y^*(E)-y(E) \leq \delta w(S^*)$.
   
   Let $X_d^*$ and $F_d^*$ be the node and edge sets induced by an optimal solution of the weighted set cover instance. Combining everything, we therefore obtain 
  \begin{eqnarray*}
    s(X_d)+y(F_d) & \leq & 2(1+d\ln \Delta)\cdot (s(X_d^*)+y(F_d^*))\\
    & \leq & 2(1+d\ln\Delta)\cdot \frac{d+3}{2} \cdot z(E_H)\\
    & \leq & (d+3)(1+d\ln\Delta)\cdot\delta\cdot w(S^*).
  \end{eqnarray*}
This concludes the proof of the lemma.
\end{proof}

We now have everything that we need to prove our first main theorem, which shows that a $(1+\eps)$-approximation of the minimum weighted vertex cover problem can be computed deterministically in $\poly\big(\frac{\log n}{\eps}\big)$ rounds in the \CONGEST model.

\begin{proof}[\textsf{\textbf{Proof of \Cref{thm:approxMWVCbipartite}}}]
  By \Cref{thm:diameterreduction}, it is sufficient to show that we can compute a $(1+\eps)$-approximate weighted vertex cover solution in time $\poly\big(\frac{\log n}{\eps}\big)\cdot D$ in bipartite graphs in the \SUPPORTED model with a bipartite communication graph of diameter $D$. Let us therefore assume that we are given a node-weighted bipartite graph $G=(A\cup B, E, w)$ on which we want to compute a $(1+\eps)$-approximate weighted vertex cover in the \SUPPORTED model with a bipartite communication graph $H$ of diameter $D$. Note that because $H$ has diameter $D$, we can compute a $2$-coloring of $H$ in $O(D)$ rounds in the \CONGEST model and because $G$ is a subgraph of $H$, we can therefore also compute the bipartition of the nodes of $G$ into $A$ and $B$ in time $O(D)$. In the following, we will therefore assume that the nodes of $G$ know if they are in $A$ or in $B$.

  As described in \Cref{sec:noshortpaths}, as a first step for solving MWVC in $G$, we choose a sufficiently small parameter $\delta>0$ and we compute a $(1-\delta)$-approximate fractional $w$-matching $\vec{y}$ of $G$. By using \Cref{thm:fractional_wmatching}, we can do this in time $\poly\big(\frac{\log n}{\delta}\big)$ (recall that we assume that all weights are polynomially bounded integers). As discussed in \Cref{sec:noshortpaths}, we now need to transform our instance so that we have no short augmenting paths and can afterwards apply \Cref{lemma:basicMWVCalg} to compute a vertex cover of $G$. To transform the instance, we need to determine sets $X\subseteq \bigcup_{d=1}^{2k-1} S_d$ and $F\subseteq \bigcup_{d=1}^{2k-1} Y_d $ such that for every augmenting path $P=(v_0,v_1,\dots,v_d)$ of length $d\leq 2k-1$ for $k=\lceil 2/\eps\rceil$, either $v_0$ or $v_d$ is in $X$ or one of the edges $\set{v_{2i-1},v_{2i}}$ for $i\in \set{1,\dots,(d-1)/2}$ is in $F$. If we have such sets $X$ and $F$, then compute new node weights $w'$ and a new fractional $w'$-matching $\vec{y}'$ as given by \Cref{eq:conversion}. \Cref{lemma:conversion} implies that w.r.t. weights $w'$ and the fractional $w'$-matching $\vec{y}'$, graph $G$ then has no augmenting paths of length at most $2k-1$. Further, \Cref{lemma:approxpreserved} shows that if we compute an $(1+\eps/2)$-approximate weighted vertex cover $S$ of $G$ for weights $w'$, then $w(S)\leq (1+\eps/2)\cdot w(S^*) + s(X) + y(F)$, where $S^*$ is an optimal weighted vertex cover of $G$ w.r.t.\ the original weights $w$. By \Cref{lemma:basicMWVCalg}, the $(1+\eps/2)$-approximate vertex cover of $G$ w.r.t.\ weights $w'$ can be compute in time $O(D+1/\eps)$ in the \SUPPORTED model with a communication graph of diameter $D$.
  
  We therefore need to show that we can find appropriate sets $X$ and $F$ such that $s(X)+y(F)\leq \frac{\eps}{2}\cdot w(S^*)$. We compute the sets $X$ and $F$ in stages $d=1,3,\dots$ by iterating over the possible augmenting path lengths and by using \Cref{lemma:singlestage}. After each augmenting path length, we will apply the conversion given by \eqref{eq:conversion}. For each stage $d=1,3,\dots,2k-1$, we define $w_d$ to be the weight function that is used in stage $d$ and $\vec{y_d}$ to be the fractional $w_d$-matching that is used in stage $d$. For the first stage, we therefore have $w_1=w$ and $\vec{y_1}=\vec{y}$.  For each stage $d$, we further define $\vec{y_d^*}$ to be an optimal fractional $w_d$-matching. Note that because $\vec{y_1}$ is a $(1-\delta)$-approximate $w$-matching of $G=(V,E,w)$, we have $y_1^*(E)-y_1(E)\leq \delta y_1^*(E)\leq \delta w(S^*)$, where the last inequality follows from \Cref{lemma:duality}. We can therefore apply \Cref{lemma:singlestage} for $d=1$ and we obtain sets $X_1$ and $F_1$. We now obtain new weights $w_3$ and a new fractional $w_3$-matching  $\vec{y_3}$ by applying  \eqref{eq:conversion}. That is, we set $w_3(v)=w_1(v)-s(v)$ for every $v\in X_1$, we set $y_{3,e}=0$ for every $e\in F_1$, and we reduce the weights of the incident nodes of such edges accordingly (for each reduced edge value, the corresponding node weight is reduced by the same amount). Note that whenever we reduce the weights of both nodes of an edge $e$ by the same amount $x$, then the value of an optimal fractional $w$-matching is reduced by at least $x$. We therefore have $y_3^*(E)\leq y_1^*(E)-y_1(F_1)$. By construction, we also have $y_3(E)=y_1(E)-y_1(F_1)$ and we therefore get $y_3^*(E)-y_3(E)\leq \delta\cdot w(S^*)$. We can therefore again apply \Cref{lemma:singlestage} for $d=3$. If we continue like this, we obtain sets $X_1,\dots,X_{2k-1}$ and set $F_1,\dots,F_{2k-1}$ such that for all $d$, $s(X_d)+y(F_d)\leq (d+3)(1+d\ln\Delta)\cdot\delta\cdot w(S^*)\in O(k^2\log\Delta)\cdot \delta\cdot w(S^*)$. By construction, the sets $X:= X_1\cup X_3\cup\dots\cup X_{2k-1}$ and $F:=F_1\cup F_3\cup \dots\cup F_{2k-1}$ cover all augmenting paths of length at most $2k-1$ w.r.t.\ to the original weight function $w$ and the original fractional $w$-matching $\vec{y}$. Because there are $k$ stages, we further have
\[
  s(X)+y(F)\in O(k^3\log \Delta)\cdot \delta\cdot w(S^*) \in O\left(\frac{\log\Delta}{\eps^3}\right)\cdot\delta \cdot w(S^*).
\]
If we choose $\delta\leq c\cdot \eps^4$ for a sufficiently small constant $c>0$, 
we therefore have $s(X)+y(F)\leq \frac{\eps}{2}\cdot w(S^*)$. With this choice for $\delta$, both the algorithm to reduce the diameter (cf.\ \Cref{thm:diameterreduction}) and the algorithm for getting rid of short augmenting paths (cf.\ \Cref{lemma:singlestage}) have a round complexity of $\poly\big(\frac{\log n}{\eps}\big)$, which concludes the proof of the theorem.
\end{proof}

\subsection{Weighted Vertex Cover in General Graphs}
\label{sec:generaldetails}

We next discuss the additional technical details needed to prove our upper bound for approximating minimum (weighted) vertex cover in general graphs. As discussed in \Cref{sec:overviewMWVCgeneral}, the high-level idea is as follows. We first (virtually) construct the bipartite double cover $G_2$ of our graph $G$ and we then apply the $(1+\eps)$-approximation algorithm for bipartite graph from \Cref{thm:approxMWVCbipartite}. We can then transform this solution to a half-integral $(1+\eps)$-approximation of the fractional weighted vertex cover problem on $G$. Removing an independent set from the graph induced by the half-integral nodes gives the desired approximation. The details are given in the following proof of \Cref{thm:approxMWVCgeneral}.

\begin{proof}[\textsf{\textbf{Proof of \Cref{thm:approxMWVCgeneral}}}]
  As discussed, given a weighted graph $G=(V,E,w)$, we first construct the bipartite double cover $G_2=(V_2,E_2,w)$, where the weight function $w$ is extended to $G_2$ in the obvious way (and as described in \Cref{sec:overviewMWVCgeneral}). Note that since every node of $G$ only needs to simulate $2$ nodes in $G_2$ and every edge of $G$ is replaced by $2$ edges in $G_2$, \CONGEST algorithms on $G_2$ can be simulated on $G$ with only constant overhead. By using \Cref{thm:approxMWVCbipartite}, we can therefore compute a $(1+\eps)$-approximation of MWVC on $G_2$ in time $\poly\big(\frac{\log n}{\eps}\big)$. By \Cref{lemma:doublecoverVC}, this can directly be turned into a half-integral $(1+\eps)$-approximate fractional weighted vertex cover of $G$. Let $S_1$ be the nodes of $G$ that have a fractional vertex cover value of $1$ and let $S_{1/2}$ be the nodes of $G$ that have a fractional vertex cover value of $1/2$. Assume further that $I_{1/2}$ is an independent set of the induced subgraph $G[S_{1/2}]$. Clearly, $S:=S_1\cup S_{1/2}\setminus I_{1/2}$ is a vertex cover of $G$. If the weight $w(I_{1/2})$ is $w(I_{1/2})\geq\lambda\cdot w(S_{1/2})$, then the weight of the vertex cover $S$ can be bounded as
  \begin{eqnarray*}
    w(S) & = & w(S_1) + w(S_{1/2}) - w(I_{1/2})\\
         & \leq & w(S_1) + (1-\lambda)\cdot w(S_{1/2})\\
         & \leq & (2-2\lambda)\cdot \left(w(S_1) + \frac{1}{2}\cdot w(S_{1/2})\right)\\
         & \leq & (2-2\lambda)\cdot (1+\eps)\cdot w(S^*),
  \end{eqnarray*}
  where $S^*$ is an optimal weighted vertex cover of $G$. The claim of the theorem now follows.
\end{proof}

We can now also directly prove \Cref{cor:approxMWVCcoloring}, which states that in graphs that can efficiently be colored with $C$ colors, we can compute a $(2-2/C+\eps)$-approximation of MWVC in polylogarithmic time.

\begin{proof}[\textsf{\textbf{Proof of \Cref{cor:approxMWVCcoloring}}}]
  By \Cref{thm:diameterreduction}, we can first reduce the diameter of the communication graph in time $\poly\big(\frac{\log n}{\eps}\big)$ while only losing a $(1+\eps/2)$-factor in the approximation. Now assume that we are given a weighted graph $G$ with a $C$-coloring, that we have a communication graph $H$ of diameter $D$, and that we can use the \SUPPORTED model. In time $O(D)$ in the \SUPPORTED model, we can then use the $C$-coloring to compute an independent set of weight at least $w(V(G))/C$ by just keeping the heaviest color class. The corollary therefore follows directly by combining \Cref{thm:diameterreduction} and \Cref{thm:approxMWVCgeneral}.
\end{proof}


\section{Technical Details: Weighted Matching Algorithms}
\label{sec:MWM}

Before we start with analyzing our matching algorithms, we first provide two helpful lemmas, which we will use in the randomized and deterministic approach. The first lemma states that if a matching $M$ is sufficiently \textit{far} from optimal, then augmenting along a carefully chosen set of vertex-disjoint augmenting paths and cycles up to some maximal length guarantees a gain that is half as large as the maximum possible gain. Similar statements appeared before and were used in parallel algorithms and in the distributed context in the \LOCAL model (e.g., \cite{drake03,stoc18_edgecoloring,HougardyV06,lotker15,nieberg08}). Throughout this section, we assume that we are given an edge-weighted $n$-node graph $G=(V,E,w)$ with maximum degree $\Delta$ and edge weights $w(e)\in \set{1,\dots,W}$ (for $W\leq n^{O(1)}$), and we assume that $M^*$ is an optimal weighted matching of $G$.

\begin{lemma}\label{lemma:augPath}
  For every matching $M$ with $w(M) < (1-\eps/2) \cdot w(M^*)$, there exists a set of vertex-disjoint augmenting paths and cycles of length at most $\ell = O(1/\eps)$ such that augmenting over all those structures increases the weight of $M$ by at least $\frac{\eps}{4} \cdot w(M^*)$.  
\end{lemma}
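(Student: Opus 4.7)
The plan is to build the required collection from the symmetric difference $F = M \triangle M^*$, shortening any long components while losing only an $O(1/\ell)$-fraction of their gain. The set $F$ decomposes into vertex-disjoint alternating paths and cycles, and the sum of the component gains equals $w(M^*) - w(M) > \frac{\eps}{2} w(M^*)$; in particular, the positive-gain components of $F$ collectively have gain at least $\frac{\eps}{2} w(M^*)$. Each such component is itself an augmenting structure in the sense of this paper (its swap is matching-feasible, and by assumption strictly increases $w(M)$), so any positive-gain component already of length at most $\ell$ can be taken as is.

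The key step is shortening each long positive-gain component $C$ of length $L>\ell$. The idea is to \emph{drop} a carefully chosen subset $D_C$ of the $M^*$-edges of $C$; removing these edges breaks $C$ into vertex-disjoint alternating sub-paths, each of length at most $\ell$. The crucial local observation is that cuts must be made at $M^*$-edges: if the cut edge $e$ lies in $M^*$, then at each of $e$'s two endpoints the unique $M$-edge is the adjacent $F$-edge, which lies on the retained side of the cut and is removed by the swap of the resulting sub-path, introducing no matching conflict. Cutting at an $M$-edge, by contrast, would leave the $M$-edge at the cut vertex incompatible with a newly added $M^*$-edge inside the sub-path. Each resulting sub-path is therefore a legal alternating structure w.r.t.\ $M$, and the total gain of the sub-paths of $C$ equals $\mathrm{gain}(C)-w(D_C)$.

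To bound $w(D_C)$ I use a standard averaging argument. One can form roughly $\ell/2$ different ``offset'' cutting patterns, each using $\lceil L/\ell\rceil$ evenly-spaced $M^*$-edges of $C$ as cut points, so that every $M^*$-edge of $C$ is a cut point in exactly one pattern. Thus the total weight dropped, summed over all offsets, equals $w(C\cap M^*)$, and picking the offset minimizing $w(D_C)$ gives $w(D_C)=O(w(C\cap M^*)/\ell)$. For cycles the argument is identical after using one extra cut to first open the cycle into an alternating path. Summing over all long positive-gain components and using $\sum_C w(C\cap M^*)\leq w(M^*)$ yields a total dropped weight of $O(w(M^*)/\ell)$.

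Taking only the positive-gain pieces (from the shortened long components together with the positive-gain short components) produces a vertex-disjoint collection, since the components of $F$ are vertex-disjoint and our sub-paths lie inside single components. The total gain of this collection is at least
\[
\bigl(w(M^*)-w(M)\bigr)\;-\;O\!\left(\frac{w(M^*)}{\ell}\right)\;\geq\;\frac{\eps}{2}w(M^*)\;-\;O\!\left(\frac{w(M^*)}{\ell}\right).
\]
Choosing $\ell=\Theta(1/\eps)$ large enough so that the error term is at most $\frac{\eps}{4}w(M^*)$ then gives the claimed bound of $\frac{\eps}{4}w(M^*)$. The main obstacle is the local verification in the shortening step: that cutting only at $M^*$-edges always produces sub-paths whose swap is matching-feasible, a property which must be checked uniformly for cycle components and for all types of path components of $F$ (those whose end-edges are both in $M^*$, both in $M$, or one of each).
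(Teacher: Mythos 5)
Your proposal is correct, and it shares the paper's overall skeleton: take $F=M\triangle M^*$, observe the augmenting components have total gain $w(M^*)-w(M)>\frac{\eps}{2}w(M^*)$, split the long components into pieces of length $\le \ell=\Theta(1/\eps)$ while losing only an $O(1/\ell)$-fraction, and keep the positive-gain pieces. The shortening step, however, is genuinely different. You cut by deleting $M^*$-edges, chosen via an averaging argument over $\Theta(\ell)$ evenly-spaced offset patterns, so the loss is $O(w(C\cap M^*)/\ell)$ per component and $O(w(M^*)/\ell)$ overall. The paper instead partitions each long component into blocks of length $\lceil \ell/3\rceil$ and deletes the \emph{lightest $M$-edge} of each block, bounding the loss by a $\frac{6}{\ell-6}$-fraction of $w(M)$; with $\ell\ge 24/\eps+6$ this also gives the $\frac{\eps}{4}w(M^*)$ bound. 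What your variant buys is precisely the point you flag as the ``main obstacle'': cutting only at $M^*$-edges makes every resulting piece \emph{individually} a feasible augmenting structure with respect to $M$, since each piece endpoint is either free or has its matching edge inside the piece (your local observation about internal vertices of $F$-components is correct, and the endpoint cases you list are routine to check). Under the paper's cut at $M$-edges, a piece endpoint remains matched by the deleted light edge, so augmenting a single piece in isolation implicitly also un-matches that light edge; the paper's weight accounting ($w(M^*)-w(M)-\frac{6}{\ell-6}w(M)$) absorbs exactly this, but the structural claim of vertex-disjoint feasible augmenting paths is cleaner under your choice of cut. Conversely, the paper's lightest-edge-per-block selection is a bit simpler than the offset averaging and bounds the loss against $w(M)$ rather than $w(M^*)$; since $w(M)\le w(M^*)$, the two losses are comparable and both arguments land on the same asymptotic choice of $\ell$.
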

\begin{proof}
  Let $F:=M\triangle M^*$ be the symmetric difference between $M$ and some optimal matching $M^*$. Clearly, $F$ induces a collection of paths and even cycles. By optimality of $M^*$, each of those paths/cycles either is an augmenting path/cycle or the total weight of all $M$-edges on the path/cycle is equal to the total weight of the total weight of the $M^*$-edges on the path/cycle. The set of augmenting paths and cycles induced by $F$ are vertex-disjoint and they together have a gain of exactly $w(M^*)-w(M)\geq \eps/2 \cdot w(M^*)$. Some of the augmenting paths and cycles might however be long (we could potentially have a single augmenting path or cycle of length $|M|+|M^*|$) and we need to split into short vertex-disjoint augmenting paths and cycles.

	To do this,  we first (arbitrarily) partition each of the long augmenting paths and cycles into subpaths such that all except one subpath are of length exactly $x:=\lceil \ell/3\rceil$ and the last subpath is of length between $x$ and $2x-1$. In all of those subpaths, we define a \textit{light edge}, where the light edge is defined as an $e \in M$ with the lowest weight among all edges in $M$ in this subpath. We now remove the light edge of every subpath. Because we remove one edge of each of the subpaths, this splits each long augmenting path or cycle into short paths of length at most $(x-1)+(2x-1-1)=3(x-1)\leq \ell$.
  
	Since every light edge $e$ is chosen to be minimal among the edges of $M$ and there are at least $x/2-1\geq \ell/6 - 1$ edges of $M$ in each subpath, the total weight of the light edges is at most a $\frac{6}{\ell-6}$-fraction of the weight of all edges in $M$. Summing this up over all subpath and choosing $\ell\geq 24/\eps + 6$, we can lower bound the total gain of the remaining augmenting paths and cycles of length at most $\ell$ by
	\begin{align*}
          w(M^*) - w(M) - \left(  \frac{6}{\ell - 6} \right) w(M) &\geq w(M^*) - \left( 1 - \frac{\eps}{2} \right) \cdot w(M^*) -  \frac{\eps}{4} \cdot w(M) \\
                                                                  &\geq\frac{\eps}{4} \cdot w(M^*)\qedhere
	\end{align*}  
\end{proof}

The second lemma was proven in \cite{ahmadi18} and it in particular shows the existence of an efficient deterministic distributed approximation scheme for maximum weighted matchings in bipartite graphs. We use this algorithm as a subroutine in our randomized and in our deterministic weighted matching algorithm.

\begin{lemma}\label{lemma:ApproxAndRound}\cite{ahmadi18}
  There is a deterministic \CONGEST algorithm to compute a $(1-\eps)$-approximate MWM in bipartite graphs in time $O\left( \frac{\log (\Delta \cdot W)}{\eps^2} + \frac{\log^2(\Delta / \eps) + \log^*(n)}{\eps}\right)$. Further, in the same asymptotic time a $(2/3 - \eps)$-approximation can be computed on general graphs.
\end{lemma}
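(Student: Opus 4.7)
The plan is to follow a two-stage primal--dual approach: first compute a near-optimal \emph{fractional} weighted matching, then deterministically round it to an integral matching. The fractional stage is responsible for the $\log(\Delta W)/\eps^{2}$ term in the complexity, while the rounding stage is responsible for the additive $(\log^{2}(\Delta/\eps)+\log^{*} n)/\eps$ term and also for the gap between the $(1-\eps)$ ratio in bipartite graphs and the $(2/3-\eps)$ ratio on general graphs.

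For the fractional stage, I would use a deterministic distributed implementation of a Garg--K\"onemann / multiplicative-weights scheme for the packing LP that maximizes $\sum_{e} w(e)\, y_{e}$ subject to $\sum_{e\ni v} y_{e}\leq 1$ for every $v\in V$. Each iteration maintains vertex potentials $z_{v}\geq 0$ (acting as dual variables), grows $y_{e}$ on edges $e=\{u,v\}$ that currently minimize $(z_{u}+z_{v})/w(e)$, and multiplicatively updates $z_{u},z_{v}$ by a $(1+\Theta(\eps))$ factor. All operations are local: each node needs only its own dual value and the weights of its incident edges, so a single iteration is $O(1)$ \CONGEST rounds, and $O(\log(\Delta W)/\eps^{2})$ iterations suffice for a $(1-\eps/3)$-approximate fractional solution $\vec{y}$.

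For the rounding stage, I would adapt the deterministic scheme of Fischer~\cite{rounding}. The idea is to iteratively halve the denominators of the non-integral values: if every $y_{e}$ is a multiple of $2^{-k}$, one identifies the subgraph $F_{k}$ of edges whose value has a nonzero bit in position $k$, decomposes $F_{k}$ into alternating paths and even cycles, and then shifts a $2^{-k}$ amount along each such structure so as to preserve the node constraints while eliminating the low-order bit. Isolating vertex-disjoint structures requires (defective) edge colorings of the bounded-degree support graph, which is where the $\log^{*}n$ factor enters through a Linial-style coloring; $O(\log(\Delta/\eps))$ halving phases suffice to reach an $\eps$-integral solution, and each phase costs $O(\log(\Delta/\eps))$ CONGEST rounds for the symmetry-breaking and path-decomposition, yielding the claimed round complexity.

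The main obstacle is controlling the multiplicative loss in the rounding. In bipartite graphs the fractional matching polytope is integral, so one can show that each halving phase can be carried out while losing only an $O(\eps/\log(\Delta/\eps))$-fraction of the current fractional objective; telescoping over $O(\log(\Delta/\eps)/\eps)$ phases yields a total multiplicative loss of at most $(1-\eps/3)$, which composes with the fractional approximation to give the $(1-\eps)$-bound. On general graphs the LP has integrality gap exactly $2/3$ (achieved by the triangle, and tight for arbitrary odd cycles via the decomposition of any fractional matching into integral matchings plus $1/2$-valued odd cycles), so the same rounding yields only a $(2/3-\eps)$-approximation, but nothing better is possible from this LP. The delicate point will be ensuring that the per-phase losses do not compound and that the symmetry-breaking step respects edge congestion on the support graph $F_{k}$, which is precisely what forces the asymmetry between the $1/\eps^{2}$ dependence in the first term of the bound and the $1/\eps$ dependence in the second.
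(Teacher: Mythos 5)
The paper does not actually reprove this statement: its proof is a one-line citation, combining Theorems~2 and~3 of \cite{ahmadi18} (a deterministic $(1-\eps)$-approximate fractional weighted matching algorithm running in $O(\log(\Delta W)/\eps^2)$ \CONGEST rounds, plus a deterministic rounding procedure in the spirit of \cite{rounding}). Your proposal instead tries to reconstruct the internals of that cited result, and the architecture you describe --- fractional solution first, then deterministic bit-by-bit rounding with Linial-type colorings contributing the $\log^* n$ term, with bipartiteness responsible for the $(1-\eps)$ ratio and odd-cycle structure for the $(2/3-\eps)$ ratio --- is indeed essentially how \cite{ahmadi18} proceeds, so you are re-deriving the black box rather than taking a genuinely different route. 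As a standalone proof, however, the sketch leaves precisely the hard parts unargued: the claim that $O(\log(\Delta W)/\eps^2)$ multiplicative-weight iterations yield a $(1-\eps/3)$-approximate fractional solution is asserted without analysis; the rounding bookkeeping is internally inconsistent (you state $O(\log(\Delta/\eps))$ halving phases in one place and $O(\log(\Delta/\eps)/\eps)$ in another, and the $1/\eps$ factor in the second term of the round complexity is never cleanly accounted for); and the parenthetical that the integrality gap is ``tight for arbitrary odd cycles'' is wrong --- an odd cycle of length $2k+1$ has ratio $2k/(2k+1)\to 1$, and the $2/3$ gap is attained by the triangle. More importantly, for general graphs what the lemma needs is that the rounding \emph{achieves} $(2/3-\eps)$ times the fractional value, not merely that the LP's integrality gap prevents doing better; establishing that achievability is exactly the content of the cited theorem. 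So your plan is sound and faithful to the source, but to make it a proof you would have to reproduce the analyses of \cite{ahmadi18} and \cite{rounding}; for the purposes of this paper the intended (and sufficient) proof is simply the citation.
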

\begin{proof}
	Combining Theorem 2 and 3 of \cite{ahmadi18} directly gives this result. 
\end{proof}

\subsection{Randomized Matching}
\label{sec:randomizedMWMdetails}

The randomized version of our matching algorithm works as follows: Initially, a $(1/2)$-approximation of the MWM problem is computed on the whole graph $G$. This can be done using \Cref{lemma:ApproxAndRound} (by setting $\eps=1/6$). Let the resulting matching of $G$ be $M_0$. Starting from this matching, we perform $T$ (the value of $T$ will be determined later) iterations of the following procedure. Consider iteration $i\in \set{1,\dots,T}$. In iteration $i$, we transform an existing matching $M_{i-1}$ into a new matching $M_i$. To achieve this, each node independently colors itself black or white with probability $1/2$. Using that coloring, we construct a bipartite subgraph $H_i=(\hat{V}_i, \hat{E}_i)$ by using the construction described in \Cref{sec:matching}. We then use \Cref{lemma:ApproxAndRound} to find a $(1-\lambda)$-approximate weighted matching of the sampled bipartite graph $H_i$, where the parameter $\lambda$ is set to $\eps/(2T)$. The new matching $M_i$ is then defined as the union of the matching edges of this matching of $H_i$ and the matching edges of $M_{i-1}$ that are located outside of the bipartition, i.e., the monochromatic edges from the previous matching.

In the subsequent analysis, we will show that it is sufficient for any $\delta \in (0, 1/2]$ to choose the number of iterations $T$ (i.e., the number of bipartitions) as $T=2^{O(1/\eps)}\cdot \ln^3(1/\delta)$ to find a $(1- \eps)$-approximation with probability at least $1-\delta$.

When computing the matching on $H_i$, the edges of the existing matching $M_{i-1}$ that are part of the graph $H_i$ are completely ignored. As a result, it is possible that $w(M_i)<w(M_{i-1})$, that is, we cannot guarantee that the matching weight is monotonically increasing during our algorithm. To analyze this potential loss in the matching quality, we fix some optimal weighted matching $M^*$ of $G$ and we take a look at the weight difference between the following matchings: $\hat{M}_i^* := \hat{E}_i \cap M^*$ contains the edges of $M^*$ that are sampled in iteration $i$ and $\hat{M}_i' := \hat{E}_i \cap M_{i-1}$ contains the edges of $M_{i-1}$ that are sampled in iteration $i$.

\begin{lemma}\label{lemma:expImp}
	In any iteration $ i > 0$ in which $w(M_{i-1}) < (1-\eps/2) \cdot w(M^*)$, the expected value of $w(\hat{M}_i^*) - w(\hat{M}_i')$ is at least $2^{-O(1/\varepsilon)} \cdot w(M^*)$. Moreover, there is a positive value $c(\eps)=2^{-O(1/\eps)}$  such that we have $w(\hat{M}_i^*) - w(\hat{M}_i') > c(\eps) \cdot w(M^*)$ with probability at least $c(\eps)/8$.
\end{lemma}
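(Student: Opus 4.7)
The plan is to use \Cref{lemma:augPath} to extract a witness collection of short augmenting structures whose gains get realized by the random bipartition with sufficient probability. Since $w(M_{i-1}) < (1-\eps/2)\, w(M^*)$, \Cref{lemma:augPath} produces a set $\mathcal{P} = \{P_1,\ldots,P_m\}$ of vertex-disjoint augmenting paths and cycles w.r.t.\ $M_{i-1}$, each of length at most $\ell := O(1/\eps)$, whose total gain $\sum_j g_j$ is at least $(\eps/4)\, w(M^*)$. For each $P_j$, I will work with the event $\mathcal{E}_j$ that the random $\{\text{black},\text{white}\}$-coloring restricted to the nodes of $P_j$ alternates, i.e., every edge of $P_j$ is bichromatic. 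Since $P_j$ has at most $\ell$ edges, $\Pr[\mathcal{E}_j] \ge 2^{-\ell} = 2^{-O(1/\eps)}$, and since the paths in $\mathcal{P}$ are vertex-disjoint the events $\{\mathcal{E}_j\}$ are mutually independent.

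The key structural observation is that under $\mathcal{E}_j$, every node of $P_j$ lies in $\hat V_i$ and hence every edge of $P_j$ lies in $\hat E_i$. Indeed, endpoints of an augmenting path are free in $M_{i-1}$ and thus in $\hat V_i$ by definition, while every other node $v$ on $P_j$ has its unique $M_{i-1}$-edge inside $P_j$, which is bichromatic under $\mathcal{E}_j$, placing $v \in \hat V_i$. Consequently $P_j \cap M^* \subseteq \hat M_i^*$ and $P_j \cap M_{i-1} \subseteq \hat M_i'$ under $\mathcal{E}_j$, so the edges of $P_j$ contribute exactly $+g_j$ to $\Delta_i := w(\hat M_i^*) - w(\hat M_i')$. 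Moreover, vertex-disjointness of $\mathcal{P}$ combined with the fact that the unique $M^*$- and $M_{i-1}$-edges incident to any inner node of $P_j$ lie inside $P_j$ implies that the contribution of $P_j$'s edges to $\Delta_i$ depends only on the coloring of the nodes of $P_j$.

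Define the nonnegative random variable $Y := \sum_j g_j \, \mathbb{1}[\mathcal{E}_j]$. By linearity of expectation, $E[Y] \ge 2^{-\ell} \sum_j g_j \ge 2^{-O(1/\eps)}\, w(M^*) =: \mu$. To obtain the expectation bound in the first part, I will argue $E[\Delta_i] \ge E[Y] - o(\mu)$, which yields $E[\Delta_i] \ge 2^{-O(1/\eps)}\, w(M^*)$. For the ``Moreover'' part, since $0 \le Y \le w(M^*)$, the reverse Markov inequality gives $\Pr[Y \ge \mu/2] \ge \mu/(2\, w(M^*)) \ge 2^{-O(1/\eps)}$; combined with $\Delta_i \ge Y - o(\mu)$ on that event, this produces a constant $c(\eps) = 2^{-O(1/\eps)}$ such that $\Pr[\Delta_i > c(\eps)\, w(M^*)] \ge c(\eps)/8$ after mild constant adjustments.

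The main obstacle, and the technical core of the proof, will be controlling the loss contributed by sampling configurations in which an $M_{i-1}$-edge lands in $\hat{M}_i'$ without its ``partner'' $M^*$-edges landing in $\hat{M}_i^*$. My plan for this step is to decompose $\Delta_i$ along the connected components of the symmetric difference $M^* \triangle M_{i-1}$ and to compute the inclusion probability of each $M^*$-edge in $\hat E_i$ according to whether its endpoints are both free, exactly one matched, or both matched in $M_{i-1}$ (giving probabilities $1/2$, $1/4$, and $1/8$, respectively, versus the constant $1/2$ incurred by every $M_{i-1}$-edge). The vertex-disjointness of $\mathcal{P}$ ensures that contributions from distinct components are independent, and the optimality of $M^*$ together with the aggregate gain guarantee from \Cref{lemma:augPath} will be used to show that the potential negative contributions from long alternating components are dominated in expectation and in probability by the positive contributions from the witness collection~$\mathcal{P}$.
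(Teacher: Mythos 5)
Your setup is the same as the paper's: you take the witness collection from \Cref{lemma:augPath}, use the alternating-coloring events $\mathcal{E}_j$ with probability $2^{-O(1/\eps)}$, and exploit independence across vertex-disjoint structures. The genuine gap is precisely in the steps you defer: the claims $\E[\Delta_i]\ge \E[Y]-o(\mu)$ and ``$\Delta_i\ge Y-o(\mu)$ on that event'' (where $\Delta_i=w(\hat{M}_i^*)-w(\hat{M}_i')$), and your concluding plan to show that the negative contribution of edges outside the witness collection is ``dominated in expectation and in probability.'' That domination is false, so this part of the plan cannot be carried out. Consider $N$ disjoint gadgets $u'-u-v-v'$ with $M_{i-1}$-edges $\{u',u\},\{v,v'\}$ of weight $5$ each and $M^*$-edge $\{u,v\}$ of weight $11$; then $w(M_{i-1})=10N$, $w(M^*)=11N$, the hypothesis $w(M_{i-1})<(1-\eps/2)w(M^*)$ holds for, say, $\eps=1/10$, and the $N$ gadget paths themselves form a valid witness collection of total gain $N\ge(\eps/4)w(M^*)$. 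The middle edge lies in $\hat{E}_i$ only with probability $1/8$ (it must be bichromatic and both of its endpoints must have bichromatic $M_{i-1}$-edges), while each side edge lies in $\hat{E}_i$ with probability $1/2$, so one gadget contributes $11/8-5<0$ to $\E[\Delta_i]$; hence $\E[\Delta_i]=-\Theta(w(M^*))$, and since the gadget contributions are independent and bounded, $\Pr\bigl[\Delta_i>c(\eps)\,w(M^*)\bigr]\le e^{-\Omega(N)}$. In other words, the loss term you isolate is not $o(\mu)$; it can be of order $w(M^*)$, and no per-component domination argument can beat it.

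You should also not expect to find the missing step in the paper: its proof lower-bounds $\E[\Delta_i]$ by summing the expected gains of the fully sampled witness structures and applies a Chernoff bound to that sum, i.e., it silently identifies $\Delta_i$ with your $Y$ and never accounts for bichromatic $M_{i-1}$-edges and partially sampled structures outside the witness collection --- exactly the term you correctly flag. The quantity that is well behaved, and that the downstream argument in \Cref{lemma:RandApprox} actually needs, is the achievable improvement rather than $w(\hat{M}_i^*)-w(\hat{M}_i')$: writing $\mathrm{MWM}(H_i)$ for the maximum weight of a matching of $H_i$, note that $\hat{M}_i'$ is itself a matching of $H_i$, and on the event $\mathcal{E}_j$ it can be augmented inside $H_i$ along the fully sampled structure $P_j$, giving $\mathrm{MWM}(H_i)\ge w(\hat{M}_i')+\sum_j g_j\cdot\mathbf{1}[\mathcal{E}_j]$ (up to the same light-edge accounting already present in \Cref{lemma:augPath}). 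Your $Y$-based expectation bound and your reverse-Markov/Chernoff step then apply essentially verbatim to $\mathrm{MWM}(H_i)-w(\hat{M}_i')$, and $(1-\lambda)\mathrm{MWM}(H_i)-w(\hat{M}_i')$ is what controls $w(M_i)-w(M_{i-1})$. I would redirect your effort to proving the statement in that form; for the literal quantity $w(\hat{M}_i^*)-w(\hat{M}_i')$ the gadget instance above shows the claimed bounds cannot hold.
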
        
\begin{proof}
	Let $P$ be an augmenting path or cycle of length $k$ w.r.t.\ $M_{i-1}$ in $G$. If $P$ is an augmenting path, then the probability that $P$ is contained in graph $H_i$, i.e., $\Pr(E(P) \subseteq \hat{E}_i) = 2^{-k}$ (where $E(P)$ denotes the set of edges of $P$). If $P$ is an augmenting cycle of (even) length $k$, then $\Pr(P \subseteq \hat{E}_i) = 2^{-k+1}$. 
	
	By \Cref{lemma:augPath}, we know that there exists a set of vertex-disjoint augmenting paths and cycles of length at most $\ell = O(1/\eps)$ and with total gain at least $\frac{\eps}{4} \cdot w(M^*)$. By the above observation, each of those augmenting paths or cycles is contained in $H_i$ with probability at least $2^{-{\ell}}=2^{-O(1/\eps)}$. The lower bound on the expectation of $w(\hat{M}_i^*) - w(\hat{M}_i')$ then follows by linearity of expectation, i.e., $\E[w(\hat{M}_i^*) - w(\hat{M}_i')] \geq 2^{-O(1/\eps)} \cdot w(M^*)$. From here we can conclude that there exist some value $c(\eps)=2^{-O(1/\eps)}$ s.t.  $\E[w(\hat{M}_i^*) - w(\hat{M}_i')] \geq 2c(\eps) \cdot w(M^*)$. The lower bound on the probability follows because the vertex-disjoint paths and cycles are sampled independently where each path/cycle has a gain of at most $w(M^*)$ and we can therefore apply a Chernoff bound (see \Cref{thm:chernoff_bound}):
	\begin{align*}
		\Pr\left(w(\hat{M}_i^*) - w(\hat{M}_i') \leq c(\eps) \cdot w(M^*) \right) 
		&= \Pr\left(w(\hat{M}_i^*) - w(\hat{M}_i') \leq \left(1- \frac{1}{2}\right)\cdot 2c(\eps) \cdot w(M^*) \right) \\
		&\leq e^{-\frac{1}{8} \cdot \frac{2c(\eps)\cdot w(M^*)}{w(M^*)}} = e^{-\frac{c(\eps)}{4}} \leq 1 - \frac{c(\eps)}{8}
	\end{align*}
In the last step we use that $e^{-x} \leq 1 - \frac{x}{2}$ for $0 \leq x \leq 1$.
\end{proof}

\begin{definition}
  Let $c(\eps)=2^{-O(1/\eps)}$ (and $c(\eps)\leq 1$) be chosen as required by \Cref{lemma:expImp}. We call an iteration $i$ \textbf{successful} if $w(\hat{M}_i^*) - w(\hat{M}_i') > c(\eps) \cdot w(M^*)$ is true.
\end{definition}

Note that by \Cref{lemma:expImp}, as long as $w(M) < (1-\eps/2) \cdot w(M^*)$, an iteration is successful with probability at least $c(\eps)/8$.
Let $T$ be the overall number of iterations and recall that $\lambda$ states the approximation factor used to compute the matching in every bipartition.

\begin{lemma}\label{lemma:IgnGoodApprox}
	Choosing $\lambda =\frac{\eps}{2T}$ guarantees that if the algorithm ever computes a matching $M_i$ where $w(M_i) \geq (1 - \eps/2) \cdot w(M^*)$, there will be no subsequent iteration $j > i$ (and $j \leq T$) such that $w(M_j) < (1 - \eps) \cdot w(M^*) $. 
\end{lemma}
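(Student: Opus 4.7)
The plan is to show that a single iteration can only decrease the matching weight by at most $\lambda \cdot w(M^*)$, and then sum these losses over the at most $T$ iterations following iteration $i$.

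First I would bound the per-iteration loss. Fix any iteration $j \in \{1,\dots,T\}$. By construction, $M_j$ consists of the matching edges on $H_j$ returned by the algorithm of \Cref{lemma:ApproxAndRound} together with the monochromatic edges $M_{j-1} \setminus \hat{E}_j$. Since $\hat{M}_j' = M_{j-1} \cap \hat{E}_j$ is itself a valid matching in $H_j$, an optimum matching on $H_j$ has weight at least $w(\hat{M}_j')$, so the $(1-\lambda)$-approximate matching returned on $H_j$ has weight at least $(1-\lambda)\cdot w(\hat{M}_j')$. Combining these contributions gives
\begin{equation*}
  w(M_j) \;\geq\; (1-\lambda)\, w(\hat{M}_j') + w(M_{j-1}) - w(\hat{M}_j') \;=\; w(M_{j-1}) - \lambda \, w(\hat{M}_j').
\end{equation*}
Because $\hat{M}_j' \subseteq M_{j-1}$ is a matching of $G$, we have $w(\hat{M}_j') \leq w(M^*)$, and hence $w(M_j) \geq w(M_{j-1}) - \lambda \, w(M^*)$.

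Next I would iterate this bound. Assume $w(M_i) \geq (1-\eps/2)\,w(M^*)$, and let $j \in \{i+1,\dots,T\}$. Applying the previous inequality telescopically gives
\begin{equation*}
  w(M_j) \;\geq\; w(M_i) - (j-i)\cdot \lambda \, w(M^*) \;\geq\; \left(1-\frac{\eps}{2}\right) w(M^*) - T\lambda \, w(M^*).
\end{equation*}
Substituting $\lambda = \eps/(2T)$ yields $w(M_j) \geq (1-\eps)\,w(M^*)$, which is exactly the conclusion of the lemma.

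There is no real obstacle here; the only subtlety is to notice that the $(1-\lambda)$-approximation guarantee on $H_j$ is relative to the optimum matching of $H_j$, which dominates the previously held matching $\hat{M}_j'$, and that the bound $w(\hat{M}_j') \leq w(M^*)$ is what allows us to uniformly charge the per-iteration loss against $w(M^*)$ rather than against the (possibly growing) weight of intermediate matchings. The choice $\lambda = \eps/(2T)$ is then exactly calibrated so that $T$ iterations worth of loss accumulate to at most $(\eps/2)\,w(M^*)$.
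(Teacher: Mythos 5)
Your proof is correct and follows essentially the same route as the paper: the paper asserts the per-iteration bound $w(M_j) \geq w(M_{j-1}) - \lambda\, w(M^*)$ as immediate from the $(1-\lambda)$-approximation guarantee and then sums the loss over at most $T$ iterations, exactly as you do. Your write-up merely fills in the (correct) justification that $\hat{M}_j'$ is a feasible matching of $H_j$, so the optimum of $H_j$ dominates it and the per-iteration loss can be charged against $w(M^*)$.
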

\begin{proof}
	Since a $(1 - \lambda)$-approximation is computed on the bipartite subgraph $H_i$ in every iteration $i$, it is clear that $w(M_{i+1}) \geq w(M_i) - \lambda w(M^*)$. Even assuming a  worst-case scenario, where we lose $\lambda w(M^*) = \frac{\eps}{2T} w(M^*)$ in all $T$ rounds, the stated approximation is achieved:
        \[
          w(M_T) \geq \left(1 - \frac{\eps}{2}\right) w(M^*) - \frac{\eps}{2T} \cdot T \cdot w(M^*) = (1 - \eps) w(M^*). \qedhere
        \]
      \end{proof}

As a result of the previous lemma, we can essentially ignore executions in which we reach $w(M_i) \geq (1 - \eps/2) \cdot w(M^*)$ at some point. In the following, as soon as $w(M_i) \geq (1 - \eps/2) \cdot w(M^*)$, all subsequent iterations are called successful independently of how the matching weight changes. We will now show that for $T = \Theta\left( \frac{1}{c(\eps)^2} \log \left( \frac{1}{\delta} \right) \right)$ iterations, the first part of \Cref{thm:approxMWM} holds.

\begin{lemma}\label{lemma:RandApprox}
	 For every $\eps, \delta\in(0, \frac{1}{2}]$, our procedure computes a $(1- \varepsilon)$-approximation of the maximum weighted matching problem in
	 \begin{align*}
	 		 2^{O(1/\eps)}\cdot \left( \log^2 \Delta + \log( \Delta \cdot W) + \log^*n \right) \cdot \log^3(1/\delta)
	 \end{align*}
	  rounds with probability at least $1-\delta$ in \CONGEST.
\end{lemma}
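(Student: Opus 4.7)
My plan is to bound the number of iterations $T$ needed so that, with probability at least $1-\delta$, the algorithm accumulates enough successful iterations (in the sense of the definition following \Cref{lemma:expImp}) to reach a $(1-\eps/2)$-approximation, after which \Cref{lemma:IgnGoodApprox} ensures we stay within a $(1-\eps)$-factor of $w(M^*)$ for the remainder of the run.

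First I would quantify how much progress one successful iteration makes. Since the subroutine of \Cref{lemma:ApproxAndRound} computes a $(1-\lambda)$-approximate matching on $H_i$, we have $w(\hat{M}_i) \geq w(\hat{M}_i^*) - \lambda w(M^*)$, while the monochromatic edges of $M_{i-1}$ are preserved. Hence
\[
  w(M_i) \;\geq\; w(M_{i-1}) + \bigl(w(\hat{M}_i^*) - w(\hat{M}_i')\bigr) - \lambda w(M^*).
\]
With $\lambda = \eps/(2T)$ and $c(\eps) = 2^{-O(1/\eps)}$ chosen as in \Cref{lemma:expImp} (and $\lambda \leq c(\eps)/2$, which holds once the constant hidden in $T$ is large enough), each successful iteration therefore increases $w(M_i)$ by at least $c(\eps) w(M^*)/2$. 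Since $w(M_0) \geq w(M^*)/2$ from the initial $(1/2)$-approximation, it suffices to have $k := \lceil 2/c(\eps) \rceil$ successful iterations before the end of the run to reach $w(M_i) \geq (1-\eps/2) w(M^*)$; from then on \Cref{lemma:IgnGoodApprox} finishes the job.

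Next I would control the number of successful iterations by a Chernoff argument. As long as $w(M_{i-1}) < (1-\eps/2) w(M^*)$, \Cref{lemma:expImp} asserts that iteration $i$ is successful with probability at least $c(\eps)/8$, independently of the prior coin flips (the only dependence on the past is through $M_{i-1}$, and the bound holds for every such $M_{i-1}$). Once we have already reached $(1-\eps/2) w(M^*)$ we count the remaining iterations as successful by definition, so the count $s$ of successes in $T$ iterations stochastically dominates a $\mathrm{Bin}(T, c(\eps)/8)$ random variable. Choosing $T = \Theta\bigl(c(\eps)^{-2} \log(1/\delta)\bigr) = 2^{O(1/\eps)} \log(1/\delta)$ with a sufficiently large constant ensures $\E[s] \geq 2k + 8\ln(1/\delta)/c(\eps)$, so \Cref{thm:chernoff_bound} yields $\Pr[s < k] \leq \delta$.

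Finally, for the round complexity I would plug $\lambda = \eps/(2T)$ into \Cref{lemma:ApproxAndRound}: each iteration runs in
\[
  O\!\left(\tfrac{\log(\Delta W)}{\lambda^2} + \tfrac{\log^2(\Delta/\lambda) + \log^* n}{\lambda}\right)
  \;=\; 2^{O(1/\eps)} \log^2(1/\delta)\cdot \log(\Delta W) + 2^{O(1/\eps)} \log(1/\delta)\cdot(\log^2\Delta + \log^* n)
\]
rounds, where I used $1/\lambda = 2^{O(1/\eps)}\log(1/\delta)$ and $\log(1/\lambda) = O(1/\eps) + \log\log(1/\delta)$ (and the extra $\log^2\log(1/\delta)$ and $1/\eps^2$ factors get absorbed into the $2^{O(1/\eps)}$ and the $\log^3(1/\delta)$ terms). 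Multiplying by $T = 2^{O(1/\eps)}\log(1/\delta)$ iterations and adding the $O(\log(\Delta W) + \log^2\Delta + \log^* n)$ cost of the initial $(1/2)$-approximation gives the claimed bound $2^{O(1/\eps)}\bigl(\log^2\Delta + \log(\Delta W) + \log^* n\bigr)\log^3(1/\delta)$. The main obstacle to watch out for is the interplay between $T$, $\lambda$, and the progress per successful iteration: $\lambda$ must be small enough that each success produces net gain $\Omega(c(\eps) w(M^*))$, yet not so small that $1/\lambda^2$ blows up the per-iteration cost beyond the stated budget, which is why balancing $\lambda = \eps/(2T)$ with $T = \Theta(c(\eps)^{-2}\log(1/\delta))$ is critical.
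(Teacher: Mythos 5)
Your proposal is correct and follows essentially the same route as the paper's proof: the same success definition and stochastic domination by a $\mathrm{Bin}(T,c(\eps)/8)$ variable, the same choices $T=\Theta\big(c(\eps)^{-2}\log(1/\delta)\big)$ and $\lambda=\eps/(2T)$, the same appeals to \Cref{lemma:expImp}, \Cref{lemma:IgnGoodApprox}, \Cref{lemma:ApproxAndRound}, and \Cref{thm:chernoff_bound}, and the same per-iteration cost accounting. The only (cosmetic) difference is that you argue via ``enough successes force the $(1-\eps/2)$ threshold to be crossed, then \Cref{lemma:IgnGoodApprox} finishes,'' while the paper directly lower bounds $w(M_T)$ by $w(M_0)+S(1-\lambda)c(\eps)w(M^*)-(T-S)\lambda w(M^*)$; both yield the same conclusion.
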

\begin{proof}
  Let $S$ be the number of successful iterations among all the
  $T := \frac{16}{c(\eps)^2} \ln \left( \frac{1}{\delta} \right)$
  iterations. From the definition of a successful iteration and from
  \Cref{lemma:expImp}, we know that every iteration is successful with
  probability at least $c(\eps)/8$, independently of the what happened
  in previous iterations. We therefore have
  $\E[S] \geq T \cdot \frac{c(\eps)}{8} = 2 \cdot \frac{\ln \left(
      \frac{1}{\delta} \right)}{c(\eps)}$. Because the probability for
  being successful is always at least $c(\eps)/8$, independently of
  previous iterations, the random variable $S$ dominates a random
  variable that is defined as the sum of $T$ independent random
  $\set{0,1}$-random variables, which are all equal to $1$ with
  probability exactly $c(\eps)/8$. Applying a Chernoff bound
  (\Cref{thm:chernoff_bound}) results in the subsequent statement.
  \begin{align*}
    \Pr \left(S \leq  \frac{\ln \left( \frac{1}{\delta} \right)}{c(\eps)} \right)
    = \Pr\left(S \leq \frac{\E[S]}{2}\right)
    \leq e^{\frac{\ln \delta}{4c(\eps)}}
    \leq \delta		
  \end{align*}
  In the following, assume that  $w(M_i) \geq (1 - \eps/2) \cdot w(M^*)$ holds at no time (otherwise, \Cref{lemma:IgnGoodApprox} implies that we obtain at least a $(1-\eps)$-approximation at the end).
  We then now know that with probability at least $1-\delta$ we have more
  than $\frac{1}{c(\eps)} \ln \left( \frac{1}{\delta} \right)$ successful
  iterations, where each successful iteration improves the current
  matching by at least $(1- \lambda) c(\eps) \cdot w(M^*)$ while in
  non-successful iterations we lose at most $\lambda \cdot
  w(M^*)$. Recall that we choose $\lambda =\frac{\eps}{2T}$ (regarding
  \Cref{lemma:IgnGoodApprox}) and $w(M_0) \geq w(M^*)/2$. The overall
  weight after $T$ iterations can be bounded as follows:
  \begin{align*}
    w(M_T) &\geq w(M_0) + S \cdot (1- \lambda) c(\eps) \cdot w(M^*) - (T-S) \lambda \cdot w(M^*)  \\
           &\geq w(M^*)/2+ c(\eps) \cdot S \cdot w(M^*) - \lambda \cdot T \cdot w(M^*).
  \end{align*}
  By the assumption that $\delta \leq 1/2$, this term simplifies to
  $(1-\varepsilon)w(M^*)$ with probability at least $1-\delta$,
  because then $c(\eps) \cdot S > \ln(1/\delta) > 1/2$. This proves
  the stated approximation guarantee.

  Since by \Cref{lemma:ApproxAndRound}, each weighted bipartite matching instance (and therefore each iteration) runs in
  $O\left( \frac{\log (\Delta \cdot W)}{\lambda^2} +
    \frac{\log^2(\Delta / \lambda) + \log^*(n)}{\lambda}\right)$
  rounds in \CONGEST, the runtime for
  all $T$ iterations is as stated.
\end{proof}

\subsection{Deterministic Matching}
\label{sec:deterministicMWMdetails}

As stated in \Cref{sec:matching}, the deterministic version of our weighted matching algorithm differs from the randomized one in a few ways. First, instead of running the algorithm on the whole graph, we use \Cref{thm:diameterreduction} to first decompose the network into  clusters of small (weak) diameter, where afterwards each cluster independently computes a good approximation. The clusters allow us to efficiently compute the exact weight of some matching $M$. In each iteration $i$, we first compute a new matching $M_i$ by applying the deterministic weighted matching algorithm of \cite{ahmadi18}. In each cluster, we then compare $w(M_i)$ to the weight $w(M_{i-1})$ of the previous matching. If $w(M_i) -w(M_{i-1})< \frac{\eps}{8T} \cdot w(M_{i-1})$, we discard the new matching and we simply set $M_i$ to be equal to $M_{i-1}$. Note that the parameter $\lambda$ actually has the same meaning but a different value here than in the randomized approach.  The number of bipartitions used is again denoted by $T$, whereby we  have to work multiple times on the same bipartition here. For that reason the algorithm proceeds in stages. In each stage the algorithm iterates through the bipartitions $H_1, \hdots, H_T$, each time excecuting the above procedure. We will later see that in each stage there exists at least one bipartition $H_i$ where our approximation improves the current matching $M_{i-1}$ by at least $\frac{\eps}{8T} \cdot w(M_{i-1})$. Proceeding through $O(T)$ many stages is then sufficient to compute the desired $(1-\eps)$-approximation. 
In the following, we concentrate on the computation in a single cluster of the decomposition. More specifically, assume that we can use the \SUPPORTED model with a communication graph of diameter $\poly\big(\frac{\log n}{\eps}\big)$. The result can then be lifted to the \CONGEST model in general graphs by applying \Cref{thm:diameterreduction}. Working in the \SUPPORTED model allows us to assume that the nodes are uniquely labeled with IDs from $1$ to $n$ (we can relabel the nodes in such a way in time linear in the diameter of the communication graph).

As discussed in \Cref{sec:matching}, in our deterministic algorithm, we need to compute the bipartitions deterministically. The following lemma shows that there exists a sequence of bipartitions that is `good' in every graph and for every collection of short augmenting paths and cycles.

\begin{lemma}\label{lemma:collections}
	Let $N > 0$ and $k \leq N$ be two integers. There exists a collection of $T = O(k \cdot 2^k \cdot \log N)$ functions $f_1, \hdots, f_T \in [N] \rightarrow \{0, 1\}$ such that for every vector $(x_1, \hdots, x_k) \in [N]^k$ with pairwise disjoint entries, there exists a function $f_i$ in the collection such that $f_i(x_j) = j\ \mathrm{mod}\ 2$ for every $1 \leq j \leq k$.
\end{lemma}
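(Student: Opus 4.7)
The plan is to prove this lemma via a straightforward probabilistic method argument. The intuition is that a single random function $f:[N]\to\{0,1\}$ already satisfies the required condition for any fixed tuple $(x_1,\dots,x_k)$ with distinct entries with probability exactly $2^{-k}$ (since the $k$ values $f(x_1),\dots,f(x_k)$ are independent uniform bits), so by boosting via $T$ independent trials and applying a union bound over all $O(N^k)$ candidate tuples we can drive the failure probability below $1$, which guarantees existence of the desired collection.

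Concretely, I would proceed as follows. First, draw $f_1,\dots,f_T$ independently, each one by assigning $f_i(x)\in\{0,1\}$ uniformly and independently for each $x\in[N]$. Fix any ordered tuple $(x_1,\dots,x_k)\in [N]^k$ whose entries are pairwise distinct. Since the random bits $f_i(x_1),\dots,f_i(x_k)$ are mutually independent, the probability that $f_i(x_j)=j\bmod 2$ for every $j\in\{1,\dots,k\}$ equals exactly $2^{-k}$, and the events for different $i$ are independent. Hence the probability that no single $f_i$ in the collection ``hits'' the tuple is $(1-2^{-k})^T \le \exp(-T\cdot 2^{-k})$.

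Next, I would take a union bound over all ordered $k$-tuples with distinct entries; there are at most $N^k$ such tuples. The probability that some tuple is missed by every $f_i$ is therefore at most
\[
N^k\cdot \exp(-T\cdot 2^{-k}).
\]
Choosing $T = \lceil (k+1)\ln(N)\cdot 2^k\rceil = O(k\cdot 2^k\cdot \log N)$ makes this bound strictly less than $1$, so by the probabilistic method at least one realization of $f_1,\dots,f_T$ has the required property for all distinct tuples simultaneously.

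I do not expect any real obstacle here: the argument is a textbook union bound, and the only points to be careful about are (i) that the $k$ entries are assumed pairwise distinct, which is exactly what makes the $k$ bits $f_i(x_1),\dots,f_i(x_k)$ independent under the uniform distribution, and (ii) that the exponent of the right-hand side matches the claimed $k\cdot 2^k\cdot \log N$ bound up to a constant factor. The condition $k\le N$ stated in the lemma only ensures that distinct $k$-tuples exist at all and plays no further role in the counting bound. If desired, one could also derandomize the construction via the method of conditional expectations, but for an existence statement the probabilistic argument above suffices.
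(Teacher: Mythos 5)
Your proof is correct and follows essentially the same route as the paper: choose $T$ independent uniformly random functions, note each fixed tuple of distinct entries is handled by a given $f_i$ with probability $2^{-k}$, bound the failure probability by $(1-2^{-k})^T \le e^{-T/2^k}$, and union bound over the at most $N^k$ tuples, with $T = \Theta(k\cdot 2^k\cdot \ln N)$ driving the total failure probability below $1$. No gaps; your remark that distinctness of the entries is what makes the $k$ bits independent is exactly the right point of care.
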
 
\begin{proof}
	We will prove the lemma with a probabilistic argument. The probability that a randomly chosen $f_i$ maps some $x_j$ to $0$ respectively $1$ is $1/2$. We say an function $f_i$ \textit{takes care} of $(x_1, \hdots, x_k)$ if $(f_i(x_1), \hdots, f_i(x_k)) \in (0,1,0,1,\hdots)$. The probability that $f_i$ takes care for a specific vector is $2^{-k}$. Further, the probability that no function in a collection of $T$ random function takes care of a fixed vector is $(1-2^{-k})^T \leq e^{-T/2^k}$. Since there are $N^k$ different vectors, the probability that there exits a vector such that no function $f_i$ takes care of it, is at most $N^k \cdot e^{-T/2^k}$. Choosing $T > k \cdot 2^k \cdot \ln N$ pushes this probability below $1$, which shows that there must exist a collection of $T$ functions $f_1,\dots,f_T$ s.t. for every possible vector at least one of those functions will take care. 
\end{proof}

We can use \Cref{lemma:collections} to compute a bipartition of our graph as follows. Assume that the nodes are labeled with unique IDs from $1$ to $N$. When using a particular function $f_i$ to partition the nodes, each node just applies $f_i$ to its ID and gets colored black or white accordingly. Note that as observed above, we can assume that $N\leq n$. Our algorithm now works in consecutive phases, where in each phase, we iterate through all the $T$ bipartitions given by \Cref{lemma:collections}. In each phase, this induces $T$ bipartite graphs $H_1,\dots,H_T$. We choose $k=O(1/\eps)$ in the constructions large enough so that \Cref{lemma:augPath} applies (when setting $\ell$ to $k$). We compute a $(1-\lambda)$-approximate matching on each of these graphs $H_i$ by using \Cref{lemma:ApproxAndRound} and as discussed above, we only keep this matching if it improves the weight of the current matching by at least $\frac{\eps}{8T}\cdot w(M_{i-1})$. The next lemma shows that if $\lambda$ is chosen sufficiently small, there is an iteration in every stage, in which we improve the matching.

\begin{lemma}\label{lemma:exOfAug}
  Assume that $\lambda \leq \eps/(8T)$. For every matching $M$ with $w(M) \leq (1 - \eps) w(M^*)$, there exists one bipartition $H_i$ from the collection $H_1, \hdots, H_T$, such that computing a $(1-\lambda)$-approximate weighted matching on $H_i$ improves $M$ by at least $\frac{\varepsilon}{8T}\cdot w(M)$.  \end{lemma}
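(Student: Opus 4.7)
The plan is to combine \Cref{lemma:augPath} with \Cref{lemma:collections}: the former supplies a structured set of short augmenting paths and cycles whose total gain is large, and the latter ensures that our collection of $T$ bipartitions captures every such short structure at least once. Since we are told $w(M) \leq (1-\varepsilon) w(M^*) < (1-\varepsilon/2) w(M^*)$, \Cref{lemma:augPath} gives a vertex-disjoint set $\mathcal{C}$ of augmenting paths and even cycles, each of length at most $\ell = O(1/\varepsilon)$, with combined gain at least $\frac{\varepsilon}{4}\cdot w(M^*) \geq \frac{\varepsilon}{4}\cdot w(M)$.

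Next, I would fix $k = \ell + 1$ in the construction of the bipartitions via \Cref{lemma:collections}. For each $P \in \mathcal{C}$, order its nodes along a traversal (breaking the cycle arbitrarily) and, if necessary, pad with arbitrary unused node IDs to form a $k$-tuple of pairwise distinct entries. \Cref{lemma:collections} then guarantees some $f_i$ assigning these nodes alternating colors $0,1,0,1,\ldots$, so every edge of $P$ is bichromatic. Observe further that every internal vertex of $P$ has its matching edge lying on $P$ (hence bichromatic), and the endpoints of augmenting paths are free, so all nodes of $P$ satisfy the membership condition for $\hat V_i$. Thus $P$ is entirely contained in $H_i$.

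By an averaging argument, summing over $T$ bipartitions, there exists an index $i^\ast \in [T]$ such that the paths/cycles of $\mathcal{C}$ captured by $f_{i^\ast}$ have combined gain at least $\frac{\varepsilon}{4T}\cdot w(M)$. Since these captured paths/cycles are vertex-disjoint and entirely contained in $H_{i^\ast}$, simultaneously augmenting along them inside $H_{i^\ast}$ produces a matching of $H_{i^\ast}$ of weight at least $w(M|_{H_{i^\ast}}) + \frac{\varepsilon}{4T}\cdot w(M)$, where $M|_{H_{i^\ast}}$ denotes the set of bichromatic $M$-edges. In particular, the optimal matching $M^\ast_{H_{i^\ast}}$ of $H_{i^\ast}$ satisfies $w(M^\ast_{H_{i^\ast}}) \geq w(M|_{H_{i^\ast}}) + \frac{\varepsilon}{4T}\cdot w(M)$.

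Finally, the $(1-\lambda)$-approximate bipartite matching $\tilde M$ computed on $H_{i^\ast}$ via \Cref{lemma:ApproxAndRound} has weight at least $(1-\lambda)\cdot w(M^\ast_{H_{i^\ast}})$. The algorithm returns $M_i := \tilde M \cup (M \setminus M|_{H_{i^\ast}})$, whence
\[
w(M_i) - w(M) \;\geq\; (1-\lambda)\cdot\tfrac{\varepsilon}{4T}\cdot w(M) \;-\; \lambda\cdot w(M|_{H_{i^\ast}}) \;\geq\; \Bigl[(1-\lambda)\tfrac{\varepsilon}{4T} - \lambda\Bigr]\cdot w(M),
\]
using $w(M|_{H_{i^\ast}}) \leq w(M)$. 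Plugging in $\lambda \leq \varepsilon/(8T)$ makes the bracketed quantity at least $\frac{\varepsilon}{8T}$ (up to an $O(\varepsilon^2/T^2)$ lower-order term, which can be absorbed by a slightly tighter choice of the constant in $\lambda$). The main delicacy of the proof is exactly this last balancing: the $\lambda$-loss from approximating the bipartite matching must be dominated by the per-bipartition share $\Theta(\varepsilon/T)$ of the total augmenting gain, which is what forces the quantitative relationship between $\lambda$ and the per-iteration improvement threshold $\varepsilon/(8T)$.
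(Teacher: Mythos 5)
Your overall route is the same as the paper's: invoke \Cref{lemma:augPath} to obtain vertex-disjoint augmenting paths/cycles of length $O(1/\eps)$ with total gain at least $\frac{\eps}{4}w(M^*)$, use \Cref{lemma:collections} (with $k$ large enough to host an entire path/cycle) to ensure each such structure appears in some bipartition, apply the pigeonhole principle to find one $H_{i^*}$ carrying a $\frac{1}{T}$-share of the gain, and then account for the $(1-\lambda)$ approximation loss. Your verification that a captured structure really lies inside $H_{i^*}$ (endpoints free, internal matched edges lie on the structure and are bichromatic) is a detail the paper leaves implicit, and it is correct, as is the bookkeeping $w(M_i)-w(M)=w(\tilde M)-w(M|_{H_{i^*}})$.

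The one genuine problem is the final accounting. You downgrade the pigeonhole gain from $\frac{\eps}{4T}w(M^*)$ to $\frac{\eps}{4T}w(M)$ before subtracting the approximation loss, and with $\lambda=\eps/(8T)$ your bound becomes $\big(\frac{\eps}{8T}-\frac{\eps^2}{32T^2}\big)w(M)$, which is strictly below the claimed $\frac{\eps}{8T}w(M)$. Since the lemma fixes both the hypothesis $\lambda\le\eps/(8T)$ and the improvement threshold $\frac{\eps}{8T}w(M)$ (which the deterministic algorithm uses as its acceptance test), you cannot ``absorb'' the deficit by retuning the constant in $\lambda$ without changing the statement being proved. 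The repair is immediate and is what the paper does: keep everything relative to $w(M^*)$. The pigeonhole argument gives $w(M^*_{H_{i^*}})\ge w(M|_{H_{i^*}})+\frac{\eps}{4T}w(M^*)$, and since $H_{i^*}$ is a subgraph of $G$ we have $w(M^*_{H_{i^*}})\le w(M^*)$, so the loss from the $(1-\lambda)$-approximation is at most $\lambda\, w(M^*_{H_{i^*}})\le\frac{\eps}{8T}w(M^*)$; hence the improvement is at least $\frac{\eps}{4T}w(M^*)-\frac{\eps}{8T}w(M^*)=\frac{\eps}{8T}w(M^*)\ge\frac{\eps}{8T}w(M)$, which is exactly the stated bound.
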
 
\begin{proof}
  Recall that we choose the parameter $k$ in the construction of \Cref{lemma:collections} sufficiently large so that \Cref{lemma:augPath} guarantees that (w.r.t.\ matching $M$) there exists a collection of vertex-disjoint augmenting paths and cycles of length at most $k$ and with total gain at least $\frac{\eps}{4}\cdot w(M^*)$. The statement of \Cref{lemma:collections} leads to the fact that each of these augmenting paths or cycles appears in at least one of the bipartite graphs $H_1,\dots,H_T$. By the pigeonhole principle, there is a bipartite graph $H_i$ in which we can make a total gain of $\frac{\eps}{4T}\cdot w(M^*)$. If we choose $\lambda\leq \eps/(8T)$, then a $(1-\lambda)$-approximate matching in $H_i$  still guarantees a gain of $\frac{\eps}{8T}\cdot w(M^*) \geq \frac{\eps}{8T}\cdot w(M)$.
\end{proof}

We can now prove the second part of \Cref{thm:approxMWM}.

\begin{lemma}\label{lemma:DetApprox}
	There exists a deterministic \CONGEST algorithm to compute a $(1-\eps)$-approximation to the weighted matching in time
	\begin{align*}
		2^{O(1/\eps)} \cdot \poly\log n.
	\end{align*}
\end{lemma}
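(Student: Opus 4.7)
The plan is to combine \Cref{thm:diameterreduction}, \Cref{lemma:ApproxAndRound}, \Cref{lemma:collections}, and \Cref{lemma:exOfAug} in the natural way suggested by \Cref{sec:deterministicMWMdetails}. First I would invoke \Cref{thm:diameterreduction} with $\alpha=1$ to reduce the problem (at the cost of a $(1-\eps/2)$-factor in the approximation) to approximating MWM in the \SUPPORTED model on clusters whose communication graph has diameter $D = O(\log^3 n / \eps)$. From this point on I only have to describe a deterministic \SUPPORTED algorithm that computes a $(1-\eps/2)$-approximation of MWM on a single cluster in time $2^{O(1/\eps)} \cdot \poly\log n$; independently running this on all clusters yields the theorem with a multiplicative $O(\log n)$ overhead, which is absorbed into the $\poly\log n$ factor.

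Inside a cluster I would first relabel the nodes with IDs in $\{1,\dots,N\}$ for some $N \le n$ (doable in $O(D)$ rounds) and compute an initial $(1/2)$-approximate matching $M_0$ using \Cref{lemma:ApproxAndRound} with constant error. Next I would fix $k = \Theta(1/\eps)$ large enough for \Cref{lemma:augPath} and instantiate the collection $f_1, \dots, f_T$ of $T = 2^{O(1/\eps)} \cdot \log n$ bipartition functions guaranteed by \Cref{lemma:collections}; these functions are fixed by the proof of that lemma so every node can derive them from its ID with no communication. The algorithm then proceeds in $\Theta(T/\eps)$ consecutive stages, each stage iterating through $H_1, \dots, H_T$. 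In iteration $i$ we run the deterministic bipartite algorithm of \Cref{lemma:ApproxAndRound} on $H_i$ with $\lambda = \eps/(8T) = 2^{-O(1/\eps)} / \log n$, producing a candidate matching $M_i'$. We replace $M_{i-1}$ by $M_i'$ only if $w(M_i') - w(M_{i-1}) \ge \tfrac{\eps}{8T} \cdot w(M_{i-1})$; otherwise we set $M_i := M_{i-1}$. Testing this condition only requires aggregating the two matching weights across the cluster, which costs $O(D) = \poly\log n$ rounds via the precomputed BFS tree of the SUPPORTED communication graph.

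For correctness, as long as $w(M) \le (1-\eps/2)\cdot w(M^*)$, \Cref{lemma:exOfAug} guarantees that in every stage there is at least one bipartition whose $(1-\lambda)$-approximation increases the current matching weight by at least $\tfrac{\eps}{8T}\cdot w(M)$, and hence the "accept" criterion will fire at least once per stage. The sequence $(w(M_i))_i$ is therefore monotonically nondecreasing and multiplied by a factor $\ge (1 + \tfrac{\eps}{8T})$ in every stage. Starting from $w(M_0) \ge w(M^*)/2$ and applying the logarithm, after $O\big(\tfrac{T}{\eps}\cdot \log(1/\eps)\big) = 2^{O(1/\eps)} \cdot \poly\log n$ stages the matching must satisfy $w(M) \ge (1-\eps/2)\cdot w(M^*)$; once reached, the accept rule ensures we never drop below that bound, and combined with the diameter-reduction loss we end up with an overall $(1-\eps)$-approximation.

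For the runtime, each of the $2^{O(1/\eps)} \cdot \poly\log n$ stages performs $T = 2^{O(1/\eps)} \cdot \log n$ iterations, and each iteration invokes \Cref{lemma:ApproxAndRound} with $\lambda = 2^{-O(1/\eps)}/\log n$ and does an $O(D)$-round weight aggregation; by the bound in \Cref{lemma:ApproxAndRound}, this costs $\poly(1/\lambda)\cdot \poly\log n = 2^{O(1/\eps)}\cdot\poly\log n$ rounds. Multiplying out all three layers yields the claimed $2^{O(1/\eps)}\cdot \poly\log n$ round complexity. The main subtlety I expect is not a conceptual obstacle but rather a bookkeeping one: ensuring that the multiplicative weight gain $\tfrac{\eps}{8T}$, the approximation loss $\lambda$ per bipartite call, and the diameter-reduction loss all simultaneously fit into the same budget so that the final approximation is indeed $1-\eps$; this requires choosing constants carefully but introduces no new techniques beyond those already stated above.
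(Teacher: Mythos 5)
Your proposal is correct and follows essentially the same route as the paper's proof: diameter reduction via \Cref{thm:diameterreduction}, an initial $1/2$-approximation, the deterministic bipartition collection of \Cref{lemma:collections}, stages of $T$ iterations with the accept-if-improved-by-$\frac{\eps}{8T}$ rule, \Cref{lemma:exOfAug} guaranteeing progress in each stage, and the $T^2\cdot\poly\big(\frac{\log n}{\lambda\eps}\big)$ runtime accounting. The only (harmless) difference is that you track the constant-factor bookkeeping (running the inner machinery with parameter $\eps/2$ so that \Cref{lemma:exOfAug} and the diameter-reduction loss compose) slightly more explicitly than the paper does.
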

\begin{proof}
  We first prove the claim on the approximation ratio. Recall that we start with a $1/2$-approximate weighted matching. Then, by \Cref{lemma:exOfAug}, each stage improves the weight of the given matching $M$ by at least $\frac{\eps}{8T}\cdot w(M) \geq \frac{\eps}{16T}\cdot w(M^*)$. Therefore, $O(T/\eps)$ stages suffice to achieve a final $(1-\eps)$-approximation for MWM.
  
  We now show that $O(T/\eps)$ stages of length $T$ can be implemented in the claimed round complexity. By \Cref{thm:diameterreduction}, it is sufficient to prove the claim of the lemma for algorithms in the \SUPPORTED model with a communication graph of polylogarithmic diameter. The algorithm for each bipartition then clearly requires at most $\poly\big(\frac{\log n}{\lambda}\big)$ rounds (\Cref{lemma:ApproxAndRound}). Since we have $T$ bipartitions in every stage and $O(T/\eps)$ many stages, the total round complexity is bounded by
  \[
    \poly\left(\frac{\log n}{\lambda}\right)\cdot T\cdot O\left(\frac{T}{\eps}\right) =
    T^2\cdot \poly\left(\frac{\log n}{\lambda\eps}\right).
  \]
  The claim of the lemma now follows because by \Cref{lemma:collections}, we have $T=2^{O(1/\eps)}\cdot \ln n$ and because by \Cref{lemma:exOfAug}, we can choose $\lambda = \Theta(\eps/T)$.
\end{proof}

\Cref{lemma:RandApprox} and \Cref{lemma:DetApprox} together complete the proof of \Cref{thm:approxMWM}.


\section{Bipartite Vertex Cover Lower Bound}
\label{sec:lower}

We next proof \Cref{thm:lowerbound}, i.e., we prove that even in bipartite graph of maximum degree $3$, there exists a constant $\eps_0>0$ such that for $\eps\in(0,\eps_0]$, computing a $(1+\eps)$-approximate vertex cover requires $\Omega\big(\frac{\log n}{\eps}\big)$ rounds in the \LOCAL model.

\begin{proof}[\textsf{\textbf{Proof of \Cref{thm:lowerbound}}}]
  in \cite{goeoes14_DISTCOMP}, G\"o\"os and Suomela showed that there exists a bipartite graph $G=(V_G,E_G)$ with maximum degree $3$ and a constant $\eps_0>0$ such that no randomized distributed algorithm with running time $o(\log n)$ can find a $(1+\eps_0)$-approximate vertex cover on $G$. To extend this proof to smaller approximation ratios, we proceed as follows. Given a positive integer parameter $k$, we construct a new lower bound graph $H$ as follows. Graph $H$ is obtained from graph $G$ by replacing every edge $e$ of $G$ by a path $P_e$ of length $2k+1$.

  Assume that we are given a vertex cover $S_H$ of graph $H$. We first describe a method to transform the vertex cover $S_H$ into a vertex cover $S_H'$ of $H$ such that $|S_H'|\leq |S_H|$ and such that $S_H'$ has the following form. For each edge $e$ of $G$, $S_H'$ contains exactly $k$ of the inner nodes of the path $P_e$ in $H$ and it contains at least one of the end nodes of $P_e$. The transformation is done independently for each path $P_e$ as follows. Let $P_e=(v_0,v_1,\dots,v_{2k+1})$ be the path that replaces edge $e=\set{v_0,v_{2k+1}}$ in $G$. If $S_H\cap \set{v_0,v_{2k+1}}\neq \emptyset$, we add the nodes $S_H\cap \set{v_0,v_{2k+1}}$ also to $S_H'$. Otherwise, if $S_H\cap \set{v_0,v_{2k+1}}= \emptyset$, we arbitrarily add either $v_0$ or $v_{2k+1}$ to $S_H'$. Further $S_H'$ contains exactly $k$ of the inner nodes $v_1,\dots,v_{2k}$ of $P_e$ in such a way that every edge of $P_e$ is covered by some node in $S_H'$. If $v_0\in S_H'$, we can add all nodes $v_{2i}$ for $i\in\set{1,\dots,k}$ and otherwise we can add all nodes $v_{2i-1}$ for $i\in\set{1,\dots,k}$. Clearly $S_H'$ is a vertex cover of $H$. To see that $|S_H'|\leq |S_H|$, observe that in order to cover all $2k+1$ edges of $P_e$, every vertex cover of $H$ must contain at least $k+1$ of the nodes of $P_e$ and it also must contain at least $k$ of the inner nodes of $P_e$. If $S_H\cap \set{v_0,v_{2k+1}}\neq \emptyset$, $S_H'$ only differs in terms of the inner nodes of $P_e$ from $S_H$ 
   and we know that also $S_H$ must contain at least $k$ inner nodes of $P_e$. If $S_H\cap \set{v_0,v_{2k+1}}=\emptyset$, we add either $v_0$ or $v_{2k+1}$ to $S_H'$. However in this case, $S_H$ contains at least $k+1$ inner nodes of $P_e$ and $S_H'$ only contains $k$ inner nodes of $P_e$. The transformation from $S_H$ to $S_H'$ can be done independently for each of the paths $P_e$ of length $2k+1$ and it can therefore be done in $O(k)$ rounds in the \LOCAL model.

    Let $e_G$ be the number of edges of $G$ and let $s_G$ be the size
    of an optimal vertex cover of $G$. Because the maximum degree of
    $G$ is $3$ and we have an optimal vertex cover of $G$, we get that
    $e_G=c\cdot s_G$ for some constant $c \leq 3$. By the observation
    above, any vertex cover $S_H$ of $H$ can be transformed into
    an equally good vertex cover $S_H'$ with a nice structure. Note
    that $S_H'$ consists of exactly $k$ inner nodes of each path $P_e$
    for $e\in E_G$ and it consists of at least one of the end nodes of
    each such path (i.e., of a vertex cover of $G$). We therefore
    obtain that there is an optimal vertex cover of $H$ that consists
    of an optimal vertex cover of $G$ and of $k$ inner
    nodes of each $(2k+1)$-hop path $P_e$ replacing an edge $e$ of $G$. The size $s_H$
    of an optimal vertex cover of $H$ is therefore exactly
    $s_H=s_G + k\cdot e_G=(1+ck)\cdot s_G$.

    Assume now that we have a $T$-round algorithm to compute a
    $(1+\eps)$-approximate vertex cover $S_H$ on graph $H$ for some
    $\eps\leq \eps_0/(1+3k)\leq \eps_0/(1+ck)$. By the above
    observation, in $O(k)$ rounds, we can transform this vertex cover
    into a vertex cover $S_H'$, which contains
    $k\cdot e_G=ck\cdot s_G$ inner path nodes and at least one of the
    end nodes of each $(2k+1)$-hop path replacing an edge of $G$ in
    $H$. The vertex cover $S_H'$ of $H$ therefore induces a vertex
    cover of $G$ of size $|S_H'|-k\cdot e_G=|S_H'|-ck\cdot
    s_G$. Because we assumed that $\eps\leq \eps_0/(1+ck)$, we have
    $|S_H'| \leq (1+\eps)\cdot (1+ck)\cdot s_G\leq (1+\eps_0)s_G +
    ck\cdot s_G$. The vertex cover $S_H'$ of $H$ therefore induces a
    $(1+\eps_0)$-approximate vertex cover of $G$. We next show that this
    implies an $O(1+ T/k)$-round algorithm to compute a
    $(1+\eps_0)$-approximate vertex cover of $G$. Assume that we want
    to compute a vertex cover of $G$. To do this, the nodes of $G$ can
    simulate graph $H$ by adding $2k$ virtual nodes on each edge of
    $G$. An $R$-round algorithm on $H$ can then be run in $O(\lceil R/k\rceil)=O(1+R/k)$
    rounds on $G$. We can therefore compute the vertex cover $S_H'$ on
    the virtual graph $H$ in $O(1+(T+k)/k)=O(1+T/k)$ rounds on
    $G$. Since $k=\Theta(1/\eps)$, the lower bound of
    \cite{goeoes14_DISTCOMP} implies a lower bound of
    $\Omega\big(\frac{\log n}{\eps}\big)$ on the time $T$ for
    computing a $(1+\eps)$-approximate vertex cover on $H$. Finally
    note that since $G$ is bipartite, then $H$ is also bipartite and
    clearly if the maximum degree of $G$ is $3$, then the maximum
    degree of $H$ is also $3$.
\end{proof}



\bibliography{references}

\newcommand{\etalchar}[1]{$^{#1}$}
\begin{thebibliography}{GKMU18}

\bibitem[ABB{\etalchar{+}}19]{assadi2019coresets}
S.~Assadi, M.~Bateni, A.~Bernstein, V.~Mirrokni, and C.~Stein.
\newblock Coresets meet {EDCS}: Algorithms for matching and vertex cover on
  massive graphs.
\newblock In {\em Proc.\ 30th ACM-SIAM Symp.\ on Discrete Algorithms (SODA)},
  pages 1616--1635, 2019.

\bibitem[{\AA}FP{\etalchar{+}}09]{AstrandFPRSU09}
M.~{\AA}strand, P.~Flor{\'{e}}en, V.~Polishchuk, J.~Rybicki, J.~Suomela, and
  J.~Uitto.
\newblock A local 2-approximation algorithm for the vertex cover problem.
\newblock In {\em Proc.\ 23rd Symp.\ on Distributed Computing (DISC)}, pages
  191--205, 2009.

\bibitem[AK20]{ahmadi20}
M.~Ahmadi and F.~Kuhn.
\newblock Distributed maximum matching verification in {CONGEST}.
\newblock In {\em Proc.\ 34th Symp.\ on Distributed Computing (DISC)}, pages
  37:1--37:18, 2020.

\bibitem[AKO18]{ahmadi18}
M.~Ahmadi, F.~Kuhn, and R.~Oshman.
\newblock Distributed approximate maximum matching in the {CONGEST} model.
\newblock In {\em Proc.\ 32nd Symp.\ on Distr.\ Computing (DISC)}, pages
  6:1--6:17, 2018.

\bibitem[AP90]{Awerbuch-Peleg1990}
B.~Awerbuch and D.~Peleg.
\newblock Sparse partitions.
\newblock In {\em Proc.\ 31st IEEE Symp.\ on Foundations of Computer Science
  (FOCS)}, pages 503--513, 1990.

\bibitem[{\AA}S10]{AstrandS10}
M.\ {\AA}strand and J.\ Suomela.
\newblock Fast distributed approximation algorithms for vertex cover and set
  cover in anonymous networks.
\newblock In {\em Proc.\ 22nd {ACM} Symp.\ on Parallelism in Algorithms and
  Architectures (SPAA)}, pages 294--302, 2010.

\bibitem[BBiKS19]{benbasat_param}
Ran Ben-Basat, Ken ichi Kawarabayashi, and Gregory Schwartzman.
\newblock Parameterized distributed algorithms.
\newblock In {\em Proc.\ 33rd In.\ Symp.\ on Distributed Computing (DISC)},
  pages 6:1--6:16, 2019.

\bibitem[BCGS17]{yehuda17}
R.~Bar{-}Yehuda, K.~Censor{-}Hillel, M.~Ghaffari, and G.~Schwartzman.
\newblock Distributed approximation of maximum independent set and maximum
  matching.
\newblock In {\em Proc.\ 36th {ACM} Symp.\ on Principles of Distributed
  Computing (PODC), full version: arXiv:1708.00276}, pages 165--174, 2017.

\bibitem[BCM{\etalchar{+}}20]{Bar-YehudaCMPP20}
R.~Bar{-}Yehuda, K.~Censor{-}Hillel, Y.~Maus, S.~Pai, and S.~V. Pemmaraju.
\newblock Distributed approximation on power graphs.
\newblock In {\em Proc.\ 39th {ACM} Symp.\ on Principles of Distributed
  Computing (PODC)}, pages 501--510, 2020.

\bibitem[BCS16]{yehuda16}
R.~Bar{-}Yehuda, K.~Censor{-}Hillel, and G.~Schwartzman.
\newblock A distributed (2+{\(\epsilon\)})-approximation for vertex cover in
  {$O(\log\Delta/\epsilon \log \log \Delta)$} rounds.
\newblock In {\em Proc.\ 35th {ACM} Symp.\ on Principles of Distributed
  Computing (PODC)}, pages 3--8, 2016.

\bibitem[BEKS18]{Ben-BasatEKS18}
R.~Ben{-}Basat, G.~Even, Kawarabayashi K, and G.~Schwartzman.
\newblock A deterministic distributed 2-approximation for weighted vertex cover
  in {$O(\log N \log \Delta /\log^2\log\Delta)$} rounds.
\newblock In {\em Proc.\ 25th Coll.\ on Structural Information and Comm.\
  Complexity (SIROCCO)}, pages 226--236, 2018.

\bibitem[BEKS19]{disc19_optcovering}
R.~Ben{-}Basat, G.~Even, K.~Kawarabayashi, and G.~Schwartzman.
\newblock Optimal distributed covering algorithms.
\newblock In {\em Proc.\ 33rd Symp.\ on Distributed Computing (DISC)}, pages
  5:1--5:15, 2019.

\bibitem[CG21]{strongdiameterdecomp}
Y.{-}J. Chang and M.~Ghaffari.
\newblock Strong-diameter network decomposition.
\newblock {\em CoRR}, abs/2102.09820, 2021.

\bibitem[CH03]{czygrinow03}
A.~Czygrinow and M.~Ha\'{n}\'{c}kowiak.
\newblock Distributed algorithm for better approximation of the maximum
  matching.
\newblock In {\em 9th Annual Int.\ Computing and Combinatorics Conf.\
  (COCOON)}, pages 242--251, 2003.

\bibitem[CHKP17]{censorhillel_disc17}
K.~Censor-Hillel, S.~Khoury, and A.~Paz.
\newblock Quadratic and near-quadratic lower bounds for the {CONGEST} model.
\newblock In {\em Proc.\ 31st Symp.\ on Distr.\ Computing (DISC)}, pages
  10:1--10:16, 2017.

\bibitem[CHS04]{czygrinow2004}
A.~Czygrinow, M.~Ha\'{n}\'{c}kowiak, and E.~Szymanska.
\newblock A fast distributed algorithm for approximating the maximum matching.
\newblock In {\em Proceedings of 12th Annual European Symp.\ on Algorithms
  (ESA)}, pages 252--263, 2004.

\bibitem[Chv79]{chvatal79}
V.~Chv\'atal.
\newblock A greedy heuristic for the set-covering problem.
\newblock {\em Mathematics of Operations Research}, 4(3), 1979.

\bibitem[DH03]{drake03}
D.~E. Drake and S.~Hougardy.
\newblock Improved linear time approximation algorithms for weighted matchings.
\newblock In {\em Proc.\ 6th Int.\ Workshop on Approximation Algorithms for
  Combinatorial Optimization Problems}, pages 14--23, 2003.

\bibitem[Die05]{koenig_diestel}
R.~Diestel.
\newblock {\em Graph Theory}, chapter 2.1.
\newblock Springer, Berlin, 3rd edition, 2005.

\bibitem[DS05]{dinur05}
I.~Dinur and S.~Safra.
\newblock On the hardness of approximating vertex cover.
\newblock {\em Annals of Mathematics}, 162(1):439--485, 2005.

\bibitem[Edm65a]{edmonds65a}
J.~Edmonds.
\newblock Maximum matching and a polyhedron with 0, l-vertices.
\newblock {\em J. of Res. the Nat. Bureau of Standards}, 69~B:125--130, 1965.

\bibitem[Edm65b]{edmonds65b}
J.~Edmonds.
\newblock Paths, trees, and flowers.
\newblock {\em Canadian Journal of Mathematics}, 17(3):449--467, 1965.

\bibitem[Ege31]{egervary31}
J.~Egerv\'{a}ry.
\newblock Matrixok kombinatorius tulajdons\'{a}air\'{o}l [{O}n combinatorial
  properties of matrices].
\newblock {\em Matematikai \'{e}s Fizikai Lapok (in Hungarian)}, 38(16--28),
  1931.

\bibitem[EMR15]{even15}
G.~Even, M.~Medina, and D.~Ron.
\newblock Distributed maximum matching in bounded degree graphs.
\newblock In {\em Proceedings of the 2015 Int.\ Conf.\ on Distributed Computing
  and Networking (ICDCN)}, pages 18:1--18:10, 2015.

\bibitem[Fis17]{rounding}
M.~Fischer.
\newblock Improved deterministic distributed matching via rounding.
\newblock In {\em Proc.\ 31st Symp.\ on Distributed Computing (DISC)}, pages
  17:1--17:15, 2017.

\bibitem[FK20]{opodis20_MVC}
S.\ Faour and F.\ Kuhn.
\newblock Approximating bipartite minimum vertex cover in the {CONGEST} model.
\newblock In {\em Proc.\ 24th Conf.\ on Principles of Distr.\ Systems
  (OPODIS)}, pages 29:1--29:16, 2020.

\bibitem[FKRS19]{FoersterKR019}
K.{-}T. Foerster, J.~H. Korhonen, J.~Rybicki, and S.~Schmid.
\newblock Brief announcement: Does preprocessing help under congestion?
\newblock In {\em Proc.\ 38th {ACM} Symp. on Principles of Distributed
  Computing (PODC)}, pages 259--261, 2019.

\bibitem[FMS15]{FMS15}
U.\ Feige, Y.\ Mansour, and R.\~E. Schapire.
\newblock Learning and inference in the presence of corrupted inputs.
\newblock In {\em Proc.\ 28th Conf.\ on Learning Theory (COLT)}, pages
  637--657, 2015.

\bibitem[FMU21]{streaming_polymatching}
M.\ Fischer, S.\ Mitrovic, and J.\ Uitto.
\newblock Deterministic $(1+\eps)$-approximate maximum matching with
  $\poly(1/\eps)$ passes in the semi-streaming model.
\newblock {\em CoRR}, abs/2106.04179, 2021.

\bibitem[GGR21]{improved_decomp}
M.~Ghaffari, C.~Grunau, and V.~Rozho\v{n}.
\newblock Improved deterministic network decomposition.
\newblock In {\em Proc.\ 33rd {ACM-SIAM} Symp.\ on Discrete Algorithms (SODA)},
  pages 2904--2923, 2021.

\bibitem[GJN20]{GhaffariJN20}
M.~Ghaffari, C.~Jin, and D.~Nilis.
\newblock A massively parallel algorithm for minimum weight vertex cover.
\newblock In {\em Proc.\ 32nd {ACM} Symp.\ on Parallelism in Algorithms and
  Architectures (SPAA)}, pages 259--268, 2020.

\bibitem[GK21]{GhaffariKuhn21}
M.\ Ghaffari and F.\ Kuhn.
\newblock Deterministic distributed vertex coloring: Simpler, faster, and
  without network decomposition.
\newblock In {\em Proc.\ 62nd IEEE Symp.\ on Foundations of Computer Science
  (FOCS)}, 2021.

\bibitem[GKM17]{ghaffari2017complexity}
M.~Ghaffari, F.~Kuhn, and Y.~Maus.
\newblock On the complexity of local distributed graph problems.
\newblock In {\em Proc.\ 39th ACM Symp.\ on Theory of Computing (STOC)}, pages
  784--797, 2017.

\bibitem[GKMU18]{stoc18_edgecoloring}
M.~Ghaffari, F.~Kuhn, Y.~Maus, and J.~Uitto.
\newblock Deterministic distributed edge-coloring with fewer colors.
\newblock In {\em Proc.\ 50th ACM Symp.\ on Theory of Comp.\ (STOC)}, pages
  418--430, 2018.

\bibitem[GKP08]{GrandoniKP08}
F.~Grandoni, J.~K{\"{o}}nemann, and A.~Panconesi.
\newblock Distributed weighted vertex cover via maximal matchings.
\newblock {\em {ACM} Trans.\ Algorithms}, 5(1):6:1--6:12, 2008.

\bibitem[GKPS08]{GrandoniKPS08}
F.~Grandoni, J.~K{\"{o}}nemann, A.~Panconesi, and M.~Sozio.
\newblock A primal-dual bicriteria distributed algorithm for capacitated vertex
  cover.
\newblock {\em {SIAM} J. Comput.}, 38(3):825--840, 2008.

\bibitem[GS14]{goeoes14_DISTCOMP}
M.~G\"o\"os and J.~Suomela.
\newblock No sublogarithmic-time approximation scheme for bipartite vertex
  cover.
\newblock {\em Distributed Computing}, 27(6):435--443, 2014.

\bibitem[Har20]{harris20_matching}
D.~G. Harris.
\newblock Distributed local approximation algorithms for maximum matching in
  graphs and hypergraphs.
\newblock {\em SIAM J.\ Computing}, 49(4):711--746, 2020.

\bibitem[H{\aa}s01]{hastad01}
J.~H{\aa}stad.
\newblock Some optimal inapproximability results.
\newblock {\em J.\ of the ACM}, 48(4):798–--859, 2001.

\bibitem[HK73]{HopcroftK73}
J.~E. Hopcroft and R.~M. Karp.
\newblock An n\({}^{\mbox{5/2}}\) algorithm for maximum matchings in bipartite
  graphs.
\newblock {\em {SIAM} J. Comput.}, 2(4):225--231, 1973.

\bibitem[HKL06]{hoepman06}
J.H. Hoepman, S.~Kutten, and Z.~Lotker.
\newblock Efficient distributed weighted matchings on trees.
\newblock In {\em Proc\ 13th Coll.\ on Structural Inf.\ and Comm.\ Complexity
  (SIROCCO)}, pages 115--129, 2006.

\bibitem[Hoc97]{hochbaum97}
D.~Hochbaum, editor.
\newblock {\em Approximation Algorithms for {NP}-hard Problems}.
\newblock PWS Publishing Company, 1997.

\bibitem[HV06]{HougardyV06}
S.\ Hougardy and D.\ E.~Drake Vinkemeier.
\newblock Approximating weighted matchings in parallel.
\newblock {\em Inf. Process. Lett.}, 99(3):119--123, 2006.

\bibitem[II86]{itai86}
A.~Israeli and A.~Itai.
\newblock A fast and simple randomized parallel algorithm for maximal matching.
\newblock {\em Inf. Process. Lett.}, 22(2):77--80, 1986.

\bibitem[Kar09]{Karakostas09}
G.~Karakostas.
\newblock A better approximation ratio for the vertex cover problem.
\newblock {\em {ACM} Trans. Algorithms}, 5(4):41:1--41:8, 2009.

\bibitem[K\H31]{koenig31}
D.~K\H{o}nig.
\newblock Gr\'{a}fok \'{e}s m\'{a}trixok.
\newblock {\em Matematikai \'{e}s Fizikai Lapok}, 38:116--119, 1931.

\bibitem[KMW04]{lowerbound}
F.~Kuhn, T.~Moscibroda, and R.~Wattenhofer.
\newblock What cannot be computed locally!
\newblock In {\em Proc.\ 23rd ACM Symp.\ on Principles of Distributed Computing
  (PODC)}, pages 300--309, 2004.

\bibitem[KMW06]{nearsighted}
F.~Kuhn, T.~Moscibroda, and R.~Wattenhofer.
\newblock The price of being near-sighted.
\newblock In {\em Proceedings of 17th Symp.\ on Discrete Algorithms (SODA)},
  pages 980--989, 2006.

\bibitem[KR08]{MVC_UGChard}
S.\ Khot and O.\ Regev.
\newblock Vertex cover might be hard to approximate to within 2-epsilon.
\newblock {\em J. Comput. Syst. Sci.}, 74(3):335--349, 2008.

\bibitem[KVY94]{KhullerVY94}
S.~Khuller, U.~Vishkin, and N.~E. Young.
\newblock A primal-dual parallel approximation technique applied to weighted
  set and vertex covers.
\newblock {\em J. Algorithms}, 17(2):280--289, 1994.

\bibitem[KY11]{KoufogiannakisY11}
C.~Koufogiannakis and N.~E. Young.
\newblock Distributed algorithms for covering, packing and maximum weighted
  matching.
\newblock {\em Distributed Computing}, 24(1):45--63, 2011.

\bibitem[LPP15]{lotker15}
Z.~Lotker, B.~Patt{-}Shamir, and S.~Pettie.
\newblock Improved distributed approximate matching.
\newblock {\em J.\ ACM}, 62(5):38:1--38:17, 2015.

\bibitem[LPR09]{lotker09}
Z.~Lotker, B.~Patt{-}Shamir, and A.~Ros{\'{e}}n.
\newblock Distributed approximate matching.
\newblock {\em {SIAM} Journal on Computing}, 39(2):445--460, 2009.

\bibitem[LS93]{linial93}
N.~Linial and M.~Saks.
\newblock Low diameter graph decompositions.
\newblock {\em Combinatorica}, 13(4):441--454, 1993.

\bibitem[MPX13]{MPX13}
G.~L. Miller, R.~Peng, and S.~C. Xu.
\newblock Parallel graph decompositions using random shifts.
\newblock In {\em Proc.\ 25th {ACM} Symp.\ on Parallelism in Alg.\ and Arch.\
  (SPAA)}, pages 196--203, 2013.

\bibitem[Nie08]{nieberg08}
T.~Nieberg.
\newblock Local, distributed weighted matching on general and wireless
  topologies.
\newblock In {\em Proc.\ Joint Workshop on Foundations of Mobile Computing
  (DIALM-POMC)}, pages 87--92, 2008.

\bibitem[Pel00]{peleg00}
D.~Peleg.
\newblock {\em Distributed Computing: A Locality-Sensitive Approach}.
\newblock SIAM, 2000.

\bibitem[RG20]{polylogdecomp}
V.~Rozho\v{n} and M.~Ghaffari.
\newblock Polylogarithmic-time deterministic network decomposition and
  distributed derandomization.
\newblock In {\em Proc.\ 52nd {ACM} Symp.\ on Theory of Computing (STOC)},
  pages 350--363, 2020.

\bibitem[SS13]{SchmidS13}
S.~Schmid and J.~Suomela.
\newblock Exploiting locality in distributed {SDN} control.
\newblock In {\em Proc.\ 2nd {ACM} {SIGCOMM} Works. on Hot Topics in Software
  Defined Netw. (HotSDN)}, pages 121--126, 2013.

\bibitem[WW04]{wattenhofer04}
M.~Wattenhofer and R.~Wattenhofer.
\newblock Distributed weighted matching.
\newblock In {\em Proc.\ 18th Conf.\ on Distributed Computing (DISC)}, pages
  335--348, 2004.

\end{thebibliography}

\appendix

\section{Basic Tools}
\label{sec:tools}

\subsection{Chernoff Bound}
We use the following Chernoff bound.
\begin{theorem}\label{thm:chernoff_bound}
  Let $X_1, \hdots, X_k$ be independent random variables taking values in $[0, Q]$ for some $Q>0$. Let $X := \sum_{i=1}^k X_i$ and let $\mu$ be such that $\E[X]\geq \mu$. Then,
  for any $\delta \in [0, 1]$, it holds $\Pr\left(X \leq (1-\delta)\mu \right) \leq e^{-\frac{\delta^2 \mu}{2Q}}$.
\end{theorem}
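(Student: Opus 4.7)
The plan is to reduce the claim to the classical lower-tail Chernoff bound for variables in $[0,1]$ and then run the standard moment-generating-function argument. Setting $Y_i := X_i/Q \in [0,1]$ and $Y := X/Q = \sum_{i=1}^k Y_i$, we have $\E[Y] \geq \mu/Q$; writing $\mu' := \mu/Q$, the desired inequality becomes $\Pr(Y \leq (1-\delta)\mu') \leq \exp(-\delta^2 \mu'/2)$, which is the form in which this version of Chernoff's bound is typically stated and which yields the claim upon unscaling.

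For $t > 0$, I would apply Markov's inequality to $e^{-tY}$ to obtain $\Pr(Y \leq (1-\delta)\mu') \leq e^{t(1-\delta)\mu'}\,\E[e^{-tY}]$. Independence gives $\E[e^{-tY}] = \prod_i \E[e^{-tY_i}]$. For each $Y_i \in [0,1]$, convexity of $y \mapsto e^{-ty}$ on $[0,1]$ yields the pointwise bound $e^{-ty} \leq 1 - (1-e^{-t})\,y$; taking expectations and using $1+x \leq e^x$ then gives $\E[e^{-tY_i}] \leq \exp(-(1-e^{-t})\E[Y_i])$. Multiplying over $i$ and using $\E[Y] \geq \mu'$ together with $1 - e^{-t} > 0$ yields $\E[e^{-tY}] \leq \exp(-\mu'(1-e^{-t}))$. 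Substituting back and optimizing the exponent $\mu'\bigl(t(1-\delta) - (1-e^{-t})\bigr)$ over $t > 0$ at the critical point $t^* = -\ln(1-\delta)$ collapses the bound to $\exp\bigl(-\mu'\bigl(\delta + (1-\delta)\ln(1-\delta)\bigr)\bigr)$.

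The only remaining ingredient, and the only step requiring any real calculation, is the standard inequality $\delta + (1-\delta)\ln(1-\delta) \geq \delta^2/2$ for $\delta \in [0,1)$. This follows by letting $g(\delta) := \delta + (1-\delta)\ln(1-\delta) - \delta^2/2$ and noting $g(0)=0$ together with $g'(\delta) = -\ln(1-\delta) - \delta \geq 0$, since $-\ln(1-x) \geq x$ on $[0,1)$. The boundary case $\delta = 1$ is handled by taking $t \to \infty$ in the MGF bound (giving $\exp(-\mu') \leq \exp(-\mu'/2)$). I do not expect any genuine obstacle: the result is a classical Chernoff-type lower-tail bound, and the mild care needed to handle $\E[X] \geq \mu$ rather than equality is absorbed by the monotonicity of $\mu' \mapsto \exp(-\mu'(1-e^{-t}))$ for $t>0$, which is what lets us replace $\E[Y]$ by its lower bound $\mu'$ in the exponent.
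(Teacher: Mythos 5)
Your proof is correct and follows essentially the same route the paper sketches: rescale from $[0,Q]$ to $[0,1]$, then use the convexity bound $e^{-ty}\leq 1-(1-e^{-t})y$ to reduce to the standard moment-generating-function computation for the lower tail. The only difference is that you carry out the MGF optimization and the inequality $\delta+(1-\delta)\ln(1-\delta)\geq \delta^2/2$ explicitly, whereas the paper invokes the standard $\{0,1\}$-valued Chernoff bound as known and only remarks on the convexity and scaling steps.
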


Note that in the standard form, the above Chernoff bound is stated for independent random variables in $\set{0,1}$. A standard convexity argument allows to obtain the same bound for independent random variables in the interval $[0,1]$. The generalization to random variables from the interval $[0,Q]$ is then obtained by scaling.

\subsection{Low-Diameter Clustering}
\label{sec:clustering}

As discussed in \Cref{sec:diameter_reduction}, we need an efficient
deterministic clustering algorithm so that afterwards, we can
concentrate on approximating MWVC and MWM in graphs of polylogarithmic
diameter. The following theorem is a slight generalization of Lemma 13
in \cite{opodis20_MVC}, which by itself is a relatively
straightforward adaptation of one phase of the network decomposition
algorithm of Rozho\v{n} and Ghaffari~\cite{polylogdecomp}.

\begin{theorem}\label{thm:det_clustering}
  Let $G=(V,E,w)$ be a weighted $n$-node graph with node and edge
  weights $w(v),w(e)\in [1,\poly(n)]$, let $h$ be a positive
  integer. For every $\eta\in (0,1]$, there is a deterministic
  \CONGEST algorithm to compute a $(1-\eta)$-dense, $h$-hop separated,
  and
  $\big(O(\log n), O\big(\frac{h\cdot \log^3
    n}{\eta}\big)\big)$-routable clustering of $G$ in
  $\poly\big(\frac{h\cdot \log n}{\eta}\big)$ rounds.
\end{theorem}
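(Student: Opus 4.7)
The plan is to adapt the deterministic low-diameter clustering algorithm of Rozho\v{n} and Ghaffari~\cite{polylogdecomp}, in the form used for Lemma 13 of~\cite{opodis20_MVC}, so that it simultaneously achieves (i) $h$-hop separation between clusters and (ii) a density guarantee with respect to both node and edge weights of the induced subgraphs $G[S_i]$. As in the base algorithm, each node is given a distinct $O(\log n)$-bit label, and the algorithm proceeds in $B=O(\log n)$ phases, one per bit. At the start of phase $i$ we maintain a collection of live clusters, each equipped with a spanning tree; during phase $i$ we run $\Theta(\log(B/\eta))$ sub-phases in which clusters whose $i$-th bit is $1$ try to absorb, along short paths, nearby live clusters whose $i$-th bit is $0$ but whose higher-order bits agree. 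Any cluster not absorbed by the end of the phase is killed and excluded from the final clustering.

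To obtain $h$-hop separation I would run every absorption step at distance scale $h$: a cluster may reach out to live clusters at $G$-distance at most $h\cdot r$ rather than $r$. Equivalently, one can run the base algorithm on $G^h$ and simulate each virtual round in $h$ rounds of \CONGEST on $G$. This multiplies the eventual cluster diameter and round complexity by a factor of $h$ without changing the combinatorial analysis of density or congestion, and it guarantees that any two distinct final clusters are at $G$-distance at least $h$.

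For the weighted density guarantee, I assign each node $v$ the effective weight $\hat{w}(v) \coloneq w(v) + \sum_{e\in E(v)} w(e)$, so that $\sum_v \hat{w}(v) = w(V) + 2\cdot w(E)$. A standard Rozho\v{n}--Ghaffari-style argument shows that with $\Theta(\log(B/\eta))$ sub-phases of absorption radius $\Theta(\log n)$ per phase, the total $\hat{w}$-weight killed over all $B$ phases can be made at most $\tfrac{\eta}{2}\cdot\sum_v\hat{w}(v)\leq\eta(w(V)+w(E))$. For $h\geq 2$ the separation property guarantees that no edge has its two endpoints in distinct surviving clusters, so any node or edge not contained in any $G[S_i]$ is incident to a killed node and its weight can be charged to that endpoint's $\hat{w}$-contribution; the total weight lost is therefore at most the total killed $\hat{w}$-weight, yielding $\sum_i w(G[S_i]) \geq (1-\eta)(w(V)+w(E))$. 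The case $h=1$ is handled by first subdividing each edge $e$ into a virtual degree-$2$ node of weight $w(e)$ and applying the algorithm to the resulting graph.

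Routability is maintained by letting each live cluster grow its tree $T_i$ as the union of the absorption paths used to merge smaller clusters into it. After $B$ phases with $O(\log(B/\eta))$ sub-phases of radius $O(h\log n)$ each, the depth of every $T_i$ is $O(h\log^3 n\cdot\log(1/\eta))$; since $\log(1/\eta)\leq 1/\eta$ for $\eta\in(0,1]$, this is at most $O(h\log^3 n/\eta)$. A given $G$-edge is added to at most one tree per phase (the absorbing and absorbed clusters are merged into one), so every edge lies in at most $B = O(\log n)$ trees, giving the required $(O(\log n),\,O(h\log^3 n/\eta))$-routability. The principal technical obstacle I anticipate is controlling the congestion under the $h$-hop simulation: an absorption ``edge'' at scale $h$ uses up to $h$ real $G$-edges, and two clusters that simultaneously wish to grow across the same $G$-edge must deterministically break ties. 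This is resolved by a label-based priority rule inside the pipelined BFS implementation of Rozho\v{n}--Ghaffari; together these yield the claimed $\poly(h\log n/\eta)$ round complexity.
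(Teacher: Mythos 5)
Your overall strategy is the right one and is, in substance, the same machinery the paper relies on: the paper's proof does not re-derive anything, it simply invokes Lemma 13 of \cite{opodis20_MVC} (which already handles polynomially bounded weights and converts edge weights to node weights, exactly as your $\hat{w}$ does) and obtains $h$-hop separation by applying Theorem 2.12 of \cite{polylogdecomp} to the power graph $G^{\max\{1,h-1\}}$; your use of $G^{h}$ and your subdivision trick for $h=1$ are harmless variants of this. The problem is that, having chosen to re-prove the construction, your central density step does not hold up as stated. You claim that ``$\Theta(\log(B/\eta))$ sub-phases of absorption radius $\Theta(\log n)$ per phase'' suffice to bound the killed $\hat{w}$-weight by $\tfrac{\eta}{2}\sum_v\hat{w}(v)$, citing ``a standard Rozho\v{n}--Ghaffari-style argument.'' In that argument the quantity that must be large is the number of radius-growth steps per phase: a cluster that keeps growing multiplies its weight by $(1+\delta)$ per step, where $\delta$ is the per-stop loss fraction, so to lose only a $\delta=\Theta(\eta/B)$ fraction per phase one needs $\Theta(\log(W_{\mathrm{tot}})/\delta)=\Theta(B\log n/\eta)=\Theta(\log^2 n/\eta)$ steps per phase, each growing the radius by $O(h)$ under the power-graph simulation. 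This is exactly where the $O\big(\frac{h\log^3 n}{\eta}\big)$ diameter in the theorem comes from. Your parameters give far fewer growth steps and would yield cluster diameter $O(h\log^2 n\cdot\log(\log n/\eta))$, strictly better than the theorem and than the constructions of \cite{polylogdecomp,opodis20_MVC}; that should be a warning sign that the deletion bound is not actually established by your sketch.

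A second, smaller gap is the congestion accounting. Under the $h$-hop simulation, when a cluster absorbs a node via a $G^{h}$-edge the corresponding $G$-path of length up to $h$ may run through vertices and edges that do not belong to that cluster, so within a single phase the Steiner trees of distinct live clusters can share $G$-edges; ``a given $G$-edge is added to at most one tree per phase'' is therefore not immediate, and the $O(\log n)$ congestion (together with tie-breaking for simultaneous growth across the same edge) is precisely the content of Theorem 2.12 of \cite{polylogdecomp} that the paper cites rather than re-proves. Either cite that statement for both the congestion and the per-phase loss bound, or redo the growth-step accounting with $\Theta(\log^2 n/\eta)$ steps per phase; with one of these fixes your argument matches the paper's.
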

\begin{proof} 
  Lemma 13 in \cite{opodis20_MVC} is only stated for edge weights
  $\in\set{0,1}$ and no node weights. However, the proof of Lemma 13 
  explicitly states that the lemma also holds for weights that are
  polynomially bounded in $n$. Further, in the proof of Lemma 13, in
  the first step, the edge weights are transformed into node
  weights. The statement of the lemma therefore directly also applies to node
  weights. Finally, in Lemma 13, the parameter $h$ is set to
  $h=3$. The proof guarantees $3$-hop separation by applying Theorem
  2.12 of \cite{polylogdecomp} to $G^k$ for $k=2$. If the theorem is
  applied to $G^k$ for $k=\max\set{1,h-1}$, Theorem 2.12 of
  \cite{polylogdecomp} directly implies that the computed clusters are
  $(k+1)$-hop separated.  
\end{proof}

\noindent\textbf{Remarks:} In the above proof, we do not optimize the
$\log n$-factors and the $1/\eta$-factors in the time complexity. We
note that the round complexity of the algorithm of
\cite{polylogdecomp} has been slightly improved in
\cite{improved_decomp}. Further, in \cite{strongdiameterdecomp}, Chang
and Ghaffari give a \CONGEST algorithm to compute a strong diameter
network deceomposition. We think that it should be possible to adapt
their algorithm to obtain a deterministic $\poly\log n$-time \CONGEST
algorithm to compute an $h$-hop separated $(1-\eta)$-dense clustering
into clusters of strong diameter
$O\big(\frac{h\cdot \log^2 n}{\eta}\big)$. We further note that by
using a variant of the randomized clustering algorithm of
\cite{MPX13}, one gets a randomized
$O\big(\frac{h\cdot \log n}{\eta}\big)$-round \CONGEST algorithm to
compute a clustering into clusters of diameter
$O\big(\frac{h\cdot \log n}{\eta}\big)$ such that the clustering is
$(1-\eta)$-dense in expectation.

\subsection{Fractional Approximation Algorithm}
\label{sec:fractionalapprox}

In all our upper bounds, we need an efficient deterministic
distributed approximation scheme for the fractional variants of the
MWVC and the MWM problem. In \cite{ahmadi18}, Ahmadi, Kuhn, and Oshman
showed that for every instance of the MWM problem with weights in the
range $[1,W]$ and for every $\eps\in (0,1]$, it is possible to
deterministically compute a $(1-\eps)$-approximate fractional solution
in time $O(\log(\Delta W)/\eps^2)$ in the \CONGEST model. In our
algorithm to solve the MWVC problem, we need a variant of this
algorithm, which works for the (unweighted) fractional $w$-matching
problem. The following theorem shows that this can be done with the
same asymptotic cost that we have for the fractional MWM problem.

\begin{theorem}\label{thm:fractional_wmatching}
  Let $G=(V,E,w)$ be an undirected $n$-node graph with integer node weights
  $w(v)\in \set{1,\dots,W}$. Then, for every $\eps\in(0,1]$, there is a deterministic
  $O(\log(\Delta W)/\eps^2)$-round \CONGEST algorithm to compute a
  $(1-\eps)$-approximate solution to the fractional $w$-matching
  problem in $G$ and a $(1+\eps)$-approximate solution to the minimum
  fractional weighted vertex cover problem.
\end{theorem}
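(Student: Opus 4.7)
The plan is to adapt the deterministic MWU-based primal-dual algorithm of \cite{ahmadi18} --- which handles the fractional MWM / fractional vertex cover LP pair with edge weights and unit node capacities --- to the fractional $w$-matching / fractional weighted vertex cover pair, which has unit edge rewards and node capacities $w(v)\in\set{1,\dots,W}$. The two LP pairs share the same constraint matrix (the vertex-edge incidence matrix) and hence the same width of $2$; they differ only in which side of the LPs the weights appear. Since the multiplicative weights framework is sensitive only to the width, the maximum degree $\Delta$, and the spread $W$ between the largest and smallest scales in the LP data, the same $O(\log(\Delta W)/\eps^2)$ round complexity should transfer. A cleaner-looking black-box reduction via node-duplication (replacing $v$ by $w(v)$ unit-capacity copies) would blow the number of parallel edges between neighbors up by a factor $W^2$ and cannot be simulated in the \CONGEST model at the desired cost, so the direct adaptation seems unavoidable.

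Concretely, each node $v$ would maintain a multiplicative dual variable $x_v$, initialized to a small value $\Theta(1/(\Delta W))$, and each edge $e$ a primal variable $y_e$ initialized to $0$. In every MWU step, each edge $e=\set{u,v}$ reads $x_u+x_v$ from its endpoints and, while this sum is below $1$, grows $y_e$ by an additive step chosen so that the primal objective accumulates in proportion to the dual; in parallel, each node $v$ computes its normalized load $\ell_v := w(v)^{-1}\sum_{e\in E(v)} y_e$ and updates $x_v \leftarrow x_v\cdot\exp(\eta\,\ell_v)$ with step size $\eta=\Theta(\eps)$. Each MWU step takes $O(1)$ \CONGEST rounds, since all updates use only one-hop information and every message carries an $O(\log n)$-bit value. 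After $T=O(\log(\Delta W)/\eps^2)$ iterations, averaging the primal iterates over time and scaling them down by $(1+\eps)$ gives a feasible fractional $w$-matching of value at least $(1-\eps)\,\OPT$, while a standard thresholding of the $x_v$'s yields a fractional weighted vertex cover of value at most $(1+\eps)\,\OPT$.

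Correctness would follow from the usual exponential-potential argument: define $\Phi := \sum_v w(v)\exp(\eta\,\ell_v)$ and show that the per-step growth of $\Phi$ can be charged to a proportional per-step gain in the primal objective, so that complementary slackness between the time-averaged primal and the thresholded dual holds up to a $(1\pm\eps)$-factor. The main obstacle I anticipate is the calibration: the roles of ``weight'' and ``capacity'' are swapped compared to \cite{ahmadi18}, so the potential must be normalized by $w(v)$ rather than by edge weights, and one has to verify that the rescaled load $\ell_v$ remains a valid MWU loss. While the combinatorial width of the LP remains $2$, the normalized coefficients $1/w(v)$ of the dual covering constraints span the range $[1/W,1]$, which is precisely what forces a $\log(\Delta W)$ factor into the iteration count and which requires re-checking every constant in the analysis of \cite{ahmadi18} to ensure no additional $\Delta$- or $W$-factor sneaks in.
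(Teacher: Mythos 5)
Your route is genuinely different from the paper's, and your reason for choosing it rests on a misconception. The paper does \emph{not} adapt the internals of the algorithm of \cite{ahmadi18}; it uses exactly the node-duplication reduction you dismiss: each $v$ is replaced by $w(v)$ unit-capacity copies, each edge $\set{u,v}$ by a complete bipartite graph on the copies, and the \emph{unweighted} fractional matching algorithm of \cite{ahmadi18} is run on the resulting virtual graph $G'$, whose maximum degree is at most $\Delta W$, giving the $O(\log(\Delta W)/\eps^2)$ bound directly. Your objection that the $w(u)\cdot w(v)$ parallel virtual edges between two real neighbors cannot be simulated in \CONGEST at the desired cost is where the argument goes wrong: because the algorithm of \cite{ahmadi18} is deterministic and its behavior does not depend on node IDs, all copies of a node are symmetric and execute identically, so every virtual edge between copies of $u$ and copies of $v$ carries the \emph{same} $O(\log n)$-bit message in every round. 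Each real node simulates all of its copies locally, one real message per real edge per round suffices, and the simulation is free. The fractional matching of $G'$ is then converted back by summing the values over all copy-pairs of each real edge, and a matching of $G$ conversely spreads uniformly over the copies, so optimal values coincide and the $(1-\eps)$ guarantee transfers.

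Your alternative --- reworking the multiplicative-weights analysis so that node capacities $w(v)$ play the role that edge weights play in \cite{ahmadi18} --- is plausible, and the paper itself remarks that the theorem ``could be proven for arbitrary weights in the range $[1,W]$ by adapting the algorithm of \cite{ahmadi18}.'' But as written your proposal does not carry this adaptation out: the feasibility of the time-averaged primal after scaling, the validity of the rescaled loads $\ell_v$ as MWU losses, the thresholding step that produces the $(1+\eps)$-approximate fractional vertex cover, and the claim that the iteration count stays at $O(\log(\Delta W)/\eps^2)$ are all asserted rather than proved, and you explicitly defer ``re-checking every constant.'' So the proposal is a credible plan for a harder, white-box proof of a statement that the paper obtains by a short black-box reduction whose only subtlety --- the one your proposal rules out --- is the symmetry argument that makes simulating the blown-up graph cost nothing.
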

\begin{proof}
  The theorem could be proven for arbitrary weights in the range
  $[1,W]$ by adapting the algorithm of \cite{ahmadi18}. We here give a
  generic reduction, which works for integer weights and which allows
  to use the result of \cite{ahmadi18} (almost) in a blackbox manner.

  We define an unweighted graph $G'=(V',E')$ as follows. For every
  node $v\in V$, $V'$ contains $w(v)$ nodes
  $(v,1),\dots,(v,w(v))$. Further, for every edge $\set{u,v}\in E$, we
  add a complete bipartite graph between the corresponding nodes to
  $G'$, that is, $E'$ contains all edges $\set{(u,i),(v,j)}$ for
  $i\in\set{1,\dots,w(v)}$ and $j\in\set{1,\dots,w(u)}$. We then apply
  the unweighted fractional matching algorithm of \cite{ahmadi18} to
  compute a $(1-\eps)$-approximate fractional matching of $G'$. The
  maximum degree of any node in $G'$ is at most $\Delta\cdot W$ and
  when running the algorithm in the \CONGEST model on $G'$, the round
  complexity of the algorithm is therefore $O(\log(\Delta W)/\eps^2)$
  as claimed (cf.\ Theorem 2 in \cite{ahmadi18}).

  For an edge $e\in E'$ of $G'$, assume that $z_{e}$ is the fractional
  matching value of $e$ in the computed fractional matching of
  $G'$. We can transform the fractional matching of $G'$ into a
  fractional $w$-matching of $G$ as follows. For each edge
  $\set{u,v}\in E$ of $G$, we define
  $y_{\set{u,v}} :=
  \sum_{i=1}^{w(u)}\sum_{j=1}^{w(v)}z_{\set{(u,i),(v,j)}}$. Note that
  we obtain a valid fractional $w$-matching because for every node
  $u\in V$ of $G$, the sum of the fractional values of its edges is at
  most equal to number of copies of $u$ in $G'$, which is equal to
  $w(u)$. In the other direction, given a fractional $w$-matching
  $y_e$ of $G$, we can compute a fractional matching $z_{e'}$ of $G'$
  of the same size in the following way. For each edge
  $\set{u,v}\in E$ of $G$, we assign
  $z_{\set{(u,i),(v,j)}}:= y_{\set{u,v}}/(w(u)\cdot w(v))$. The size
  of a maximum fractional $w$-matching on $G$ is therefore equal to
  the size of a maximum fractional matching on $G'$ and given a
  $(1-\eps)$-approximation of maximum fractional matching on $G'$, we
  therefore also obtain a $(1-\eps)$-approximation of maximum
  fractional $w$-matching on $G$.

  It remains to show that we can efficiently run the fractional
  matching algorithm of \cite{ahmadi18} in the \CONGEST model on
  $G$. Since every node of $G$ has potentially a large number of
  copies in $G'$, it is not true that any \CONGEST algorithm on $G'$
  can be run efficiently in the \CONGEST model on $G$. However, the
  behavior of the algorithm of \cite{ahmadi18} is independent of the
  node IDs and since the algorithm is deterministic, all copies of a
  node $u\in V$ are symmetric and therefore behave in exactly the same
  way. Each node of $G$ can therefore simulate all its copies in $G'$
  at no additional cost.

  We note that the same reduction has also been used in
  \cite{GrandoniKP08}, where a maximal matching of $G'$ is used to
  compute a $2$-approximate weighted vertex cover of $G$.
\end{proof}

\subsection{Counting Paths in Layered Graphs}
\label{sec:pathcounting}

As part of our bipartite minimum vertex cover algorithm, we need to
count the number of shortest augmenting paths (w.r.t.\ to a given
fractional $w$-matching) that pass through each node and edge of a
bipartite graph. For this, we slightly generalize an algorithm that
has been introduced in \cite{lotker15} and refined in
\cite{yehuda17}. The following lemma provides a generic version of
this algorithm, which works for general layered bipartite graphs.

\begin{lemma}\label{lemma:pathcounting}
  Let $H=(V,E)$ be a bipartite $n$-node graph, where the nodes are partitioned into $L$
  layers $V_1,\dots,V_L$ such that $E$ only contains edges between
  adjacent layers $V_i$ and $V_{i+1}$ for $i\in\set{1,\dots,L-1}$. A
  top-down path $P$ is a path of length $L-1$ that consists of exactly
  $1$ node from each layer $V_i$. If the partition into layers is
  known, there is a deterministic $O(L^2)$-round \CONGEST algorithm that for every
  $v\in V$ and every $e\in E$ computes the number of top-down paths
  passing through $v$ and $e$.
\end{lemma}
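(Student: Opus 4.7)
\medskip

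\textbf{Proof proposal for \Cref{lemma:pathcounting}.} The plan is to reduce path counting to two separate one-sided counts that we propagate layer by layer. For every node $v\in V_i$, define $a(v)$ to be the number of top-down paths from some node in $V_1$ that end at $v$ (i.e.\ the number of paths using one node from each of $V_1,\dots,V_i$ and ending at $v$), and define $b(v)$ to be the number of top-down paths starting at $v$ and ending at some node in $V_L$. Then the number of top-down paths through a node $v$ is exactly $a(v)\cdot b(v)$, and for an edge $e=\{u,w\}$ with $u\in V_i$, $w\in V_{i+1}$ it is exactly $a(u)\cdot b(w)$. So once both vectors are available at the endpoints, two local multiplications suffice.

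To compute $\vec a$ I would use forward dynamic programming: initialise $a(v):=1$ for every $v\in V_1$ and, for $i=2,3,\dots,L$, have every node $v\in V_i$ set $a(v):=\sum_{u\in N(v)\cap V_{i-1}} a(u)$ after receiving the $a$-values of its predecessors. The vector $\vec b$ is computed symmetrically in $L$ backward phases, starting with $b(v):=1$ for $v\in V_L$. Once both are known, each node $v$ outputs $a(v)\cdot b(v)$, and for each edge $e=\{u,w\}$ the endpoint in $V_{i+1}$ sends $b(w)$ to $u$ (or vice versa), after which $u$ outputs $a(u)\cdot b(w)$. The layer indices are known by assumption, so the schedule is well-defined.

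The main obstacle is purely about bit complexity in \CONGEST: the values $a(v)$ and $b(v)$ can be as large as $n^{O(L)}$, so a single value occupies $O(L\log n)$ bits and cannot be sent in one round. I would handle this by pipelining: within each of the $L-1$ forward phases, the $O(L\log n)$-bit value at each node of layer $V_{i-1}$ is transmitted across each incident edge to layer $V_i$ in $O(L)$ rounds (different neighbours communicate on different edges in parallel), after which the receivers perform the local summation for free. This yields $O(L)$ rounds per phase and $O(L^2)$ rounds for the whole forward sweep; the backward sweep and the final endpoint exchange for edges add another $O(L^2)$ and $O(L)$ rounds respectively, giving the claimed $O(L^2)$ total.

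Correctness of the recurrences is immediate from the fact that the layered structure forces any top-down path through $v\in V_i$ to decompose uniquely into a $V_1$-to-$v$ prefix (counted by $a(v)$) and a $v$-to-$V_L$ suffix (counted by $b(v)$), and analogously for edges between consecutive layers; so the multiplicative formulas produce the exact counts. Summation is carried out in ordinary integer arithmetic, which is free at the nodes in the \CONGEST model, and each intermediate value stays bounded by the total number of top-down paths, hence by $n^{L}$, so the $O(L\log n)$-bit envelope is maintained throughout.
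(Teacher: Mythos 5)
Your proposal is correct and follows essentially the same route as the paper's proof: the same forward/backward layer-by-layer counts (the paper's $\alpha(v),\beta(v)$), the same product formulas $\alpha(v)\beta(v)$ for nodes and $\alpha(u)\beta(w)$ for edges between consecutive layers, and the same handling of the $O(L\log n)$-bit values, which the paper also resolves by simulating each of the $L$ large-message rounds in $O(L)$ \CONGEST rounds to get the $O(L^2)$ bound.
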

\begin{proof} 
  For every $i\in \set{1,\dots,L}$ and every $v\in V_i$, we first
  compute the number of paths $\alpha(v)$ of length $i-1$ that start at a node in
  $V_1$ and pass through the layers $V_1,\dots,V_i$. The number can be
  computed inductively as follows. For each $v\in V_1$, we have
  $\alpha(v)=1$ and for each $v\in V_i$ for $i>1$, we have
  $\alpha(v)=\sum_{u\in N(v)\cap V_{i-1}}\alpha(u)$. Similarly, for
  every $i\in \set{1,\dots,L}$ and every $v\in V_i$, we can also compute
  the number of paths $\beta(v)$ of length $L-i$ that start at a node
  in $V_L$ and pass through the layers $V_L,V_{L-1},\dots,V_i$. Note
  that we have $\alpha(v),\beta(v)\leq n^L$ for all nodes $v$ and we
  can therefore compute all values $\alpha(v)$ and $\beta(v)$ by a
  distributed algorithm in $L$ rounds with messages of size $O(L\log
  n)$. The values can therefore be computed in the \CONGEST model in
  time $O(L^2)$. The number of top-down path passing through a node
  $v$ can now be computed as $\alpha(v)\cdot \beta(v)$ and for an edge
  $\set{u,v}\in E$ with $u\in V_i$ and $v\in V_{i+1}$ for some $i\in
  \set{1,\dots,L-1}$, the number of top-down paths passing through
  edge $\set{u,v}$ is equal to $\alpha(u)\cdot\beta(v)$.
\end{proof}


\end{document}